\newcommand{\trace}{{\mbox{{\rm {Tr}}}}}
\newcommand{\rank}{{\mbox{\rm{Rank}}}}
\newcommand{\bfalpha}{{\mbox{\boldmath $\alpha$}}}
\newcommand{\bfxi}{{\mbox{\boldmath $\xi$}}}
\newcommand{\bfnu}{{\mbox{\boldmath $\nu$}}}
\newcommand{\bfSigma}{{\mbox{\boldmath $\Sigma$}}}
\newcommand{\bfDelta}{{\mbox{\boldmath $\Delta$}}}
\newcommand{\st}{{\rm s.t.}}
\newcommand{\Prob}{{\rm Prob}}
\newcommand{\bA}{\mathbf{A}}
\newcommand{\bx}{\mathbf{x}}
\newcommand{\bE}{\mathbf{E}}
\newcommand{\bC}{\mathbf{C}}
\newcommand{\bD}{\mathbf{D}}
\newcommand{\bR}{\mathbf{R}}
\newcommand{\bI}{\mathbf{I}}
\newcommand{\bU}{\mathbf{U}}
\newcommand{\bX}{\mathbf{X}}
\newcommand{\bY}{\mathbf{Y}}
\newcommand{\bF}{\mathbf{F}}
\newcommand{\bw}{\mathbf{w}}
\newcommand{\bB}{\mathbf{B}}
\newcommand{\bc}{\mathbf{c}}
\newcommand{\bh}{\mathbf{h}}
\newcommand{\be}{\mathbf{e}}
\newcommand{\ba}{\mathbf{a}}
\newcommand{\bS}{\mathbf{S}}
\newcommand{\br}{\mathbf{r}}
\newcommand{\bg}{\mathbf{g}}
\newcommand{\bG}{\mathbf{G}}
\newcommand{\bhE}{\widehat{\mathbf{E}}}
\newcommand{\btw}{\widetilde{\mathbf{w}}}
\newcommand{\tw}{\widetilde{{w}}}
\newcommand{\bhw}{\widehat{\mathbf{w}}}
\newcommand{\cI}{\mathcal{I}}
\newcommand{\cM}{\mathcal{M}}
\newcommand{\cS}{\mathcal{S}}
\newcommand{\ctS}{\widetilde{\mathcal{S}}}
\newcommand{\cbS}{\bar{\mathcal{S}}}
\newcommand{\cT}{\mathcal{T}}
\newcommand{\btR}{\widetilde{\mathbf{R}}}
\newcommand{\btS}{\widetilde{\mathbf{S}}}
\newcommand{\btF}{\widetilde{\mathbf{F}}}
\newcommand{\btB}{\widetilde{\mathbf{B}}}
\newcommand{\bhR}{\widehat{\mathbf{R}}}
\newtheorem{prop}{Proposition}
\newtheorem{thm}{Theorem}
\newtheorem{rmk}{Claim}
\begin{document}
\title{Joint User Grouping and Linear Virtual Beamforming: Complexity, Algorithms
and Approximation Bounds\thanks{M. Hong, M. Razaviyayn and Z.-Q.
Luo's research is supported in part by the US AFOSR, grant number
00029548, by NSF, Grant No. CCF-1216858, and by a research gift from
Huawei Technologies Inc. Z. Xu is supported by the China NSF under
the grant 11101261 and partly by a grant of ``The First-class
Discipline of Universities in Shanghai".}}
\author{Mingyi Hong, Zi Xu, Meisam Razaviyayn  and Zhi-Quan Luo
\thanks{M. Hong, M. Razaviyayn and Z.-Q. Luo are with the Department of Electrical and Computer
Engineering University of Minnesota, Minneapolis, USA (email:
\{mhong, razav002, luozq\}@umn.edu). Z. Xu is with the Department of
Mathematics, Shanghai University, Shanghai, China (Correponding
author, email: xuzi@shu.edu.cn).}} \maketitle

{
\begin{abstract}
In a wireless system with a large number of distributed nodes, the
quality of communication can be greatly improved by pooling the
nodes to perform joint transmission/reception. In this paper, we
consider the problem of optimally selecting a subset of nodes from
potentially a large number of candidates to form a virtual
multi-antenna system, while at the same time designing their joint
linear transmission strategies. We focus on two specific application
scenarios: {\it 1)} multiple single antenna transmitters
cooperatively transmit to a receiver; {\it 2)} a single transmitter
transmits to a receiver with the help of a number of cooperative
relays. We formulate the joint node selection and beamforming
problems  as {\it cardinality constrained optimization problems}
with both discrete variables (used for selecting cooperative nodes)
and continuous variables (used for designing beamformers). For each
application scenario, we first characterize the computational
complexity of the joint optimization problem, and then propose novel
semi-definite relaxation  (SDR) techniques to obtain approximate
solutions. We show that the new SDR algorithms  have a guaranteed
approximation performance in terms of the gap to global optimality,
regardless of channel realizations.
 The effectiveness of the proposed algorithms is demonstrated
via numerical
experiments. %We expect that our approach can be applied to solve
%cross-layer resource allocation problems in many other wireless
%communication systems as well.
\end{abstract}}

\begin{IEEEkeywords}
Virtual Multi-antenna Systems, Beamforming, User Grouping,
Cardinality Constrained Quadratic Program, Semi-definite Relaxation,
Approximation Bounds, Computational Complexity
\end{IEEEkeywords}

\section{Introduction}\label{secIntroduction}

With the proliferation of rich multimedia services as well as smart
mobile devices, the demand for wireless data has been increasing
explosively in recent years. To accommodate the growing demand for
wireless data, practical techniques that can significantly improve
the spectrum efficiency of existing wireless systems must be
developed. In this paper, we focus on a combination of two such
techniques: partial node cooperation and collaborative beamforming.

In a cellular network, cooperation can be achieved by allowing the
neighboring base stations (BSs) to form a virtual multi-antenna
system for joint transmission and reception, a scheme known as
cooperative multipoint {(CoMP)} \cite{Foschini06}. It can
effectively cancel the inter-BS interference, and has been included
into the next-generation wireless standards such as {3GPP Long Term
Evolution-Advanced (LTE-A)}; see e.g., \cite{gesbert10,Foschini06,
Irmer11}. For example, in a downlink network, assuming the users'
data signals are known at all BSs, then either the capacity
achieving non-linear dirty-paper coding (DPC) (see, e.g.,
\cite{Caire03,yu07perantenna}), or simpler linear precoding schemes
such as zero-forcing (ZF) (see, e.g., \cite{Spencer04, zhang09,
zhang10JSAC}) can be used for joint transmission. In addition,
various cooperation schemes have been proposed to exploit spatial
diversity among the mobile users as well \cite{Laneman03,
Sendonaris03,Nassab08, Dehkordy09, Nam08, Zhao06}. In these schemes,
users assist each other in relaying information to the desired
destinations by using various strategies such as
amplify-and-forward (AF) and decode-and-forward (DF). %When using the
%AF strategy, the system performance can be further improved if the
%relays jointly transmit using distributed space-time coding
%\cite{Jing:2009perfectci,Nassab08,Laneman03}.

However, the cost of cooperation can outweigh its benefit when the
size of the cooperation group grows large. Such costs include the
overhead incurred by exchanging control and data signals among the
cooperating nodes (either via backhaul networks or air interfaces);
it can also include efforts required to maintain
system level synchronization \cite{Love08,gesbert10, Irmer11,huang13tsp}. %As a
%result, an important issue in the design of cooperation protocols
%is how to properly control the size of the cooperative groups while
%maintaining high system performance.
To control the size of cooperation group, various {{\it partial}
cooperation schemes} have been developed recently. In the setting of
a cellular network, partial cooperation among the BSs amounts to
judiciously clustering the BSs into (possibly overlapping) small
cooperation groups, within which they cooperatively transmit to or
receive from the users \cite{zhang09, Ng10, Papadogiannis10}. In
\cite{hong12sparse, kim12, zeng10}, joint BS clustering and
beamforming problems are formulated as certain {\it sparse}
beamformer design problems, in which the sparsity of the virtual
beamformer corresponds to the size of the cooperation groups.
Partial cooperation in the relay networks has also been studied
recently. In \cite{Ibrahim08, Zhao06}, the authors propose to select
a {\it single} relay (out of many candidates) so that certain
performance metric at the receiver is optimized. Alternatively,
references \cite{Zhao06, Jing09, yu12relay, Michalopoulos06} study
the {\it multiple} relay selection problem. In particular, the
authors of \cite{Jing09} propose to increase the number of relays
until adding an additional one decreases the received SNR. Reference
\cite{Michalopoulos06} formulates the relay selection problem as a
Knapsack problem \cite{garey79}, and proposes greedy algorithms for
this problem. However, these schemes generally assume simplified
underlying cooperation schemes after fixing the cooperative set. For
example, references \cite{zhang09,Papadogiannis10} use simple zero
forcing strategies for intra-cluster transmission, while references
\cite{Jing09,yu12relay} assume that the cooperative relays transmit
with full power. There has been no performance analysis for these
partial cooperation schemes. This is due to the {\it mixed-integer}
nature of the problem when treating the group membership (which is a
set of discrete variables) as optimization variables.

{In this paper, we study the problem of optimally partitioning the
transmit nodes into cooperation groups, while at the same time
designing their cooperation strategies. We focus on two related
network settings in which either multiple nodes cooperatively
transmit to a receiver, or a single node transmits to the receiver
with the help of a set of cooperative relays. In both cases, the
cooperative nodes are allowed to form a virtual antenna system, and
they can jointly design the virtual transmit beamformers. More
specifically, our objective is to find a subset of cooperative nodes
(with given cardinality) and their joint linear beamformers so that
the system performance measured by the receive signal to noise ratio
(SNR) is maximized. We formulate the problem as a cardinality
constrained quadratic program and study its computational
complexity. Furthermore, we develop novel semi-definite relaxation
(SDR) algorithms  for this mixed integer quadratic program and prove
that they  have a guaranteed approximation performance in terms of
the gap to global optimality, regardless of channel realization.
Compared to the existing SDR algorithms and their analysis
\cite{luo10SDPMagazine,Luo07approximationbounds,
Nemirovski_Roos_Terlaky_1999, zhang:optimal11} which focus on
quadratic problems with continuous variables, our work deals with
mixed-integer cardinality constrained quadratic optimization
problems and
therefore has a significantly broader scope. } %The main novelty of this work lies in
%introducing the framework of SDR for the design of efficient partial
%cooperation schemes. Moreover, our work represents the first attempt
%to generalize the SDR approach and its associated approximation
%bounds to the mixed-integer cardinality constrained programs.}

The rest of the paper is organized as follows. In Section
\ref{secSystem}, we introduce the virtual beamforming problem
without node grouping. Section \ref{secAdmissionControl}--Section
\ref{secRelay} consider the joint node grouping and virtual
beamforming problems in various settings. Section
\ref{secSimulation} presents numerical results. The concluding
remarks and future works are given in Section \ref{secConclusion}.

%Alternatively, when the transmitters are the BSs and the receiver is
%the user, the cooperation among the BSs corresponds to the well
%known downlink coordinated multipoint (ComP) scheme, a technique
%that cancels the inter-BS interference, and is known to
%significantly improve the spectrum efficiency of the cellular system
%(see e.g., \cite{gesbert10,Foschini06}).

{\it Notations}: For a symmetric matrix $\mathbf{X}$,
$\mathbf{X}\succeq 0$ signifies that $\mathbf{X}$ is positive
semi-definite. We use $\trace[\mathbf{X}]$, $\mathbf{X}^H$, and
$\lambda_i(\bX)$ to denote the trace, Hermitian transpose, and the
$i$-th largest eigenvalue of $\mathbf{X}$, respectively. The
notation ${\rm diag}(\bX)$ denotes a matrix consisting the diagonal
value of $\bX$. For an index set $\cS$, the notations $\bX[i,j]$ and
$\bX[\cS]$ denote the $(i,j)$-th element of $\bX$ and the principal
submatrix of $\bX$ indexed by the set $\cS$, respectively.
Similarly, we use $\bx[i]$ and $\bx[\cS]$ to denote the $i$-th
element of a vector $\bx$ and the subvector of $\bx$ with the
elements in set $\cS$, respectively. For a complex scalar $x$, the
notation $\bar{x}$ denotes its complex conjugate . Let
$\delta_{i,j}$ denote the Kronecker function, which takes the value
$1$ if $i=j$, and $0$ otherwise. The notations $\mathbf{I}_n$,
$\be$ and $\be_i$ denote, respectively, the $n\times n$ identity
matrix, the
all one vector in $\mathbb{R}^n$, and the $i$-th unit vector in $\mathbb{R}^n$.  %We use $\mathbb{R}^{N\times M}$ and
%$\mathbb{C}^{N\times M}$ to denote the set of real and complex
%$N\times M$ matrices. We use $\be_i$ to denote the $i$-th unit
%vector and use $\be$ to denote the all $1$ vector. We use
%``$\times$" to denote the scalar multiplication operator.
Some other notations are listed in Table~\ref{tableSymbols}.

\vspace*{-0.1cm}
\begin{table*}[htb]
\caption{\small {A List of Notations} }\vspace*{-0.5cm}
\begin{center}
{\small
\begin{tabular}{|c |c | c|c|}
%{\bf Symbols} & {\bf Meanings} & {\bf Symbols} & {\bf Meanings}\\
% \hline
\hline
$\mathcal{M}$& The set of all transmit nodes & $N$& The number of antennas at the receiver\\
\hline
$\bR$& The channel covariance matrix  & $\bw$& The virtual beamformer\\
\hline
$P$& The transmit power budget for each node & $Q$& The size of cooperative group\\
\hline
$\bx$& The vector of both discrete and continuous variables & $\cS$& The support of a beamformer $\bw$\\
\hline
$\btw$& The dimension-reduced beamformer $\bw[\cS]$ & $\bY$& The rank-1 matrix $\btw\btw^H$\\
\hline
$\bX$& The rank-1 matrix $\bx\bx^H$& $L$& The sample size for randomization\\
 \hline
 \hline
 $\mathbf{f}$ & The first hop channel& $\bg$ &The second hop channel\\
 \hline
$P_0$& The transmit power for the transmitter & $\bfnu$& The noise at the relay\\
 \hline
 $\bS$& The Channel matrix $P_0\mathbb{E}[(\bg\odot\mathbf{f})(\bg\odot\mathbf{f})^H]$& $\bF$& The channel matrix $\mathbb{E}[(\bg\odot\bfnu)(\bg\odot\bfnu)^H]$\\
 \hline
\end{tabular} } \label{tableSymbols}
\end{center}
\vspace*{-0.2cm}
\end{table*}

{\section{Virtual Beamforming with Full
Cooperation}\label{secSystem}}
\subsection{A Single-hop Network}\label{subSingleHop}

Let us first describe the virtual beamforming (VB) problem with all
transmit nodes fully cooperating in transmission. Suppose there is a
set $\cM=\{1,\cdots, M\}$ of transmitters each equipped with a
single antenna, and there is a single receiver with $N\ge 1$ receive
antennas. This setting depicts for instance an uplink cellular
network, where the transmit nodes are the users and the receive node
is the BS. We are interested in the case $M> N$, where the receiver
cannot cancel the interference among the transmit nodes if they
transmit simultaneously and independently. In this case, the benefit
of transmit nodes cooperation in improving system performance is
more pronounced. Let $h_{i,n}\in\mathbb{C}$ denote the channel
between transmitter $i$ and the $n$-th antenna of the receiver, and
define $\bh_{n}\triangleq[h_{1,n},\cdots, h_{M,n}]^H$. Suppose only
second order statistics on the channels are available, that is, both
the transmitters and the receiver only know
$\mathbb{E}[\bh_n\bh_n^H]=\bR_n\succ 0$, for $n=1,\cdots, N$. Let
$z\sim\mathcal{CN}(0,1)$ denote the (normalized) noise at the
receiver.

For tractability, we restrict ourselves to a simple transmit
cooperation strategy in which the cooperative transmitters can form
a {\it linear} virtual beamformer for joint transmission. Let
$w_i\in\mathbb{C}$ denote the complex antenna gain for {transmitter}
$i$, which satisfies an individual transmit power constraint:
$|w_i|^2\le P$. Define $\bw\triangleq[w_1,\cdots, w_M]^H$. When all
the transmitters participate simultaneously in the beamforming, they
transmit {\it the same} data signal to the
receiver by using {\it distinct} antenna gains. %When the system is viewed
%as a downlink network, such data sharing can be done via low-latency
%backhaul networks that {connect} the BSs.
The transmit nodes can share their data signals by the following
steps: {\it 1)} identify the node whose data is to be transmitted;
{\it 2)} the identified node broadcasts its data to all nearby
nodes, who subsequently decode the data.

%To simplify presentation, below we focus on an uplink case in which
%the users transmit to a BS. It should be understood, though, that
%our discussion is equally applicable to the downlink case (i.e., the
%{CoMP} scheme) when the role of transmitters and
%receivers are reversed.

% \cite{Sidiropoulos06Multicast},
%\cite{Matskani08}, \cite{Matskani09},
%\cite{Matskani09batch_adaptive}
%

%Let $s\in\mathbb{C}$ denote the desired message to be transmitted by
%the group of cooperative users, and assume $|s|=1$. Then the
%received signal at the $n$, denoted as $y\in\mathbb{C}$, can be
%expressed as $y=\sum_{i=1}^{M}\bar{w}_i h_i s+ n=\bw^H\bh s + n$.

{Assume that the receiver performs spatially matched
filtering/maximum ratio combining (which is equivalent to the MMSE
receiver in this case), then the total received signal power can be
expressed as:
$\sum_{n=1}^{N}|\bh^H_n\bw|^2=\bw^H(\sum_{n=1}^{N}\bh_n\bh^H_n)\bw$.
Let ${\bR}\triangleq\sum_{n=1}^{N}\bR_n$, and assume that the noise
power is normalized to $1$, then the averaged signal to noise ratio
(SNR) at the receiver is given by: {${\rm
SNR}=\mathbb{E}\bigg[\sum_{n=1}^{N}|\bw^H\bh_n|^2\bigg]=\bw^H\bR\bw.$}
To optimize the averaged SNR at the BS, one can solve the following
quadratic program (QP)}{
\begin{align}
\max_{\bw}&\quad \bw^H\bR\bw\label{problemOriginal} \\
{\st}&\quad |w_i|^2\le P, \ i=1,\cdots ,M\nonumber.
\end{align}}
\hspace{-0.2cm} At this point, it may appear that solving the above
SNR maximization problem with per-antenna power constraints can be
done easily, using for example algorithms based on uplink-downlink
duality proposed in \cite{yu07perantenna}. Unfortunately, as will be
seen later in Section \ref{subAdmissionControlSystem}, this
seemingly simple problem turns out to be computationally very
difficult, for {\it general} channel covariance matrix $\bR$ with
rank larger than one. {In fact, the uplink-downlink duality theory
developed in \cite{yu07perantenna} and related works depends
critically on the assumption that $\bR$ is of rank one. Later in
Claim \ref{remarkComplexity} we will see that indeed in our case
when $\rank(\bR)=1$, solving problem \eqref{problemOriginal} is
easy.

It is worth mentioning that the formulation \eqref{problemOriginal}
is equally applicable to solving the sum SNR maximization problem
when the {\it instantaneous} channel states $\{\bh_n\}$ are
available. In this case, $\bR$ should be replaced by the
instantaneous channel
$\bhR\triangleq\sum_{n=1}^{N}\bh_{n}\bh^H_{n}$. %The reason that we
%interpret \eqref{problemOriginal} as optimizing the averaged system
%performance is towfold: {\it 1)} it may be difficult to obtain
%$\{\bh_n\}$ exactly when the number of nodes is large; {\it 2)}
%using instantaneous channel states for user grouping (to be
%discussed shortly) inevitably results in frequent change of group
%membership, which is usually costly.

%In the following, for simplicity of presentation, we will focus on
%the single receive antenna case described in the previous section.
%However the readers should keep in mind that all the results are
%equally applicable to the multiantenna receivers as well.

\begin{figure*}[htb]\vspace{-0.2cm}
    \begin{minipage}[t]{0.49\linewidth}
    \centering
    {\includegraphics[width=
0.75\linewidth]{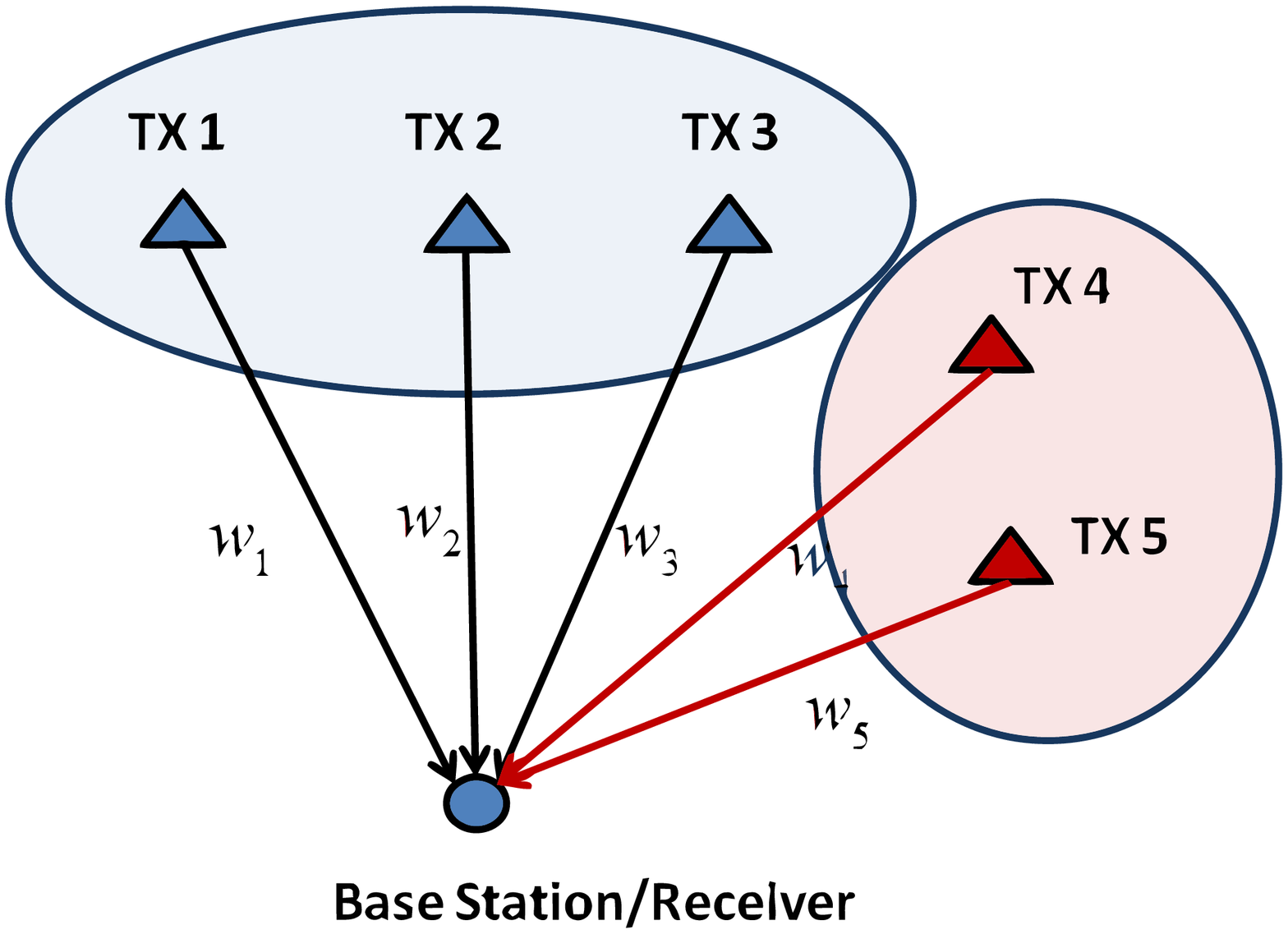} \caption{\small
Illustration of joint node grouping and VB. Nodes $1$-$3$ and $4$,
$5$ are divided into two different groups.}\label{figGrouping} }
\vspace{-0.5cm}
    \end{minipage}
    \hfill
        \begin{minipage}[t]{0.49\linewidth}
       {\includegraphics[width=
0.75\linewidth]{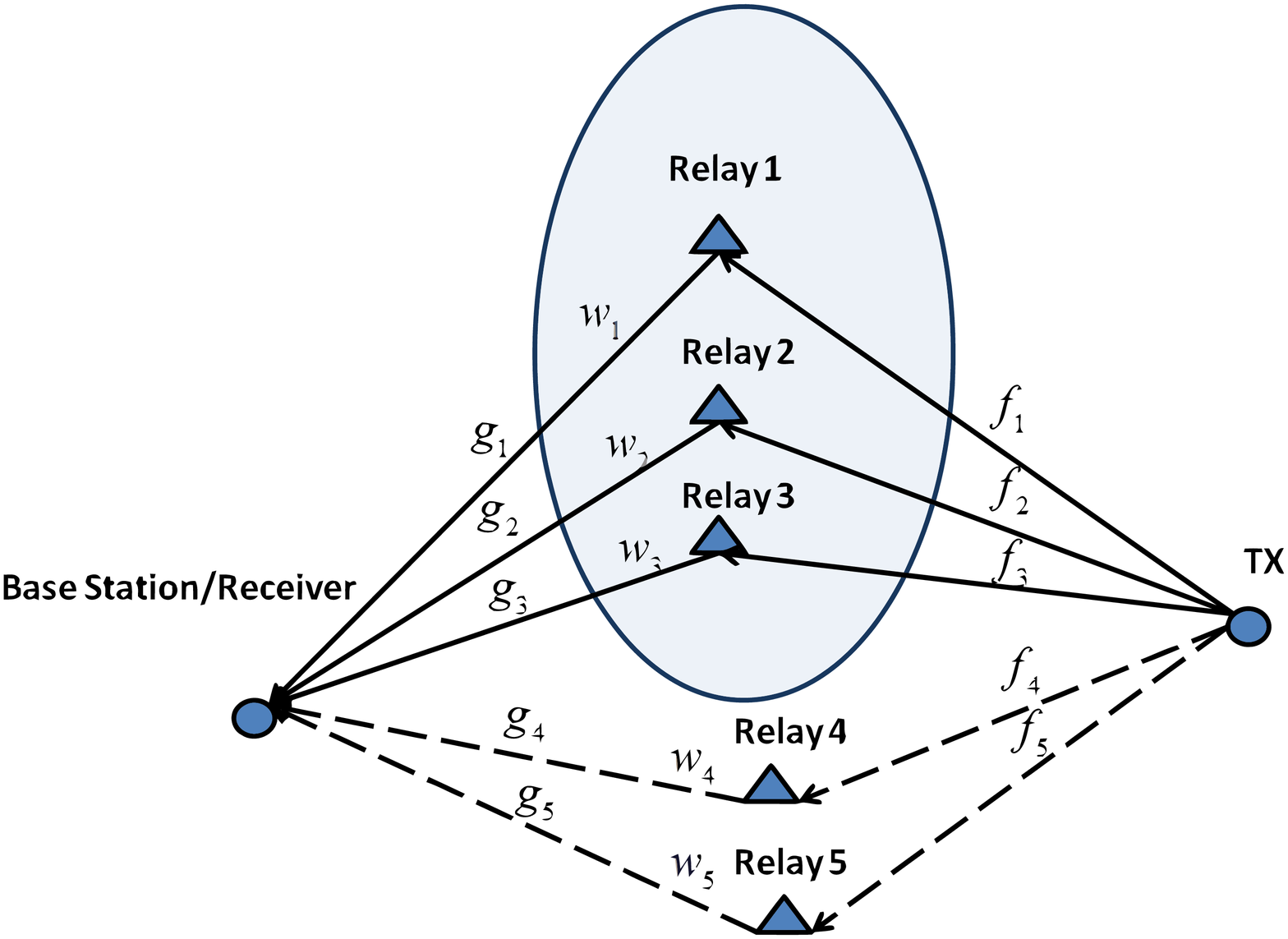} \caption{\small Illustration
of joint relay selection and VB. Relays $1$-$3$ are the serving
relays. }\label{figRelay} }\vspace{-0.2cm}
\end{minipage}
\end{figure*}

%\begin{figure*}[htb]
%    \begin{minipage}[t]{0.48\linewidth}
%    \centering
%
%    \end{minipage}
%    \hfill
%%        \begin{minipage}[t]{0.48\linewidth}
%%    {\includegraphics[width=
%%0.8\linewidth]{Figures/figRelay.eps} \caption{\small Illustration of
%%User grouping for joint (2 times-slot) scheduling and virtual
%%beamforming. Users $1$-$3$ and users $4$, $5$ are divided into two
%%different groups.}\label{figScheduling} }
%%\end{minipage}
%\end{figure*}

\subsection{A Two-hop Network}\label{subTwoHop}
In the previous single hop model, it is assumed that the hop
connecting the source and the cooperative nodes is {\it reliable},
in the sense that all the cooperative nodes can perfectly decode the
signals to be jointly transmitted. %This assumption is reasonable as
%the cooperative users are usually located in close proximity, thus
%the bottleneck links are usually the ones connecting the transmit
%nodes and the receiver.
Alternatively, when the quality of the first hop communication also
needs to be taken into consideration, the problem can be formulated
in the context of the a two-hop relay network, as we explain below.

Consider a network with a pair of transceiver and a set of $M$
relays, each of which has a single antenna (see Fig. \ref{figRelay}
for an illustration). Assume that there is no direct link between
the transmitter and the receiver. Let $\{f_i\}_{i=1}^{M}$ and
$\{g_i\}_{i=1}^{M}$ denote the complex channel coefficients between
the transmitter and the relays, and between the relays and the
receiver, respectively. We focus on a popular AF relay protocol, in
which the transmitter broadcasts the desired signal to the relays,
who subsequently forward the signals to the receiver. Assume that
there is a large number of relays available, and any group of them
can form a virtual multi-antenna system for transmission. %Once again, to
%achieve overhead reduction and node synchronization, it is of
%interest to select a subset of relays to join the AF process.

Let us use $s\in\mathbb{C}$ to denote the message transmitted by the
transmitter; use $P_0$ to denote the transmit power; use $\nu_i$ to
denote the noise at the $i$-th relay with power $\sigma_{\nu}^2$.
Then the signal $x_i$ received at the $i$-th relay can be expressed
as $x_i=\sqrt{P_0}f_i s+\nu_i$. Again use $w_i$ to denote the
complex gain applied by the $i$-th relay, which satisfies the power
constraint $|w_i|^2\le P$. It follows that the transmitted signal of
$i$-th relay is given by $y_i=w_i x_i$. Using this expression, the
averaged transmit power of relay $i$ can be expressed as{
\begin{align}
\mathbb{E}[|y_i|^2]=|w_i|^2\mathbb{E}[x_i\bar{x}_i]=|w_i|^2\left(P_0
\mathbb{E}[|f_i|^2]+\sigma_{\nu}^2\right).\nonumber
\end{align}}
Let $n\in\mathcal{CN}(0,\sigma_n^2)$ denote the noise at the
receiver, then the received signal is given by {\begin{align}
z&=\sum_{i=1}^{M}g_iy_i+n=\underbrace{\sqrt{P_0}\sum_{i=1}^{M}w_if_ig_i
s}_{\rm signal} +\underbrace{\sum_{i=1}^{M} w_i g_i \nu_i +n}_{\rm
noise}.\label{eqSignalReceived}
\end{align}}
The averaged signal power at the receiver is then given by
{\begin{align}
\mathbb{E}\bigg[\big|\sqrt{P_0}\sum_{i=1}^{M}w_if_ig_i
s\big|^2\bigg]=\bw^H\bS\bw
\end{align}}
{\hspace{-0.1cm}}where {$\bS\triangleq
P_0\mathbb{E}[(\mathbf{f}\odot\bg)(\mathbf{f}\odot\bg)^H]$}, with
$\odot$ denoting the componentwise product.  When
$\{\nu_i\}_{i=1}^{M}$ and $\{g_i\}_{i=1}^{M}$ are independent from
each other, the averaged noise power is given by
\cite{Nassab08}{\begin{align} \mathbb{E}\left[\big|\sum_{i=1}^{M}
w_i g_i \nu_i +n\big|^2\right]=\bw^H\bF\bw+\sigma^2_n,
\end{align}}
{\hspace{-0.2cm}} where
$\bF\triangleq\mathbb{E}[(\bg\odot\bfnu)(\bg\odot\bfnu)^H]$.
Additionally, if we further assume that the noises
$\{\nu_i\}_{i=1}^{M}$ are independent, then $\bF$ becomes diagonal:
{$\bF\triangleq\sigma_{\nu}^2{\rm diag}(\mathbb{E}[\bg\bg^H])$}. It
follows that the averaged SNR at the receiver is given
by\cite{Nassab08}: $ {\rm
SNR}=\frac{\bw^H\bS\bw}{\sigma^2_n+\bw^H\bF\bw}\label{eqSNRRelay}$.
To optimize the averaged SNR at the receiver, the following problem
needs to be solved {
\begin{align}
\max_{\bw}&\quad\frac{\bw^H\bS\bw}{\sigma^2_n+\bw^H\bF\bw}\label{problemRelaySimple}\\
{\st}&\quad |w_i|^2\left(P_0
\mathbb{E}[|f_i|^2]+\sigma_{\nu}^2\right)\le P, \ i=1,\cdots, M.
\nonumber
\end{align}}
We remark that when the set of per-relay power constraints is
replaced by a single sum-power constraint, the above problem is
equivalent to a principal generalized eigenvector problem, which is
easily solvable \cite{Nassab08}. However, as will be explained in
more detail in Section \ref{secRelay}, when the per-relay power
constraints are present, \eqref{problemRelaySimple} turns out to be
computationally difficult.

In practice, when the number of transmit/relay nodes becomes large,
allowing all of them to cooperate at the same time induces heavy
signaling overhead (related to nodes' exchange of data and control
signals) and computational efforts (related to computing the optimal
virtual beamformer for all the nodes) \cite{Love08}. %Moreover, when
%the size of the cooperative set becomes larger than the number of
%receive antennas, the additional degrees of freedom gained from
%adding more nodes cannot be fully utilized at the receiver side.
To address these issues, it is necessary to divide the
transmit/relay nodes into different cooperative groups while at the
same time optimizing their virtual beamformers. How to do so in
either single-hop or two-hop networks will be the focus of the rest
of this paper.

%In the rest of this work, we will discuss we first consider a basic
%case in which the aim is to find a {\it single} cooperative group
%with a fixed size. Such admission control problem is important as
%fixing the size of group can effectively control the cooperation and
%computational overhead. {It is worth noting that
%admission control has been a well-studied subject (see
%\cite{Matskani08,Matskani09, liu13deflation}), but existing
%solutions do not apply here because they work for conventional
%wireless networks {\it without} user cooperation.}  We then consider
%two more general cases in which either the users need to be
%allocated into groups that are served in orthogonal time slots, or
%there is an additional hop of transmission to be taken into
%consideration.

\section{Joint Admission Control and VB}\label{secAdmissionControl}
In this section, we consider a basic setting in which the aim is to
find a {\it single} cooperative group with a fixed size. Such
admission control problem is important as fixing the size of the
group can effectively control the cooperation and computational
overhead. {Although admission control for wireless networks is a
well-studied subject (see \cite{Matskani08,Matskani09,
liu13deflation}), existing solutions cannot be directly applied in
our setting because they are designed for conventional wireless
networks {\it without} node cooperation.} \vspace{-0.5cm}

\subsection{Problem Formulation and Complexity Status} \label{subAdmissionControlSystem}
Let $Q$  denote the desired size of the cooperation group, and
introduce the set of binary variables $a_i\in\{0,1\}, i\in\cM$ to
indicate the transmit nodes' group membership: when $a_i=1$, node
$i$ is being assigned to the cooperation group. Let
$\ba\triangleq[a_1,\cdots, a_M]$. Then the joint admission control
and VB problem is given as the following cardinality constrained
program {
\begin{align}
v_1^{\rm CP}=\max_{\bw, \mathbf{a}}&\quad \bw^H\bR\bw\label{problemCCQP}\tag{CP1} \\
{\st}&\quad |w_i|^2\le a_i P, \ i=1,\cdots ,M\nonumber \\
&\quad \sum_{i=1}^{M}a_i=Q,\ \quad a_i\in\{0,1\}, \ i=1,\cdots,
M\nonumber.
\end{align}}
\hspace{-0.2cm}Note that $a_i=0$ implies $|w_i|^2=0$, that is, node
$i$ does not transmit. In the following, we will use $v_1^{\rm
CP}(\bw)$ to indicate the objective value achieved by a feasible
solution $\bw$.

%
%We remark that the above formulation is very different (in fact,
%more difficult) than our previous work on user grouping/admission
%control for downlink multicast system. In those problems, we
%essentially select a subset of {\it constraints} to keep, while in
%the present problem, we would like to select a subset of {\it
%variables} to optimize.

{In order to express the problem in a simpler form, we introduce a
homogenizing variable $\gamma\in\{-1,1\}$ and change the domain of
the discrete variables to $\{-1,1\}$. By doing so problem
\eqref{problemCCQP} can be equivalently reformulated in the
following quadratic form:  } {
\begin{align}
\max_{\bw, \ba, \gamma}&\quad \bw^H\bR\bw\nonumber\\
{\st}&\quad \bw^H\be_i\be^T_i\bw+\frac{P}{4}(a_i-\gamma)^2\le P, \ i=1,\cdots ,M\nonumber \\
&\quad \sum_{i=1}^{M}(a_i+\gamma)^2=4Q,\nonumber\\
&\quad a_i\in\{-1,1\}, \ i=1,\cdots, M,\ \gamma\in\{-1,1\}\nonumber.
\end{align}}
\hspace{-0.2cm}{{To see the equivalence, we first perform a change
of variable domain by defining: $\hat{a}_i=2{a}_i-1$, for all $i$,
where $a_i$ is the original variable with the domain $\{0,1\}$. Then
we split each $\hat{a}_i$ by $\hat{a}_i=\gamma \tilde{a}_i$ for a
new variable $\tilde{a}_i\in\{-1,1\}$. By doing so the constraints
can be shown to be quadratic in both $\tilde{\ba}$ and $\gamma$. For
notational simplicity, below we still use $a_i$ to denote the new
variable $\tilde{a}_i\in\{-1,1\}$.}}

{ After such transformation, we see that $\gamma a_i=-1$ implies
$w_i= 0$, i.e., node $i$ does not join the cooperative group. To
further express the problem in a standard quadratic form of both the
binary and continuous variables, we need the following
definitions}{\small
\begin{subequations}
\begin{align}
&\bC_{i,0}\triangleq\frac{1}{4}\left(\be_i\be^T_i+\be_{M+1}\be^T_{M+1}
-\be_i\be^T_{M+1}-\be_{M+1}\be^T_{i}\right)\in\mathbb{R}^{(M+1)\times
(M+1)},\\
&\bC_{i,1}\triangleq\be_i\be^T_i\frac{1}{P}\in\mathbb{R}^{M\times M}\label{eqC2}\\
&\bD_i\triangleq{\rm blkdg}[\bC_{i,0}, \bC_{i,1}]\in\mathbb{C}^{(2M+1)\times(2M+1)}\label{eqD}\\
&\widetilde{\bR}\triangleq{\rm blkdg}[{\bf 0}, \bR]\in\mathbb{C}^{(2M+1)\times(2M+1)}\label{eqbtR}\\
&\bB_0\triangleq\left[ \begin{array}{ll}
\bI& \be \\
\be^T & M
\end{array}\right]\in\mathbb{R}^{(M+1)\times(M+1)},\\
&\bB={\rm blkdg}[\bB_0,{\bf 0}]\in\mathbb{R}^{(2M+1)\times(2M+1)}\label{eqB}\\
& \bx\triangleq[\ba^T, \gamma, \bw^T]^T\ \bx_0\triangleq [\ba^T,
\gamma]^T, \bx_1\triangleq\bw.\label{eqx}
\end{align}
\end{subequations}}
\hspace{-0.2cm} {We can now compactly write \eqref{problemCCQP} as a
quadratic problem of the newly defined vector $\bx$, which contains
both binary and continuous variables:} {
\begin{subequations}
\begin{align}
\max_{\bx}&\quad\bx^H\btR\bx\label{problemReformulate}\tag{R1}\\
{\st }&\quad \bx^H\bD_i\bx\le 1, \ i=1\cdots, M\label{eqPowerConstraintQP}\\
&\quad \bx^H\bB\bx=4Q,\label{eqCardinatliyConstraintQP}\\
&\quad \bx[i]\in\{-1,1\}, \ i=1,\cdots, M+1\nonumber.
\end{align}
\end{subequations}}
\hspace{-0.2cm} { We emphasize again that in this new notation,
$\bx[i]\bx[M+1]=-1$ implies that node $i$ is not in the cooperative
group (i.e., $\bw[i]=0$), or equivalently $\bx[M+1+i]= 0$ from
definition \eqref{eqx}.}

%\subsection{The Complexity of the Problem \eqref{problemCCQP}}\label{subComplexity}
%In fact, solving problem \eqref{problemCCQP} (or equivalently
%problem \eqref{problemReformulate}) is challenging. In this
%subsection, we show that this problem is strongly NP-hard.
Towards finding a solution for problem
\eqref{problemCCQP}/\eqref{problemReformulate}, the first task is to
analyze their computational complexity. Our analysis, to be
presented shortly, shows that these problems are difficult even when
fixing the values of the binary variables $\{a_i\}$.

Let $\cS\subset\cM$ with $|\cS|=Q$ denote the {\it support} of a
feasible solution $\bw$ to problem \eqref{problemCCQP}:
$\cS=\{i:a_i=1\}$. When $\cS$ is fixed, problem
\eqref{problemCCQP} is equivalent to %{\small
%\begin{align}
%\max_{\bw}&\quad\bw^H\bR\bw\nonumber\\
%{\rm s.t.} &\quad|w_i|^2\le P, i\in\cS\nonumber\\
%&\quad |w_i|=0, i\notin\cS,\ \bw\in\mathbb{C}^{M}.\nonumber
%\end{align}}
%Let $\bR[\cS]$ denote the $Q\times Q$ principal submatrix containing
%the columns and rows prescribed by $\cS$. Then the above problem is
the following QP {
\begin{align}
\max_{\btw\in\mathbb{C}^Q}&\quad\btw^H\bR[\cS]\btw\label{problemQP}\tag{QP1}\\
{\rm s.t.} &\quad|\tw_i|^2\le P, i=1,\cdots,Q\nonumber.
\end{align}}
In the following we analyze the computational complexity of
 \eqref{problemQP}.

\begin{prop}\label{propositionNPHardReal}
Solving the problem \eqref{problemQP} is NP-hard in the number of
transmit nodes.
\end{prop}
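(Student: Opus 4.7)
My plan is a polynomial-time reduction from a known NP-hard combinatorial problem to \eqref{problemQP}. A natural source is the PARTITION problem: given positive integers $a_1,\ldots,a_Q$ with $\sum_i a_i = 2T$, decide whether some $\boldsymbol{\tau}\in\{\pm 1\}^Q$ satisfies $\sum_i \tau_i a_i = 0$.

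First I would set $\bR[\cS] := \|\mathbf{a}\|^2\bI_Q - \mathbf{a}\mathbf{a}^H \succeq 0$, where $\mathbf{a}=(a_1,\ldots,a_Q)^T$, so that
\begin{equation*}
\btw^H\bR[\cS]\btw = \|\mathbf{a}\|^2\|\btw\|^2 - |\mathbf{a}^H\btw|^2.
\end{equation*}
When $\btw$ is restricted to be real, the objective is a convex (PSD) quadratic over the box $[-\sqrt P,\sqrt P]^Q$, so its maximum is attained at a vertex with $\tilde w_i\in\{\pm\sqrt P\}$, giving $P\|\mathbf{a}\|^2 Q - P\min_{\boldsymbol{\tau}\in\{\pm 1\}^Q}(\mathbf{a}^T\boldsymbol{\tau})^2$. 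Thus the real restriction of \eqref{problemQP} decides PARTITION, and the PSD matrix $\bR[\cS]$ has size polynomial in the input, establishing NP-hardness in the real case.

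The main obstacle is extending this argument to the complex-valued $\btw$ appearing in \eqref{problemQP}. A complex optimizer has phase freedom and, for the construction above, can often drive $\mathbf{a}^H\btw$ to zero by spreading phases across the unit circle, producing a value strictly larger than the real optimum (a simple example is $Q=3$, $a_1=a_2=a_3=1$, where the phases $\{1,e^{j2\pi/3},e^{j4\pi/3}\}$ attain the trivial upper bound even though no $\pm 1$ partition does). To circumvent this I would either (a) strengthen the reduction by using a block construction of larger dimension that forces the optimal complex solution to be essentially real (e.g.\ by adjoining a Hermitian block whose off-diagonals penalize the imaginary components of $\btw$), or (b) invoke existing NP-hardness results for complex unit-modulus quadratic programming, whose hardness has been established via reductions from MAX-3-CUT and related problems. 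Once the reduction is in place, the remaining steps---polynomial size of the construction and equivalence of the decision versions---are straightforward bookkeeping.
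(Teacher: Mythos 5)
Your real-variable reduction is essentially the paper's own proof: the authors also reduce from the equal partition problem using the matrix $\bR[\cS]=2C\bI_Q-\bc\bc^T$ (a positive-definite variant of your $\|\ba\|^2\bI_Q-\ba\ba^H$) and observe that the optimal value equals $2CQP$ exactly when every $|\tw_i|^2=P$ and some sign pattern annihilates $\bc^T\btw$, which is the partition condition. Where you differ is in the complex case: the paper simply asserts that ``the complex case can be derived similarly,'' whereas your cube-roots-of-unity example correctly shows that this particular instance does \emph{not} separate yes- and no-instances once phases are allowed, so the complex statement genuinely requires one of your proposed repairs (a gadget forcing essentially real optimizers, or an appeal to known NP-hardness results for complex quadratic programs with unit-modulus or per-coordinate modulus constraints). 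In that respect your write-up is, if anything, more careful than the published argument; the only thing missing is that you should carry one of those two repairs through to completion before the complex-variable version of \eqref{problemQP} can be declared NP-hard.
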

\begin{proof}
{We only prove the claim with real variables. The complex case can
be derived similarly. The claim is proved by a polynomial time
reduction from a known NP-complete problem called equal partition
problem \cite{garey79}, which can be described as follows. Given a
vector $\bc$ consisting of positive integers $c_1, \cdots, c_Q$, the
equal partition problem decides whether there exists a subset $\cI$
such that{
\begin{align}
\frac{1}{2}\sum_{i=1}^{Q}c_i=\sum_{i\in\cI}c_i\label{eqEqualPartition}.
\end{align}}}
\hspace{-0.2cm} {In the following, we will show that a special case
of problem \eqref{problemQP} is equivalent to an instance of equal
partition problem.} Suppose $\bc^T\bc=C>0$. Let
$\bR[\cS]=(-\bc\bc^T+2C\bI_{Q})\succ 0$. The claim is that the
problem \eqref{problemQP} has the maximum value of $2CQP$ if and
only if there exists a set $\cI$ satisfying
\eqref{eqEqualPartition}. The objective of problem \eqref{problemQP}
can be written as{
\begin{align}
\btw^T\bR[\cS]\btw&=-|\btw^T\bc|^2+2C\sum_{i=1}^{Q}|\tw_i|^2\nonumber\\
&\le 2C\sum_{i=1}^{Q}|\tw_i|^2\le2CQP, {\rm ~whenever~} |\tw_i|^2\le
P.
\end{align}}
\hspace{-0.2cm}Consequently, the maximum value for problem
\eqref{problemQP} is $2CQP$ if and only if $-|\btw^T\bc|^2=0
{\rm~and~} |\tw_i|^2=P.$ This is equivalent to the existence of an
index set $\cI$ such that \eqref{eqEqualPartition} is true.
\end{proof}
{It is important to note that when $\cS=\cM$, \eqref{problemQP} is
the same as the VB problem \eqref{problemOriginal}. Therefore we can
readily conclude that solving problem
\eqref{problemOriginal} is also NP-hard.} %Moreover, it is
%important to note that the difficulty in solving problem
%\eqref{problemQP} is mainly attributed to the set of individual
%power constraints. If we only have a {\it total power constraint} in
%the form of $\|\btw\|^2\le P$, this problem is simply the principal
%component analysis (PCA) problem. In this case, the original problem
%\eqref{problemCCQP} becomes one that finds a $Q\times Q$ principal
%submatrix of $\bR$ that yields the maximum eigenvalue (or
%equivalently, the {\it sparse} PCA problem, see
%\cite{Moghaddam:2006:GSB:1143844.1143925}). This problem is still
%difficult to solve, but there are efficient heuristic algorithms for
%it, see
%\cite{Moghaddam:2006:GSB:1143844.1143925,moghaddam:spectral}.

\begin{rmk}\label{remarkComplexity}
{\it When $\bR$ admits certain special structures, both
\eqref{problemQP} and \eqref{problemCCQP} may be easy to solve. One
such example is that when $\rank(\bR)=1$, which corresponds to the
special case where the receiver has a single antenna, and the {\it
instantaneous} SNR is considered. Another relevant case is that when
$\bR$ is a diagonal matrix, which happens when all the transmit
nodes' channels are {\it independent and zero mean}. For both cases,
problems \eqref{problemQP} and \eqref{problemCCQP} are separable
among the variables, and their solutions can be easily obtained in
closed form.}
\end{rmk}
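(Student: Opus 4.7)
The plan is to handle the two special structures of $\bR$ separately, in each case exhibiting a closed-form maximizer of \eqref{problemQP}, and then observing that the outer selection layer in \eqref{problemCCQP} collapses to picking the $Q$ largest entries of a scalar sequence.

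First I would treat the rank-one case. Since $\bR\succeq 0$ with $\rank(\bR)=1$, I factor $\bR=\bh\bh^H$ for some $\bh\in\mathbb{C}^M$, so that $\bR[\cS]=\bh[\cS]\bh[\cS]^H$ and the objective of \eqref{problemQP} becomes $|\bh[\cS]^H\btw|^2$. The triangle inequality together with the per-antenna bound $|\tw_i|^2\le P$ gives
\begin{equation*}
\bigl|\bh[\cS]^H\btw\bigr|\;\le\;\sum_{i\in\cS}|h_i|\,|\tw_i|\;\le\;\sqrt{P}\sum_{i\in\cS}|h_i|,
\end{equation*}
with equality achieved by the phase-aligned choice $\tw_i=\sqrt{P}\,h_i/|h_i|$ (arbitrary phase when $h_i=0$). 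This is an explicit closed-form maximizer with optimal value $P\bigl(\sum_{i\in\cS}|h_i|\bigr)^2$.

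For the diagonal case, write $\bR=\mathrm{diag}(r_1,\ldots,r_M)$ with $r_i\ge 0$ (by $\bR\succeq 0$). The objective of \eqref{problemQP} then decouples as $\sum_{i\in\cS} r_i|\tw_i|^2$, which, under the independent bounds $|\tw_i|^2\le P$, is trivially maximized term by term by $|\tw_i|^2=P$, giving optimal value $P\sum_{i\in\cS}r_i$ and arbitrary phases.

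Finally, to address \eqref{problemCCQP}, I would substitute the two inner-problem values above into the outer maximization over $\cS\subseteq\cM$ with $|\cS|=Q$. The rank-one optimum is a strictly increasing function of $\sum_{i\in\cS}|h_i|$, so the optimal $\cS$ consists of the $Q$ indices with largest $|h_i|$; for the diagonal case one likewise picks the $Q$ indices with largest $r_i$. In both cases the overall solution is obtained by a single sort and is therefore in closed form. There is no substantive obstacle here; the only care required is in handling degeneracies such as zero entries of $\bh$ or ties among the $|h_i|$ or $r_i$, which merely yield nonunique but equally optimal selections.
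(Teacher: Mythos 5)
Your argument is correct and is exactly the reasoning the paper leaves implicit: the remark is stated without a formal proof, and the intended justification is precisely the phase-alignment/triangle-inequality closed form in the rank-one case, the termwise maximization in the diagonal case, and the reduction of the support selection in \eqref{problemCCQP} to sorting $|h_i|$ or $r_i$ and keeping the $Q$ largest. Nothing is missing.
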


\subsection{The Semi-definite Relaxation}
Our  proposed algorithm is based on the technique called
semi-definite relaxation (SDR), which has been widely used to solve
problems in communications and signal processing
\cite{luo10SDPMagazine}. We emphasize that unlike conventional SDR
methods, in which the problems to be relaxed have either all
continuous (e.g., \cite{Luo07approximationbounds}) or all discrete
(e.g., \cite{Goemans:1995}) variables, our problem
\eqref{problemCCQP}/\eqref{problemReformulate} is of {\it mixed
integer} nature. Consequently our algorithm and analysis to be
presented differ significantly from those developed in the existing
literature.

We first introduce two semi-definite programs (SDPs) which are
relaxations of problems \eqref{problemReformulate} and
\eqref{problemQP}. Define a variable $\bX\triangleq\bx\bx^H$. Define
two index sets $\cI\triangleq\{1,\cdots, M+1\}$ and
$\bar{\cI}\triangleq\{M+2, \cdots, 2M+1\}$. Then
$\bX_0\triangleq\bX[\cI]$ and $\bX_1\triangleq\bX[\bar{\cI}]$ denote
the leading and trailing principal submatrices of $\bX$,
respectively. Clearly $\bX_0=\bx_0\bx^T_0$ and $\bX_1=\bx_1\bx_1^H$.
Moreover, we have $\rank(\bX)=1$ and $\bX_{0}[i,j]\in\{-1,1\}$ for
all $i,j\in\cI$. The following SDP is a relaxation of
\eqref{problemReformulate}, by removing the non-convex constraint
$\rank(\bX)=1$ and by replacing $\bX_{0}[i,j]\in\{-1,1\}$ by
$\bX_{0}[i,j]\in[-1,1]$, for all $i,j\in\cI$: {\begin{subequations}
\begin{align}
v_1^{\rm SDP}=\max_{\bX\succeq 0}&\quad\trace[\btR\bX]\label{problemSDP}\tag{SDP1} \\
{\st}&\quad\trace[\bD_i\bX]\le 1, \ i=1\cdots, M\label{eqPowerConstraintSDP}\\
&\quad \trace[\bB\bX]=4Q \label{eqCardinalityConstraintSDP}\\
&\quad\bX[i,i]=1,i=1,\cdots, M+1\label{eqConditionDiscreteDiagonal}.
\end{align}
\end{subequations}}
\hspace{-0.1cm}Note that we did not explicitly include the
conditions $\bX_{0}[i,j]\in[-1,1]$, for all $i,j\in\cI$, as it can
be ensured by the set of conditions
\eqref{eqConditionDiscreteDiagonal} and $\bX\succeq 0$. As the above
problem is a {\it relaxation} of problem \eqref{problemReformulate},
we must have $v_1^{\rm SDP}\ge v_1^{\rm CP}$. Denote the optimal
solution for this problem as $\bX^*$. Since  all the data matrices
$\bB$, $\bD_i$ and $\btR$ are block diagonal matrices, removing the
off-diagonal blocks of an optimal solution does not change either
its optimality or feasibility.  Thus, without loss of generality we
can assume $\bX^*={\rm blkdg}[\bX^{*}_0, \bX^{*}_1]$.

Similarly, let $\bY\triangleq\btw\btw^H\in\mathbb{C}^{Q\times Q}$.
The following problem is a relaxation of the problem
\eqref{problemQP}, for a given index set $\cS\subseteq\cM$ with
$|\cS|=Q${\begin{subequations}
\begin{align}
\max_{\bY\succeq 0}&\quad\trace[\bR[\cS]\bY]\label{problemSDP-QCQP}\\
{\st}&\quad\bY[i,i]\le P,\ i=1,\cdots,
Q.\label{eqPowerConstraintSDP-QCQP}
\end{align}
\end{subequations}}
Let us denote the optimal solution of this problem by $\bY^*$.

{ Fig. \ref{figFlowChart} below shows the relationship among
different problem formulations introduced so far. For problems
\eqref{problemSDP} and \eqref{problemSDP-QCQP}, the following claims
summarize some useful properties of their optimal solutions. These
properties will be used later for analyzing the quality of certain
approximate solutions for the original problem
\eqref{problemCCQP}/\eqref{problemReformulate}.}

   \begin{figure}[ht]
   % \begin{minipage}[t]{0.45\linewidth}\vspace*{-0.4cm}
    \centering
     {\includegraphics[width=
0.8\linewidth]{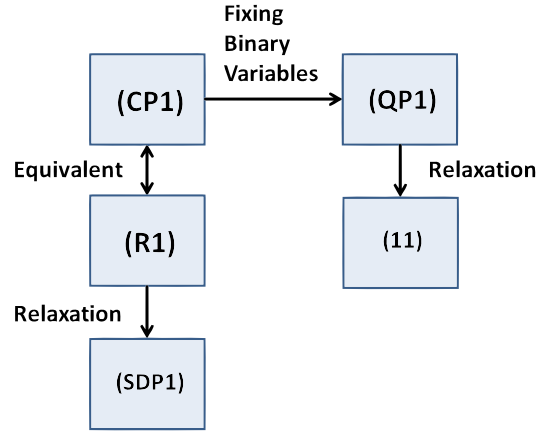} \caption{\footnotesize
Relationship among different problem
formulations.}\label{figFlowChart} }\vspace*{-0.2cm}
%\end{minipage}\hfill
    \end{figure}

\begin{rmk}\label{remarkTight}
{\it At optimality, the set of constraints
\eqref{eqPowerConstraintSDP} and \eqref{eqPowerConstraintSDP-QCQP}
must be all tight. That is {
\begin{align}
\frac{1}{P}\bX^{*}_1[i,i]&=\frac{1}{2}+\frac{1}{2}\bX^{*}_0[i,M+1],\
\forall~i=1,\cdots,M\label{eqPowerConstraintTight}, \\
\bY^*[i,i]&=P,\ \forall~i=1,\cdots,Q.
\end{align}}}
\end{rmk}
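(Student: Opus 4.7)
The plan is to argue by contradiction: if a power constraint were slack at an optimum, then a carefully chosen rank-one perturbation of the trailing block would strictly improve the objective while preserving every other constraint.

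First I would unpack the block structure of problem \eqref{problemSDP}. Since $\btR = {\rm blkdg}[\mathbf{0},\bR]$, the objective $\trace[\btR\bX]=\trace[\bR\bX_1]$ depends only on $\bX_1$. Likewise, $\bB = {\rm blkdg}[\bB_0,\mathbf{0}]$ so the cardinality constraint \eqref{eqCardinalityConstraintSDP} and the normalization \eqref{eqConditionDiscreteDiagonal} both involve only $\bX_0$. Expanding $\trace[\bD_i\bX]$ using \eqref{eqC2}--\eqref{eqD} and the diagonal condition $\bX[M+1,M+1]=1$, the $i$-th power constraint \eqref{eqPowerConstraintSDP} reads
\begin{equation*}
\frac{1}{2}-\frac{1}{2}\bX_0[i,M+1]+\frac{1}{P}\bX_1[i,i]\le 1,
\end{equation*}
which involves $\bX_0$ only through the entry $\bX_0[i,M+1]$ and $\bX_1$ only through the diagonal entry $\bX_1[i,i]$.

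Now suppose, for contradiction, that at $\bX^\ast={\rm blkdg}[\bX_0^\ast,\bX_1^\ast]$ the constraint is slack for some index $j$, i.e.\ $\frac{1}{P}\bX_1^\ast[j,j]<\frac{1}{2}+\frac{1}{2}\bX_0^\ast[j,M+1]$. Define the perturbed matrix $\widehat{\bX}:={\rm blkdg}[\bX_0^\ast,\bX_1^\ast+t\be_j\be_j^T]$ for some small $t>0$. I would then check: (i) $\widehat{\bX}\succeq 0$ because we have added a rank-one PSD term to $\bX_1^\ast$; (ii) the diagonal constraints and the cardinality constraint are untouched since $\bX_0$ is unchanged; (iii) for every $i\neq j$ the $i$-th power constraint is unchanged because $\bC_{i,1}=\frac{1}{P}\be_i\be_i^T$ only sees the $(i,i)$ entry; (iv) by the strict slackness assumption, the $j$-th power constraint continues to hold for all sufficiently small $t>0$; and (v) the objective increases by $t\,\bR[j,j]>0$, which is strictly positive since $\bR\succ 0$ implies every diagonal entry of $\bR$ is positive. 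This contradicts the optimality of $\bX^\ast$, establishing \eqref{eqPowerConstraintTight}.

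The proof of $\bY^\ast[i,i]=P$ for \eqref{problemSDP-QCQP} proceeds in exactly the same spirit, and is in fact simpler: $\bR[\cS]\succ 0$ as a principal submatrix of the positive definite $\bR$, so if some $\bY^\ast[j,j]<P$, replacing $\bY^\ast$ by $\bY^\ast+t\be_j\be_j^T$ preserves positive semidefiniteness, leaves the other diagonal constraints untouched, and strictly increases $\trace[\bR[\cS]\bY]$. I expect no real obstacle here; the only delicate point is recognizing the block-separable structure of \eqref{problemSDP} and choosing a perturbation that lives entirely in the $\bX_1$ block so that the discrete-side constraints are automatically preserved.
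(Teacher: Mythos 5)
Your proof is correct and follows exactly the route the paper indicates: the paper states that Claim~\ref{remarkTight} ``can be shown straightforwardly using a contradiction argument'' (deferring details to \cite{hong12vmimo_proof}), and your rank-one perturbation of the $\bX_1$ block is precisely that argument, with the block-separability of the objective and constraints correctly exploited and the strict positivity of the diagonal of $\bR\succ 0$ supplying the strict improvement. No gaps.
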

\vspace{-0.3cm}

%We show the tightness for \eqref{eqPowerConstraintSDP}. The case for
%\eqref{eqPowerConstraintSDP-QCQP} can be argued similarly. Suppose
%on the contrary, there exists $i\in\cM$ such that
%$\trace[\bD_i\bX^*]<1$. Then from the definition of $\bD_i$ in
%\eqref{eqD}, we have:
%$\trace[\bC_{i,1}\bX^{*}_1]+\trace[\bC_{i,0}\bX^{*}_0]<1,$ which is
%equivalent to: {\small
%$\frac{1}{P}\bX^{*}_1[i,i]<\frac{1}{2}+\frac{1}{2}\bX^{*}_0[i,M+1]$.}
%Due to the assumption that the inequality is strict, we can find a
%constant $\delta>0$ such that{\small
%\[\frac{1}{P}\bX^{*}_1[i,i]+\delta=\frac{1}{2}+\frac{1}{2}\bX^{*}_0[i,M+1].\]}
%As a result, let $\bhX_1=\bX^{*}_1+\delta\be_i\be^T_i\succeq 0$,
%then $\bhX={\rm blkdg}[\bX^{*}_0,\bhX_1]$ is also feasible for
%\eqref{problemSDP}. This alternative solution achieves the following
%objective{\small
%\begin{align}
%v_1^{\rm
%SDP}(\bhX)=\trace[\bR(\bX^*_{1}+\delta\be_i\be^T_i)]=\trace[\bR\bX^{*}_1]+\delta\trace[\bR\be_i\be^T_i]\stackrel{(a)}>
%\trace[\bR\bX^{*}_1],\label{eqContraditionEquality}
%\end{align}}
%\hspace{-0.2cm} where in $(a)$ we have used that fact that
%$\delta>0$ and
% $\bR[i,i]>0$ (due to the strict positive definiteness of $\bR$).  Clearly,
%the inequality \eqref{eqContraditionEquality} is a contradiction to
%the optimality of $\bX^*$.
%
%In conclusion, we have{\small
%\begin{align}
%\frac{1}{P}\bX^{*}_1[i,i]=\frac{1}{2}+\frac{1}{2}\bX^{*}_0[i,M+1],\
%\forall~i=1,\cdots,M\label{eqPowerConstraintTight}.
%\end{align}}
%\hfill $\square$

\begin{rmk}\label{remarkXLowerBound}
{\it The sum of the last column of $\bX^{*}_0$ admits a closed form
expression: {$ \sum_{i=1}^{M}\bX^{*}_0[i,M+1]=2Q-M.$}}
%Suppose, on the contrary, at most $Q-1$ elements in the set are
%greater than or equal to $-1+\frac{2}{M-Q+1}$. Then we must have
%\begin{align}
%\sum_{i=1}^{M}\bX^{(1)*}[i,M+1]&<
%(M-Q+1)(-1+\frac{2}{M-Q+1})+Q-1\nonumber\\
%&=-M+Q-1+\frac{2}{M-Q+1}(M-Q+1)+Q-1\nonumber\\
%&\le-M+Q-1+2+Q-1=2Q-M\nonumber
%\end{align}
%This is a contradiction to \eqref{eqSumX}.
\end{rmk}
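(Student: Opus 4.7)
The plan is to derive the identity directly from the cardinality constraint \eqref{eqCardinalityConstraintSDP} combined with the diagonal normalization \eqref{eqConditionDiscreteDiagonal}; no randomization or convex-analysis machinery is needed, so this is essentially a bookkeeping exercise with the block structure of $\bB$.

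First, I would exploit the block-diagonal structure. Since $\bB=\mathrm{blkdg}[\bB_0,\mathbf{0}]$ and, as already noted after \eqref{problemSDP}, we may take $\bX^{*}=\mathrm{blkdg}[\bX^{*}_0,\bX^{*}_1]$, the constraint $\trace[\bB\bX^{*}]=4Q$ collapses to $\trace[\bB_0\bX^{*}_0]=4Q$. Partition $\bX^{*}_0\in\mathbb{R}^{(M+1)\times(M+1)}$ conformally with $\bB_0$ as
\begin{equation*}
\bX^{*}_0=\begin{bmatrix} \bA & \bb \\ \bb^{T} & c \end{bmatrix},\qquad \bA\in\mathbb{R}^{M\times M},\ \bb\in\mathbb{R}^{M},\ c\in\mathbb{R},
\end{equation*}
where $\bb[i]=\bX^{*}_0[i,M+1]$ for $i=1,\dots,M$ and $c=\bX^{*}_0[M+1,M+1]$.

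Next, I would carry out the block matrix multiplication. Using $\bB_0=\bigl[\begin{smallmatrix}\bI & \be\\ \be^{T} & M\end{smallmatrix}\bigr]$ one obtains
\begin{equation*}
\trace[\bB_0\bX^{*}_0]=\trace[\bA]+2\,\be^{T}\bb+Mc.
\end{equation*}
The diagonal conditions \eqref{eqConditionDiscreteDiagonal} then pin down $\trace[\bA]=M$ and $c=1$, so the cardinality constraint becomes $M+2\,\be^{T}\bb+M=4Q$, i.e.\ $\be^{T}\bb=2Q-M$. Recognizing that $\be^{T}\bb=\sum_{i=1}^{M}\bX^{*}_0[i,M+1]$ yields the desired identity.

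There is really no hard part: the entire argument is an accounting of which entries of $\bB_0$ pair with which entries of $\bX^{*}_0$. The only detail worth double-checking is the factor of two on the cross term, which comes from summing the contribution of the last row of $\bB_0$ multiplied by the $M+1$st column of $\bX^{*}_0$ together with that of the $(M+1)$st row in the trace. Once this is handled correctly, the closed-form expression $2Q-M$ drops out automatically.
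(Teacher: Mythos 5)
Your proposal is correct and follows exactly the route the paper indicates: the paper states that Claim \ref{remarkXLowerBound} "can be derived using the cardinality constraint \eqref{eqCardinalityConstraintSDP}" and defers the details to a companion report, and your block-partition computation of $\trace[\bB_0\bX^{*}_0]=\trace[\bA]+2\,\be^{T}\bb+Mc$ together with the diagonal normalization \eqref{eqConditionDiscreteDiagonal} is precisely that derivation. The factor of two on the cross term is handled correctly, so nothing is missing.
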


{Claim \ref{remarkTight} can be shown straightforwardly using a
contradiction argument. Claim \ref{remarkXLowerBound} can be derived
using the cardinality constraint \eqref{eqCardinalityConstraintSDP}.
Due to space limitations, we refer the readers to
\cite{hong12vmimo_proof} for a formal proof. }

%Combining Remark \ref{remarkTight}--\ref{remarkXLowerBound}, it is
%readily seen that the diagonal elements of $\bX^*_1$ satisfies
%$\sum_{i=1}^{M}\bX^*_1[i,i]=PQ$.

%\begin{rmk}\label{rmkCompareVSDP}
%We must have $v_2^{\rm SDP}\le v_1^{\rm SDP}$. It is easy to see
%that for any $\bY$ that is feasible for problem
%\eqref{problemSDP-QCQP}, there must also be a $\bX={\rm
%blkdg}[\bX^{(1)}, \bX^{(2)}]$ with $\bX^{(2)}=\bY$ that is also
%feasible for problem \eqref{problemSDP}.
%\end{rmk}

\subsection{The Proposed
Algorithm}\label{subAlgorithmAdmissionConstrol}

In this section, we propose a randomized algorithm that generates an
approximate solution for problem \eqref{problemCCQP}. To highlight
ideas, we list below the main steps of the algorithm:
\begin{enumerate}
\item  Compute the optimal solution $\bX^*$ of the relaxed problem
\eqref{problemSDP};

\item Determine the discrete variables $\bx_0$ and the set $\cS$ according to
$\bX^{*}_0$;

\item Fixing $\cS$, compute the optimal solution $\bY^*$ of problem
\eqref{problemSDP-QCQP};

\item Randomly generate a sample of feasible $\bw$'s using
$\bY^{*}$;

\item Select the solution that achieves the best objective
value for problem \eqref{problemCCQP}.
\end{enumerate}

%The distinctive feature of the proposed algorithm lies mainly in
%steps 2)--4), in which the solution $\bX^*$ is converted to a vector
%$\bw$ with prescribed number of zero components. It is worth noting
%that the proposed algorithm generates an approximated solution with
%guaranteed quality (see Section \ref{subApproximationRatio} for
%detail).

{Intuitively, steps 1)--2) select the set of cooperative nodes,
while the rest of the steps determine the virtual beamformer among
the selected nodes. To formally describe the proposed algorithm, the
following definitions are needed. Let $\cS\subseteq\cM$ be an index
set, and let $\bY^*$ denote the corresponding solution for problem
\eqref{problemSDP-QCQP}. Let us factorize $\bY^{*}$ as
$\bY^{*}=\bfDelta^H\bfDelta$. Then define{
\begin{align}
\bE_i&\triangleq\bfDelta\bC_{i,1}[\cS]\bfDelta^H, \
\bE\triangleq\bfDelta\bR[\cS]\bfDelta^H\nonumber.
\end{align}}
\hspace{-0.2cm} Let us further decompose $\bE$ as {\small$
\bE=\bU\bfSigma\bU^H\label{eqEDecomposition}$}. Then the diagonal
matrix $\bfSigma$ can be expressed as{
\begin{align}
\bfSigma=\bU^H\bE\bU=\bU^H\bfDelta\bR[\cS]\bfDelta^H\bU\label{eqSigma}.
\end{align}}}
\hspace{-0.2cm}%Define $\bhE_i\triangleq\bU^H\bE_i\bU$.
Let $L$ denote the sample size of the randomization, and let the
superscript $(l)$ denote the index of a random sample. Let
$\underline{\bx}_{(Q)}$ and $\underline{\br}_{(Q)}$ respectively
denote the $Q$-th largest value in the sets
$\{\bX^{*}_0[i,M+1]\}_{i=1}^{M}$ and $\{\bR[i,i]\}_{i=1}^{M}$. The
proposed algorithm is described in Table \ref{tableAlgorithm}. %Note
%that in Step 2a, we have defined
%$\bfOmega(\cT)\triangleq\sum_{i\in\cT}\be_i\be^T_i\in\mathbb{R}^{M\times
%M}$, and this matrix is used to extract the principal submatrix of
%$\bX_1^*$ indexed by set $\cT$.

%Note that for an arbitrary $M\times M$ matrix, multiplying
%$\bfOmega(\cS)$ from both right and left leads to a matrix that has
%nonzero elements only in rows and columns indexed by $\cS$..

\begin{table}[htb]
\begin{center}
\vspace{-0.1cm} \caption{The Proposed Algorithm for Admission
Control} \label{tableAlgorithm} {\small
\begin{tabular}{|l|}
\hline
\\
S1: Compute the solution $\bX^*$ of problem
\eqref{problemSDP}\\

S2: Find a set $\cT$ of indices such that $|\cT|=Q$ and \\
\quad\quad\quad $\cT=\{j:\bX^{*}_0[j,M+1]\ge \underline{\bx}_{(Q)}\}$;\\
\quad \quad S2a: {\bf If}\quad\quad
$\trace\left[\bR[\cT]\bX^{*}_1[\cT]\right]\ge\frac{QP}{M}\trace[\bR]$, let $\cS=\cT$;\\
\quad \quad S2b: {\bf Else} ~~Let $\cS=\{j:\bR[j,j]\ge \underline{\br}_{(Q)}\}$;\\
\quad\quad Let $\bar{\cS}=\cM\setminus \cS$;\\

S3: Set $\bx_0[M+1]=1$ and  $\bx_0[j]=1$, for all $j\in\cS$; \\
\quad~ Set $\bx_0[i]=-1$ for all $i\in\bar{\cS}$;\\

S4: Compute the solution $\bY^*$ of problem
\eqref{problemSDP-QCQP} with index set $\cS$;\\

{\bf For} $\ell=1,\cdots, L$\\

S5: Generate $\bfxi^{(\ell)}\in\{-1,1\}^{Q}$ by randomly and independently \\
\quad ~ generating its components from $\{-1,1\}$;\\

S6: Compute $t^{(\ell)}=\sqrt{\max_{i\in\cS}(\bfxi^{(\ell)})^T\bU^H\bE_i\bU\bfxi^{(\ell)}}$;\\

S7: Compute
$\btw^{(\ell)}=\frac{1}{t^{(\ell)}}\bfDelta^H\bU\bfxi^{(\ell)}$; \\
\quad ~~Let $\bw^{(\ell)}[\cS]=\btw^{(\ell)}$ and
$\bw^{(\ell)}[\bar{\cS}]=\bf{0}$;\\
{\bf End For}\\

S8: Compute $\ell^*=\arg_{\ell}\max(\bw^{(\ell)})^H\bR\bw^{(\ell)}$;
let $\bw^*=\bw^{(\ell^*)}$;

\\
  \hline
\end{tabular}}
\vspace{-0.3cm}
\end{center}
\end{table}

This algorithm can be viewed as a generalization of the algorithm
developed by Nemirovski et al. and Zhang et al.
\cite{Nemirovski_Roos_Terlaky_1999, zhang:optimal11} for
approximating continuous quadratic programs. The novelty of this
algorithm lies in steps S2)--S3), in which discrete variables are
determined. {Below we motivate Step S2). Without rank relaxation,
$\bX_0=\bx_0\bx^H_0$ is the block variable representing the discrete
variables, so it is reasonable to select cooperative groups using
the elements of this matrix. Recall that in problem
\eqref{problemReformulate}, a node $i$ joins the cooperative group
if $\bx_0[i]\bx_0[M+1]=1$, which, combined with the definition
$\bX_0=\bx_0\bx^T_0$, implies that $\bX_0[i,M+1]=1$. Ideally, we
should form the cooperative group by choosing $Q$ elements in the
set $\ctS=\{i: \bX^*_0[i,M+1]=1, i\in\cM\}$. However it is possible
that $|\ctS|<Q$, as we have relaxed $\bX_0[i,j]\in\{-1,1\}$ to
$\bX_0[i,j]\in[-1,1]$. As a result, we instead choose the largest
$Q$ elements in the set $\{\bX_0^*[i,M+1]\}_{i=1}^{M}$. {Using Claim
\ref{remarkXLowerBound}, it is seen that there is a lower bound on
the sum of such $Q$ elements: $\sum_{j\in\cT}\bX^*_0[j,M+1]\ge
\frac{(2Q-M)Q}{M}$. This bound will be instrumental in the following
performance analysis.} Additionally, steps S2a)--S2b) are some
technical refinement of the selection procedure that are needed
later for the proof of the approximation bounds.}

%We note here that in our
%numerical experiments, we find that the cooperative set is almost
%always determined by the set $\cT$, and Step 2b) rarely
%gets executed. %We see that at the completion of Step S3), there is
%%precisely $Q$ users satisfying $\bx_0[M+1]\bx_0[j]=1$.

{The reason for using steps S4)--S8) to generate the solution
$\bw^*$ is twofold: {\it 1)} the optimal objective value $v_1^{\rm
CP}(\bw^{*})$ can be written down analytically; {\it 2)} $\bw^*$ is
always feasible. See the following two claims for more details
regarding these two properties. Formal arguments for these claims
are relegated to Appendix \ref{appendixRemarkFeasibility}.}

\begin{rmk}\label{remarkObjective}
{\it The objective value of the problem \eqref{problemCCQP}
evaluated at a solution $\bw^{(\ell)}$ is given by{
\begin{align}
v_1^{\rm
CP}(\bw^{(\ell)})&=\frac{1}{(t^{(\ell)})^2}\trace[\bR[\cS]\bY^*]\label{eqObjectiveQP}.
\end{align}}
\hspace{-0.2cm} This result implies that $v_1^{\rm
CP}(\bw^{*})=\frac{1}{\min_{\ell}(t^{(\ell)})^2}\trace[\bR[\cS]\bY^*]$.}
\end{rmk}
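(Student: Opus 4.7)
The plan is a direct computation, driven entirely by the construction in Steps S4--S7 of the algorithm. First, I would use that by Step S7 the vector $\bw^{(\ell)}$ has $\bw^{(\ell)}[\bar{\cS}]=\mathbf{0}$ and $\bw^{(\ell)}[\cS]=\btw^{(\ell)}$, so the full quadratic form collapses onto the principal submatrix indexed by $\cS$:
$$(\bw^{(\ell)})^H\bR\bw^{(\ell)} \;=\; (\btw^{(\ell)})^H\bR[\cS]\btw^{(\ell)}.$$
Substituting the definition $\btw^{(\ell)}=\frac{1}{t^{(\ell)}}\bfDelta^H\bU\bfxi^{(\ell)}$ and then invoking $\bE=\bfDelta\bR[\cS]\bfDelta^H$ rewrites the right-hand side as $\frac{1}{(t^{(\ell)})^2}(\bfxi^{(\ell)})^T\bU^H\bE\bU\bfxi^{(\ell)}$, which by the eigendecomposition $\bE=\bU\bfSigma\bU^H$ equals $\frac{1}{(t^{(\ell)})^2}(\bfxi^{(\ell)})^T\bfSigma\bfxi^{(\ell)}$. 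Since $\bR[\cS]\succeq 0$ and $\bY^*\succeq 0$, the matrix $\bE$ is Hermitian positive semi-definite, so $\bfSigma$ is real diagonal and the eigendecomposition is legitimate.

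The essential simple observation is then that $\bfxi^{(\ell)}\in\{-1,1\}^Q$, so $|\bfxi^{(\ell)}[i]|^2=1$ for every $i$, and therefore
$$(\bfxi^{(\ell)})^T\bfSigma\bfxi^{(\ell)} \;=\; \sum_{i=1}^{Q}\bfSigma[i,i] \;=\; \trace[\bfSigma],$$
a quantity that does not depend on the particular realization of $\bfxi^{(\ell)}$. Applying cyclic invariance of trace together with the factorization $\bY^*=\bfDelta^H\bfDelta$ gives
$$\trace[\bfSigma]=\trace[\bU^H\bE\bU]=\trace[\bE]=\trace[\bfDelta\bR[\cS]\bfDelta^H]=\trace[\bR[\cS]\bY^*],$$
which yields \eqref{eqObjectiveQP} exactly. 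The second identity for $\bw^*$ then follows immediately from Step S8, because the scalar $\trace[\bR[\cS]\bY^*]$ is common to all samples, so maximizing $v_1^{\rm CP}(\bw^{(\ell)})$ in $\ell$ is equivalent to minimizing $(t^{(\ell)})^2$.

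I do not anticipate a serious obstacle in this argument; the calculation is entirely algebraic once the right substitutions are made. The only point that deserves a brief sentence of care is confirming that $\bfSigma$ is real-diagonal (so that $(\bfxi^{(\ell)})^T\bfSigma\bfxi^{(\ell)}$ is genuinely $\sum_i\bfSigma[i,i]$ under $\pm 1$ coordinates), which is handled by noting Hermitian positive semi-definiteness of $\bE$. The feasibility claim, which is the companion property promised in the paragraph preceding the statement, would be handled separately in the appendix by verifying the per-node power constraints using the definition of $t^{(\ell)}$ as the max over $i\in\cS$ of $\sqrt{(\bfxi^{(\ell)})^T\bU^H\bE_i\bU\bfxi^{(\ell)}}$.
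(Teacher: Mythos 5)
Your proposal is correct and follows essentially the same chain of equalities as the paper's own proof in Appendix A: substitute the definition of $\btw^{(\ell)}$, pass to $\bfSigma$ via the eigendecomposition of $\bE$, use $\bfxi^{(\ell)}\in\{-1,1\}^Q$ to reduce the quadratic form to $\trace[\bfSigma]$, and identify $\trace[\bfSigma]=\trace[\bR[\cS]\bY^*]$. The only differences are cosmetic (you make the cyclic-trace step and the real-diagonality of $\bfSigma$ explicit, which the paper leaves implicit), so no further comment is needed.
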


\begin{rmk}\label{remarkFeasibility}
{\it For all $l=1,\cdots, L$, the  solution
$\bx^{(\ell)}\triangleq[\bx^H_0, (\bw^{(\ell)})^H]^H$ is a feasible
solution for the problem \eqref{problemReformulate}. Moreover,
$\bw^{(\ell)}$ is a feasible solution to \eqref{problemCCQP}.}
\end{rmk}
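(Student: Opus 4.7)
The plan is to verify each defining constraint of \eqref{problemReformulate} directly for $\bx^{(\ell)}=[\bx_0^H,(\bw^{(\ell)})^H]^H$, and then recover feasibility for \eqref{problemCCQP} via the change of variables introduced in Section \ref{subAdmissionControlSystem}. The verification splits naturally into three pieces: the discrete structure of $\bx_0$, the cardinality constraint \eqref{eqCardinatliyConstraintQP}, and the per-node quadratic constraint \eqref{eqPowerConstraintQP}.

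First I would check the discrete part. By construction in Step S3, $\bx_0[i]\in\{-1,1\}$ for every $i=1,\ldots,M+1$, which handles the integrality requirement. For the cardinality constraint, using $\bB={\rm blkdg}[\bB_0,\bzero]$ with $\bB_0=\bigl[\begin{smallmatrix}\bI & \be\\ \be^T & M\end{smallmatrix}\bigr]$, I would compute $\bx^H\bB\bx=\bx_0^T\bB_0\bx_0=\sum_{i=1}^M a_i^2+2\gamma\sum_{i=1}^M a_i+M\gamma^2$. Plugging in $\gamma=1$ together with $a_i=1$ for the $Q$ indices in $\cS$ and $a_i=-1$ for the $M-Q$ indices in $\bar\cS$ gives $M+2(2Q-M)+M=4Q$, as required.

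The main content is the per-node power constraint. Using the block structure $\bD_i={\rm blkdg}[\bC_{i,0},\bC_{i,1}]$ and the definitions \eqref{eqC2}, a short expansion yields $\bx^H\bD_i\bx=\tfrac{1}{4}(a_i-\gamma)^2+|w^{(\ell)}_i|^2/P$. For $i\in\bar\cS$ the first term equals $1$ and the second vanishes because Step S7 sets $\bw^{(\ell)}[\bar\cS]=\bzero$, so the constraint holds with equality. For $i\in\cS$ the first term vanishes and I must show $|w^{(\ell)}_i|^2\le P$; this is the only nontrivial bound. Writing $\btw^{(\ell)}=\tfrac{1}{t^{(\ell)}}\bfDelta^H\bU\bfxi^{(\ell)}$ and using $\bC_{i,1}[\cS]=\be_i\be_i^T/P$ together with the definition of $\bE_i$, I would compute
\begin{align}
|\btw^{(\ell)}[i]|^2=\frac{P}{(t^{(\ell)})^2}\,(\bfxi^{(\ell)})^T\bU^H\bE_i\bU\bfxi^{(\ell)}.\nonumber
\end{align}
By the defining formula $(t^{(\ell)})^2=\max_{j\in\cS}(\bfxi^{(\ell)})^T\bU^H\bE_j\bU\bfxi^{(\ell)}$ in Step S6, the right-hand side is bounded by $P$, giving the desired feasibility.

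Finally, to obtain feasibility in \eqref{problemCCQP}, I would invert the change of variables of Section \ref{subAdmissionControlSystem}: set $a_i^{\rm orig}=(\gamma a_i+1)/2$. With $\gamma=1$, this yields $a_i^{\rm orig}=1$ for $i\in\cS$ and $a_i^{\rm orig}=0$ for $i\in\bar\cS$, so $a_i^{\rm orig}\in\{0,1\}$ and $\sum_i a_i^{\rm orig}=|\cS|=Q$. The individual power constraint $|w^{(\ell)}_i|^2\le a_i^{\rm orig}P$ is then precisely what was established above. I do not anticipate any real obstacle; the entire argument is a direct verification, with the only nonroutine step being the power bound, which is engineered into the algorithm through the choice of the normalization factor $t^{(\ell)}$.
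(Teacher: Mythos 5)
Your proof is correct and follows essentially the same route as the paper's: a direct case split on $i\in\cS$ versus $i\in\bar\cS$ for the per-node constraint, with the bound $|w_i^{(\ell)}|^2\le P$ coming from the normalization $t^{(\ell)}$ chosen in Step S6, followed by the cardinality check and the equivalence with \eqref{problemCCQP}. If anything you are slightly more explicit than the paper (e.g., you carry the $\tfrac{1}{4}(a_i-\gamma)^2$ term through the $i\in\bar\cS$ case and compute $\bx^H\bB\bx=4Q$ directly), but the substance is identical.
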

%{To further provide intuition for the proposed
%algorithm, below we present a toy example.
%\begin{exa}
% Consider a very simple network with $M=2$ transmit nodes. The channel
% is given by $\bR=\left[2~0.1; 0.1~3\right]$. Let $Q=1$ and $P=1$.
% In this case it is clear that at optimality node $2$ will be selected for
% transmission, with optimal scheme $\bw^*_2=\pm 1$. Now $\bx$ is of dimension $5$, and after solving the
% SDP formulation \eqref{problemSDP} we obtain the following optimal solution
% \begin{align}
%{\bX_0^*}&\triangleq\left[ \begin{array}{lll}
%1& -0.95 &-0.98 \\
%-0.95&1&0.98\\
%-0.98&0.98&1\\
%\end{array}\right],\quad {\bX_1^*}\triangleq\left[ \begin{array}{ll}
%0.0097& 0.098 \\
%0.098&0.9903
%\end{array}\right],\nonumber
% \end{align}
% We can first check that the properties stated in Claims
%\ref{remarkTight} and \ref{remarkXLowerBound} hold true. Then
%following step S2) of the algorithm, we can easily see that
%$\cS=\{2\}$, i.e., the second node will be selected. After selecting
%the second node, in this case the reduced problem \eqref{problemQP}
%is particularly easy and we obtain $\bY^*=\pm1$. By going through
%all the remaining steps in S5)--S8), we can see that in this special
%case the algorithm only generates $\tilde{\bw}^{\ell}=\pm1$. As a
%result, the proposed algorithm generates a globally optimal solution
%for this simple problem.
%\end{exa}
%}

\subsection{The Analysis of the Quality of the
Solution}\label{subApproximationRatio}

{Clearly the solution $\bw^*$ generated by the proposed algorithm is
only a feasible solution for \eqref{problemCCQP}. A natural question
then is: how good this solution is in terms of the achieved receive
SNR. In the following, we will show that the quality of $\bw^*$ can
be indeed guaranteed. That is, compared with the globally optimal
objective $v_1^{\rm CP}$, there is a {\it finite} constant
$\alpha_1>1$ such that: {$ v_1^{\rm CP}(\bw^*)\ge
\frac{1}{\alpha_1}v_1^{\rm CP}$.} The constant $\alpha_1$ is
referred to as the {\it approximation ratio} of the solution
$\bw^*$. The smaller the value of $\alpha_1$, the better the quality
of the solution $\bw^*$. The following result provides a finite data
independent bound for $\alpha_1$. The proof is relegated to Appendix
\ref{appProofTheorem1}.}

\begin{thm}\label{thmAppRatio}
If $\bw^*$ is generated using the algorithm in Table
\ref{tableAlgorithm}, then with high probability, we have $v_1^{\rm
CP}(\bw^*)\ge \frac{1}{\alpha_1}v_1^{\rm CP}$, with $\alpha_1$
bounded above by{
\begin{align}
\alpha_1\le\frac{8M\lambda_{1}(\bR)}{\sum_{i=1}^{M}\lambda_{i}(\bR)}\ln(5Q)<
8M\ln(5Q)\label{eqAppRatio1}.
\end{align}}
\end{thm}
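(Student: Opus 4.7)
The plan is to decompose the approximation ratio into two nearly independent factors: a deterministic SDR gap arising from the node-selection steps S2--S4, and a randomization factor controlling $\min_\ell(t^{(\ell)})^2$ in the rounding steps S5--S8. Using the SDR upper bound $v_1^{\rm CP}\le v_1^{\rm SDP}$ together with Claim~\ref{remarkObjective}, I would write
\begin{align*}
\alpha_1 \;=\; \frac{v_1^{\rm CP}}{v_1^{\rm CP}(\bw^*)} \;\le\; \frac{v_1^{\rm SDP}}{\trace[\bR[\cS]\bY^*]} \cdot \min_{\ell}(t^{(\ell)})^2,
\end{align*}
and bound each factor separately. The second, looser inequality $\alpha_1<8M\ln(5Q)$ is then immediate from $\lambda_1(\bR)\le \sum_i\lambda_i(\bR)$.

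For the deterministic factor, the numerator is controlled by $v_1^{\rm SDP}=\trace[\bR\bX^*_1]\le \lambda_1(\bR)\trace[\bX^*_1]$, since $\btR=\mathrm{blkdg}[\bzero,\bR]$ and $\bR,\bX^*_1\succeq \bzero$. Combining Claim~\ref{remarkTight} and Claim~\ref{remarkXLowerBound} gives $\trace[\bX^*_1]=\tfrac{P}{2}\big(M+(2Q-M)\big)=PQ$, so $v_1^{\rm SDP}\le \lambda_1(\bR)\,PQ$. For the denominator, I would analyze the two sub-cases of S2 separately. In case S2a, Claim~\ref{remarkTight} yields $\bX^*_1[i,i]\le P$ for every $i$, so $\bX^*_1[\cS]$ is feasible for \eqref{problemSDP-QCQP} and the branching rule gives
\begin{align*}
\trace[\bR[\cS]\bY^*] \;\ge\; \trace\!\big[\bR[\cS]\bX^*_1[\cS]\big] \;\ge\; \frac{QP}{M}\trace[\bR].
\end{align*}
In case S2b, the choice $\bY=P\bI_Q$ is feasible for \eqref{problemSDP-QCQP}, and selecting the $Q$ largest diagonal entries of $\bR$ guarantees $\sum_{i\in\cS}\bR[i,i]\ge \tfrac{Q}{M}\trace[\bR]$, producing the same lower bound. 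Dividing the two estimates yields the deterministic factor $\tfrac{M\lambda_1(\bR)}{\sum_i\lambda_i(\bR)}$.

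For the randomization factor, I would first verify that each Rademacher quadratic form entering $(t^{(\ell)})^2$ has unit mean: using Claim~\ref{remarkTight} applied to \eqref{problemSDP-QCQP} together with the definition of $\bE_i$,
\begin{align*}
\mathbb{E}\!\left[\bfxi^{(\ell)T}\bU^H\bE_i\bU\bfxi^{(\ell)}\right] \;=\; \trace[\bE_i] \;=\; \frac{1}{P}\bY^*[i,i] \;=\; 1.
\end{align*}
A Bernstein-type tail inequality for Rademacher quadratic forms in positive semidefinite matrices of unit trace (as exploited in \cite{Nemirovski_Roos_Terlaky_1999,zhang:optimal11}) then gives an exponential upper-tail estimate for each fixed $i$; a union bound over the $Q$ indices in $\cS$ shows that $(t^{(\ell)})^2\le 8\ln(5Q)$ with an absolute constant probability, and $L$ independent trials drive the failure probability to an exponentially small level, so $\min_\ell(t^{(\ell)})^2\le 8\ln(5Q)$ holds with high probability. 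Multiplying the two factors gives the stated bound.

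The main obstacle is the concentration step. Obtaining the sharp constant $8\ln(5Q)$---rather than an unspecified $O(\ln Q)$---requires invoking the precise Rademacher quadratic form inequality from \cite{Nemirovski_Roos_Terlaky_1999,zhang:optimal11} and carefully tracking the constants through the union bound over $\cS$ and the amplification via $L$ independent resamples. The remaining steps reduce to algebraic manipulations that exploit the block-diagonal structure of $\btR$, $\bB$, $\bD_i$ together with Claims~\ref{remarkTight} and~\ref{remarkXLowerBound}.
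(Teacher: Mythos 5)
Your proposal is correct and follows essentially the same route as the paper's proof: the same lower bound $\trace[\bR[\cS]\bY^*]\ge \frac{QP}{M}\trace[\bR]$ via the two branches of Step S2, the same upper bound $\trace[\bR\bX^*_1]\le QP\lambda_1(\bR)$ via Claims~\ref{remarkTight} and~\ref{remarkXLowerBound}, and the same tail bound on the Rademacher quadratic forms from \cite{Nemirovski_Roos_Terlaky_1999,zhang:optimal11} with a union bound over $\cS$ and amplification over the $L$ samples. The only presentational difference is that you phrase the argument as a product of a deterministic factor and a randomization factor, whereas the paper packages it as a per-sample probability bound $\Prob\big(v_1^{\rm CP}(\bw^{(\ell)})\ge \frac{1}{\alpha_1}v_1^{\rm SDP}\big)\ge 1/5$; the substance is identical.
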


{It is interesting to see that for any channel realization,
$\alpha_1$ is finite. Moreover, when the eigenvalues of $\bR$ are
roughly uniformly distributed, the derived bound is of the order
$\mathcal{O}(\ln(Q))$, which is better than the case where
$\bR$ has a single dominant eigenvalue. %Intuitively, in the first case,
%the channels are favorable as the transmit nodes are almost
%orthogonal to each other. Consequently it is easier to obtain a
%solution that is closer to global optimality.
Nevertheless, it is important to note that the theoretical
approximation ratio obtained above characterizes the quality of the
{\it worst} solutions. { It implies that, compared to the global
optimal solution or the cardinality constrained problem
\eqref{problemCCQP}/\eqref{problemReformulate},
 the solution generated by the SDR approach
cannot be {\it arbitrarily bad}  {\it regardless problem instance}}.
As we will see later in our numerical results, the practical
performance of the algorithm is much better  than the derived
worst-case bound \eqref{eqAppRatio1}.}
%
%
%\subsection{Debiasing}\label{subDebiasing}
%Once we have obtained an approximated solution $\bw^*$, a {\it
%debiasing} step can be performed to further improve the objective
%value (see \cite{Figueiredo07}).
%
%\begin{enumerate}
%\item Solve the following SDP and obtain a solution $\btX$
%\begin{align}
%\max_{\btX}&\quad\trace[\bR(\cS)\btX]\nonumber\\
%{\rm s.t.} \quad&\trace[\btX\bC_{i,2}]\le 1, i=1,\cdots,Q,\ \quad
%\btX\succeq 0\nonumber.
%\end{align}
%
%
%\item Define: $
%\widetilde{\bE}_i\triangleq\widetilde{\bfDelta}\bC_{i,2}\widetilde{\bfDelta}^H{\rm~and~}
%\widetilde{\bE}\triangleq\widetilde{\bfDelta}\bR(\cS)\widetilde{\bfDelta}^H$.
%Make the decompositions: $
%\btX=\widetilde{\bfDelta}^H\widetilde{\bfDelta}{\rm~and~}
%\widetilde{\bE}=\widetilde{\bU}\widetilde{\bfSigma}\widetilde{\bU}^H$.
%Further define
%$\widetilde{\bE}_i\triangleq\widetilde{\bU}^H\widetilde{\bE}_i\widetilde{\bU}$.
%
%\item For $l=1,\cdots, L$
%\begin{itemize}
%\item Generate $\widetilde{\bfxi}(\ell)\in\{-1,1\}^{N}$ by randomly and
%independently generating its components uniformly from $\{-1,1\}$;
%
%\item Compute
%$t(\ell)=\sqrt{\max_{i=1}^{Q}(\widetilde{\bfxi}(\ell))^T\widetilde{\bE}_i\widetilde{\bfxi}(\ell)}$.
%\item  Compute
%$\btx(\ell)=\frac{1}{t(\ell)}\widetilde{\bfDelta}^H\widetilde{\bU}\widetilde{\bfxi}(\ell)$.
%\end{itemize}
%\item Compute
%$\ell^*=\arg_\ell\max(\btx(\ell))^H\bR(\cS)\btx(\ell)$; set
%$\btx^*=\btx(\ell^*)$.
%\end{enumerate}
%
%We remark that in our numerical experiments, we observe that the
%debiasing procedure can always improve the quality of the solutions.

\section{Joint Transmit Node Scheduling and VB}\label{secScheduling}

{The previous section considers the case where a subset of nodes are
selected for transmission. Such a scheme may not be fair to all the
nodes, as the ones that are being excluded from the cooperative set
do not get served. In this section we study a  generalized
formulation that provides fairness among the transmit nodes. }

\subsection{Problem Formulation and Complexity Status}
{Suppose there are {\it two} orthogonal time slots available for
transmission. The problem is to effectively schedule $Q$ transmit
nodes to the first slot and the rest $M-Q$ nodes to the second one,
in a way that the {\it minimum} SNR among these two time slots is
maximized. In this case, effectively there are two virtual
transmitters in the network, and the scheduling scheme promotes
fairness among the virtual transmitters.

Let $\bw_k\in\mathbb{C}^{M}$ denote the virtual beamformer used in
the $k$-th time slot, and let $w_{k,i}\in\mathbb{C}$ denote node
$i$'s antenna gain in slot $k$. Suppose that in both time slots the
channel matrices $\bR$ remains the same. Mathematically, the problem
is given by}{
\begin{align}
v_2^{\rm CP}=&\max_{\{\bw_1,\bw_2, \bfalpha\}}\min_{k=1,2}\quad \bw_k^H\bR\bw_k\label{problemCCQPScheduling}\tag{CP2} \\
{\st}&\ |w_{1,i}|^2\le a_i P, \ |w_{2,i}|^2\le (1-a_i) P, \ i=1,\cdots ,M\nonumber \\
%&\quad |w_{2,i}|^2\le (1-a_i) P, \ i=1,\cdots ,M\nonumber \\
&\quad \sum_{i=1}^{M}a_i=Q,\ \quad a_i\in\{0,1\}, \ i=1,\cdots,
M.\nonumber
\end{align}}
\hspace{-0.2cm}In the following, we will use $v^{\rm
CP}_2(\bw_1,\bw_2)$ to denote the objective achieved by a feasible
tuple $(\bw_1,\bw_2)$. {Note that it is possible to extend
\eqref{problemCCQPScheduling} to the  multiple time slot case by
using more discrete variables ($M$ discrete variables per slot).
However, the resulting analysis will become quite involved. In the
remainder of this paper, we will consider the 2-slot case only.}

Similar to the case of \eqref{problemReformulate}, let us introduce
a homogenizing variable $\ell\in\{-1,1\}$. Let us define $\bB_0$,
$\bC_{i,0}$ and $\bC_{i,1}$ the same way as in
\eqref{eqC2}--\eqref{eqx}. Let us further define{\small
\begin{align}
\tilde{\bC}_{i,0}&\triangleq\frac{1}{4}\left(\be_i\be^T_i+\be_{M+1}\be^T_{M+1}
+\be_i\be^T_{M+1}+\be_{M+1}\be^T_{i}\right)\in\mathbb{R}^{(M+1)\times(M+1)}\nonumber\\
{\bA}_{i,1}&\triangleq{\rm blkdg}[\bC_{i,0}, \bC_{i,1}, {\bf{0}}]\in\mathbb{C}^{(3M+1)\times(3M+1)}, \nonumber\\
{\bA}_{i,2}&\triangleq{\rm blkdg}[\tilde{\bC}_{i,0}, {\bf{0}},\bC_{i,1}]\in\mathbb{C}^{(3M+1)\times(3M+1)}\nonumber\\
\btB&\triangleq{\rm blkdg}[\bB_0,{\bf{0}}, {\bf{0}}]\in\mathbb{R}^{(3M+1)\times (3M+1)}\nonumber\\
\bx&\triangleq[\ba^T, \ell, \bw^T_1, \bw^T_2]^T, \
\bx_0\triangleq[\ba^T, \ell], \ \bx_1\triangleq\bw_1, \
\bx_2\triangleq\bw_2.\nonumber
\end{align}}
\hspace{-0.2cm}Then problem \eqref{problemCCQPScheduling} can be
equivalently written as{
\begin{align}
\max_{\bx}&\min_{k=1,2}\quad\bx_k^H\bR\bx_k\label{problemReformulateScheduling}\tag{R2}\\
{\st }&\quad \bx^H\bA_{i,1}\bx\le 1, \ i=1\cdots, M\nonumber\\
&\quad \bx^H{\bA}_{i,2}\bx\le 1, \ i=1\cdots, M\nonumber\\
&\quad \bx^H\btB\bx=4Q, \ \quad \bx[i]\in\{-1,1\}, \ i=1,\cdots,
M+1\nonumber.
\end{align}}
The max-min scheduling problem is at least as difficult as its
admission control counterpart, as when fixing the group membership,
the subproblem of maximizing the per-group SNR is the same as
\eqref{problemQP}. To see this, we again fix an index set
$\cS_1\subset\cM$ with $|\cS_1|=Q$, and let ${\cS}_2=\cM\setminus
\cS_1$. Then the problem \eqref{problemCCQPScheduling} reduces to
two QPs, one for each slot $k$:{
\begin{align}
\max_{\btw_1\in\mathbb{C}^{|\cS_k|}}&\quad\btw_k^H\bR[\cS_k]\btw_k\label{problemQP2}\\
{\rm s.t.} &\quad|\btw_{k}[i]|^2\le P, \ i=1,\cdots,|\cS_k|\nonumber
%\max_{\btw_2\in\mathbb{C}^{M-Q}}&\quad\btw_2^H\bR[{\cS}_2]\btw_2\label{problemQP3}\\
%{\rm s.t.} &\quad|\btw_{2}[i]|^2\le P, \ i=1,\cdots,M-Q\nonumber.
\end{align}}
\hspace{-0.2cm} {One may observe that each of these problems has the
same structure as problem \eqref{problemQP}. It follows from
Proposition \ref{propositionNPHardReal} that solving either one of
them is difficult for general $\bR$. Interestingly, unlike the
admission control problem, the scheduling problem is difficult even
when $\bR$ is diagonal or is of rank 1. The following result
summarizes the complexity status, the proof of which can be found in
\cite{hong12vmimo_proof}.

%Please see Appendix
%\ref{appNPhardScheduling} for detailed
%proof for this claim.
\begin{prop}\label{propositionNPHardRealScheduling}
Solving problem \eqref{problemCCQPScheduling} is strongly NP-hard
for general channel matrix $\bR$, as well as for the special cases
when $\bR$ is either rank 1 or diagonal.
\end{prop}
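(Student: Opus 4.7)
My plan is to reduce problem (CP2) to a purely combinatorial partitioning problem in the diagonal and rank-one cases, then exploit known hardness results for that combinatorial problem. The general case will then follow as a specialization.

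First I would reduce the inner beamforming problem in the two special cases. If $\bR = \mathrm{diag}(r_1,\ldots,r_M)$ with $r_i > 0$, then once the membership vector $\ba$ is fixed (defining $\cS_1 = \{i : a_i = 1\}$ and $\cS_2 = \cM\setminus\cS_1$), the slot-$k$ SNR $\bw_k^H\bR\bw_k = \sum_{i \in \cS_k} r_i |w_{k,i}|^2$ is separable in the $w_{k,i}$'s and is maximized by saturating the per-node power constraint, giving slot SNR $P\sum_{i \in \cS_k} r_i$. Hence (CP2) reduces to a cardinality-constrained max-min partition problem: choose $\cS_1 \subseteq \cM$ with $|\cS_1| = Q$ so as to maximize
\begin{equation*}
\min\Bigl(\sum_{i \in \cS_1} r_i,\ \sum_{i \in \cM \setminus \cS_1} r_i\Bigr).
\end{equation*}
For $\bR = \bh\bh^H$, an analogous closed-form argument (phase-align $\bw_k$ to $\bh[\cS_k]$ at full power) yields slot SNR $P\bigl(\sum_{i \in \cS_k}|h_i|\bigr)^2$; since $x\mapsto\sqrt{x}$ is monotone on $\mathbb{R}_{\ge 0}$, the resulting max-min problem is equivalent to the one above with weights $|h_i|$.

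Next I would establish strong NP-hardness of this cardinality-constrained max-min partition problem by polynomial-time reduction from a strongly NP-hard problem with polynomially-bounded input, most naturally \textsc{3-Partition}. Given an instance $a_1,\ldots,a_{3m}$ with $\sum_i a_i = mB$ and $B/4 < a_i < B/2$, the plan is to augment the item list with auxiliary ``anchor'' items of carefully chosen polynomial magnitudes, and to set the target cardinality $Q$ so that the induced 2-partition achieves a max-min value at or above a prescribed threshold \emph{if and only if} the original \textsc{3-Partition} instance is feasible. Strong NP-hardness for general $\bR$ then follows at once by specialization.

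The main technical obstacle is the gadget design in this combinatorial reduction: the anchor weights must simultaneously enforce the target sum balance and the cardinality balance $|\cS_1| = Q$, and their magnitudes must stay polynomial in $mB$ (otherwise only weak NP-hardness would be obtained). A secondary, milder subtlety is the squared-sum objective in the rank-one case, but since all weights are nonnegative the monotonicity of $\sqrt{\cdot}$ dissolves this issue without changing the reduction. Everything else—showing equivalence of (CP2) with its reformulation, and lifting the hardness result back to the general $\bR$ case—is routine once the combinatorial reduction is in place.
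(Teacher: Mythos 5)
The paper itself defers the proof of this proposition to the companion report \cite{hong12vmimo_proof}, so there is no in-paper argument to compare against; judging your plan on its own terms, it has a fatal obstruction at its core. Your reduction of the diagonal and rank-one cases to the cardinality-constrained max-min $2$-partition problem is correct (and the $\sqrt{\cdot}$-monotonicity remark disposing of the rank-one squared sum is fine). But that combinatorial problem is solvable in pseudo-polynomial time: a dynamic program over the states ``(number of items placed in $\cS_1$, partial sum of $\cS_1$)'' runs in $O(MQ\sum_i r_i)$ time, which is polynomial whenever the weights are polynomially bounded. Hence the target of your Step 2 is \emph{not} strongly NP-hard unless P $=$ NP, and the gadget you are searching for --- a \textsc{3-Partition} reduction whose anchor weights stay polynomial in $mB$ --- provably cannot exist. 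The difficulty you flag as ``the main technical obstacle'' is insurmountable, not merely delicate. What this route does deliver is ordinary (weak) NP-hardness of the diagonal and rank-one cases, by reducing from the equal-cardinality version of \textsc{Partition}; it cannot certify the word ``strongly'' for them, and indeed your own reduction shows those special cases admit a pseudo-polynomial algorithm, so the literal reading of the proposition for those cases deserves scrutiny.

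A second gap is that the general-$\bR$ case does not ``follow at once by specialization'' once Step 2 fails; it needs its own argument, and the natural one is orthogonal to your combinatorial reduction. For a fixed membership vector the slot subproblem is an instance of \eqref{problemQP}, and taking $\bR$ to be a graph Laplacian (which is positive semidefinite), the maximum of $\bw^H\bR\bw$ over the box $|w_i|^2\le P$ is attained at a vertex $\bw\in\{\pm\sqrt{P}\}^M$ and equals $4P$ times the max-cut value; since \textsc{Max-Cut} is strongly NP-hard already with $\{0,1\}$ edge weights, this yields strong NP-hardness for general $\bR$ even with the discrete variables fixed. (Note that Proposition \ref{propositionNPHardReal} in the paper reduces from \textsc{Partition} and so only gives weak NP-hardness of \eqref{problemQP}; a \textsc{Max-Cut}-type embedding is what upgrades the conclusion to ``strongly.'') As written, your plan establishes weak NP-hardness in all three cases but cannot establish the strong NP-hardness claimed in the statement.
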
}

%Suppose \eqref{eqEqualPartition2} is true, then we can set
%$\alpha_i=1$ for all $i\in\cI$, and $0$ otherwise. The objective of
%the problem becomes $\max\{\sum_{i\in\cI}a_i|w_{1,i}|^2,
%\sum_{j\notin\cI}a_j|w_{2,j}|^2\}$. Clearly, by setting
%$|w_{1,i}|^2=P,\ \forall~i\in\cI$, $|w_{1,i}|^2=0,\
%\forall~i\notin\cI$, and $|w_{2,i}|^2=P,\ \forall~i\notin\cI$,
%$|w_{2,i}|^2=0,\ \forall~i\in\cI$, the objective of the problem is
%$\frac{AP}{2}$.
\vspace{-0.3cm}
\subsection{The Proposed Algorithm}\label{subAlgorithmScheduling}
{The scheduling algorithm we propose below is similar to the one for
the admission control problem---we use the solutions of a relaxation
of \eqref{problemCCQPScheduling} to construct approximate
solutions.} To proceed, define
$\bX_0\in\mathbb{R}^{(M+1)\times(M+1)}$,
$\bX_1,\bX_2\in\mathbb{R}^{M\times M}$, and let $\bX={\rm
blkdg}[\bX_0,\bX_1,\bX_2]$. The SDR of problem
\eqref{problemReformulateScheduling} is given
by{\small\begin{subequations}
\begin{align}
{v}_2^{\rm SDP}=\max_{\bX\succeq 0}\min_{k=1,2}&\quad\trace[\bR\bX_k]\label{problemSDPScheduling}\tag{SDP2} \\
{\st}&\quad\trace[\bA_{i,1}\bX]\le 1, \ i=1\cdots, M\label{eqPowerConstraintSDPScheduling1}\\
&\quad\trace[{\bA}_{i,2}\bX]\le 1, \ i=1\cdots, M\label{eqPowerConstraintSDPScheduling2}\\
&\quad \trace[\btB\bX]=4Q \label{eqCardinalityConstraintSDPscheudling}\\
&\quad\bX[i,i]=1,i=1,\cdots, M+1.\nonumber
\end{align}
\end{subequations}}
\hspace{-0.2cm}Similarly, for a fixed index set $\cS_k$, the SDR of
problem \eqref{problemQP2} is{\begin{subequations}
\begin{align}
\max_{\bY_k\succeq 0}&\quad\trace[\bR[\cS_k]\bY_k]\label{problemSDP-QCQP2}\\
{\st}&\quad\bY_k[i,i]\le P,\ i=1,\cdots,
|\cS_k|.\label{eqPowerConstraintSDP-QCQP2}
%\max_{\bY_2\succeq 0}&\quad\trace[\bR[\bar{\cS}]\bY_2]\label{problemSDP-QCQP3}\\
%{\st}&\quad\bY_2[i,i]\le P,i=1,\cdots,
%M-Q,\label{eqPowerConstraintSDP-QCQP3}
\end{align}
\end{subequations}}
%\hspace{-0.1cm}{The following properties are unique to
%the solutions for problem \eqref{problemSDPScheduling}. They will be
%leveraged in our subsequent analysis of the solution quality for
%problem \eqref{problemReformulateScheduling}. See Appendix
%\ref{appTightScheduling} for a proof.}
%, which are denoted as $\bX^{(1)*},
%%\bX^{(2)*}_1, \bX^{(2)*}_2$.
%\begin{rmk}\label{remarkTightScheduling}
%
%Denote the events $\cA_1(\bX)$ and $\cA_2(\bX)$ by{\small
%\begin{align}
%\cA_1(\bX)&\triangleq\{{\rm all~the~inequalities~
%in~\eqref{eqPowerConstraintSDPScheduling1}~ are~tight~for~} \bX\}\nonumber\\
%\cA_2(\bX)&\triangleq\{{\rm all~the~inequalities~
%in~\eqref{eqPowerConstraintSDPScheduling2}~ are~tight~for~}
%\bX\}\nonumber.
%\end{align}}
%Then there must {\it exist} an optimal solution $\bX^{*}$ for
%problem \eqref{problemCCQPScheduling} such that both events
%$\cA_1(\bX^{*})$ and $\cA_2(\bX^{*})$ happens
%simultaneously.\hfill$\square$
%\end{rmk}

%\begin{rmk}\label{remarkSumX1X2}
%{\it At optimality, the property for the sum of the last column of
%$\bX^*_0$ in Claim \ref{remarkXLowerBound} still holds true.
%Further, there exists an optimal solution $\bX^{*}$ whose diagonal
%elements satisfying the following series of equalities{\small
%\begin{align}
%\sum_{i=1}^{M}\bX^{*}_1[i,i]&=\frac{P}{2}\sum_{i=1}^{M}\left(1+\bX^{*}_0[i,M+1]\right)=\frac{P}{2}(M+(2Q-M))=PQ\label{eqSumX2}\\
%\sum_{i=1}^{M}\bX^{*}_2[i,i]&=\frac{P}{2}\sum_{i=1}^{M}\left(1-\bX^{*}_0[i,M+1]\right)=\frac{P}{2}(M-(2Q-M))=P(M-Q)\label{eqSumX3}.
%\end{align}}}
%\end{rmk}

{To formally describe the proposed algorithm, we need to introduce a
few definitions that are similar to those in Section
\ref{subAlgorithmAdmissionConstrol}. Let $\cS_k\subseteq\cM$ be any
index set, and let $\bY^*_k$ denote the corresponding solution for
problem \eqref{problemSDP-QCQP2}. Decompose $\bY_k^{*}$ by
$\bY_k^{*}=\bfDelta_k^H\bfDelta_k$, for $k=1,2$. Define the
following{
\begin{align}
%\bfOmega(\cS)&\triangleq\sum_{i\in\cS}\be_i\be^T_i\in\mathbb{R}^{M\times M}\nonumber\\
\bE_{i,k}&\triangleq\bfDelta_k\bC_{i,1}[\cS_k]\bfDelta_k^H, \
\bE_k\triangleq\bfDelta_k\bR[\cS_k]\bfDelta_k^H\nonumber.
\end{align}}
Let us decompose $\bE_k$ using its eigendecomposition: $
\bE_k=\bU_k\bfSigma_k\bU_k^H$.}
%Then the diagonal matrix $\bfSigma(\cS, \bfDelta_k)$ can be
%expressed as{\small
%\begin{align}
%\bfSigma(\cS, \bfDelta_k)=\bU^H(\cS, \bfDelta_k)\bfDelta_k
%\bR[\cS]\bfDelta_k^H\bU(\cS, \bfDelta_k)\label{eqSigma2}.
%\end{align}}
%Further define $\bhE_i(\cS, \bfDelta_k)\triangleq\bU^H(\cS,
%\bfDelta_k)\bE_i(\cS, \bfDelta_k)\bU(\cS, \bfDelta_k)$.

{The proposed algorithm for joint scheduling and VB follows almost
identical steps of the admission control algorithm in Table
\ref{tableAlgorithm}, with only minor changes. Below we list the
main steps of the proposed algorithm.
\begin{enumerate}
\item Compute the optimal solution $\bX^{*}$ of problem \eqref{problemSDPScheduling},
such that all the constraints in
\eqref{eqPowerConstraintSDPScheduling1} and
\eqref{eqPowerConstraintSDPScheduling2} are tight \footnote{To find
a solution required by step 1), we can start by any optimal solution
$\bX^*$ of \eqref{problemSDPScheduling}, and increase the diagonal
elements of $\bX_1^*$ or $\bX^*_2$ until all the constraints in
\eqref{eqPowerConstraintSDPScheduling1} and
\eqref{eqPowerConstraintSDPScheduling2} are satisfied.}.

\item Find the set $\cS_1$ with $|\cS_1|=Q$ by
$\cS_1=\{j:\bX^{*}_0[j,M+1]\ge \underline{\bx}_{(Q)}\}$; Set
${\cS}_2=\cM\setminus \cS_1$.

\item For $k=1,2$, compute the solution $\bY_k^*$ of problem
\eqref{problemQP2} with index set $\cS_k$.

\item Perform twice the randomization steps identical to those in Step
5)--Step 8) in Table \ref{tableAlgorithm}, replacing $\{\bU,
\bfDelta, \bE, \bE_{i}\}$ with $\{\bU_k, \bfDelta_k, \bE_k,
\bE_{i,k}\}$, $k=1,2$; obtain samples
$\{\bw^{(\ell)}_1,\bw^{(\ell)}_2\}_{\ell=1}^{L}$.
\item Select the best sample by
$\ell^*=\arg\max_{\ell}\min_{k=1,2}\{(\bw_k^{(\ell)})^H\bR\bw_k^{(\ell)}\}$.
\end{enumerate}

Let us pause to discuss the differences between the above SDR
algorithm and its counterpart for admission control. After deciding
the set $\cS_1$ and $\cS_2$, {\it two} separate randomization
procedures are needed, one for each set $\cS_1$ and ${\cS_2}$.
Intuitively, after deciding $\cS_1$ and $\cS_2$, we have completed
the scheduling task. What remains to be done is to perform VB for
the nodes allocated to each slot. This is the goal of the
randomization procedure. Moreover, the best sample $\ell^*$ is
selected according to the max-min SNR criteria, which promotes
fairness among the two virtual beamformers. }

Using the argument identical to that presented in Claim
\ref{remarkObjective} and Claim \ref{remarkFeasibility}, we can
verify that for all $\ell$, $[\bx_0,\bw^{(\ell)}_1, \bw_2^{(\ell)}$]
must be feasible for problem \eqref{problemReformulateScheduling}.
Moreover, the optimal value $v_2^{\rm CP}(\bw^*)$ can be expressed
in closed form{
\begin{align}
(\bw_k^{*})^H\bR\bw_k^{*}
&=\frac{1}{\min_\ell(t_k^{(\ell)})^2}\trace\left[\bR[\cS]\bY^{*}_k\right],\
k=1,2\nonumber.
\end{align}\vspace{-0.5cm}}

%\subsection{The Approximation Bound}
%The following theorem provides a finite data independent bound on the quality of the
%solution $\bx^*$. See Appendix \ref{appBoundScheduling} for detailed
%proof.
%
%\begin{thm}\label{thmAppRatio2}
%If $\bx^*$ is generated according to the algorithm outlined in
%Section \ref{subAlgorithmScheduling}, then ${v}_2^{\rm
%CP}(\bw^*_1,\bw^*_2)\ge\frac{1}{\alpha_2}{v}_2^{\rm CP}$, with
%$\alpha_2$ given by{
%\begin{align}
%\alpha_2=\frac{8M\lambda_1(\bR)}{\min\{Q,M-Q\}\lambda_M(\bR)}\ln(12\max\{Q,
%M-Q\})\label{eqAppRatioSchedule}.
%\end{align}}
%\end{thm}
%
%For a given channel covariance $\bR$, the bound developed above
%achieves its minimum when the two slots have the same number of
%transmit nodes, i.e., when $M=2Q$. On the other hand, the ratio
%becomes worse if the the number of nodes allocated to the slots are
%increasingly unbalanced. %Again we mention that it is possible to
%%extend our approach to scenarios when multiple orthogonal time-slots
%%are available. However, unlike the 2-slot case, more discrete
%%variables are required ($M$ discrete variables for each slot). Our
%%current algorithm and its approximation bound need to be carefully
%%modified for this more complicated case. We leave this problem as a
%%future research direction.

Let $\alpha_2$ be the approximation ratio for a solution $(\bw^*_1,
\bw^*_2)$, defined as ${v}_2^{\rm
CP}(\bw^*_1,\bw^*_2)\ge\frac{1}{\alpha_2}{v}_2^{\rm CP}$. Using
techniques similar to the proof of Theorem \ref{thmAppRatio}, one
can show that $\alpha_2$ can be bounded by{
\begin{align}
\alpha_2\le\frac{8M\lambda_1(\bR)}{\min\{Q,M-Q\}\lambda_M(\bR)}\ln(12\max\{Q,
M-Q\})\label{eqAppRatioSchedule}.
\end{align}}
However, compared to Theorem \ref{thmAppRatio}, the above bound is
less powerful since it is {\it dependent} on the channel
realization. The proof of this result can be found in
\cite{hong12vmimo_proof}

\section{Joint Relay Grouping and VB}\label{secRelay}

In this section we show that the approaches developed in the
previous sections are also applicable in relay networks. The problem
here is to select a set of relays to form a virtual multi-antenna
system, and at the same time design their virtual beamformers for
signal relaying.

\subsection{Problem Formulation and Complexity Status}

Suppose  $Q$ out of $M$ relays are to be selected for transmission.
Using the system model described in Section \ref{secSystem}, we can
formulate the joint relay grouping and beamforming problem as
follows:
 {
\begin{align}
v^{\rm CP}_3=\max_{\bw, \ba}&\quad\frac{\bw^H\bS\bw}{\sigma^2_n+\bw^H\bF\bw}\label{problemRelay}\tag{CP3}\\
{\st}&\quad |w_i|^2\left(P_0
\mathbb{E}[|f_i|^2]+\sigma_{\nu}^2\right)\le a_i P, \ i=1,\cdots, M,
\nonumber\\
&\quad \sum_{i=1}^{M}a_i=Q, \ a_i\in\{0,1\}, \ i=1,\cdots,
M.\nonumber
\end{align}}
where the objective is the receive SNR.
%Once again, we have used the variables $\ba=[a_1,\cdots,
%a_M]\in\{0,1\}^{M}$ to denote the binary grouping decisions.
This problem can be compactly written as{
\begin{subequations}
\begin{align}
v_3^{\rm QP}=\max_{\bx}&\quad \frac{\bx^H\btS\bx}{\sigma^2_n+\bx^H\btF\bx}\label{problemRelayReformulate}\tag{R3}\\
{\rm \st}&\quad \bx^H\bD_i\bG_i\bx\le 1, \ i=1,\cdots, M,
\label{eqRelayConstraint1}\\
&\quad \bx^H\bB\bx=4Q, \ \bx[i]\in\{-1,1\}, \ i=1,\cdots,
M+1.\label{eqRelayConstraint2}
\end{align}
\end{subequations}}
{\hspace{-0.2cm}}where $\bD_i$, $\bB$ and $\bx$ are given in
\eqref{eqC2}--\eqref{eqx}, and the $(2M+1)\times (2M+1)$ matrices
$\btF$, $\btS$ and $\bG_i$ are defined as {
\begin{subequations}
\begin{align}
{\btS}&\triangleq\left[ \begin{array}{ll}
\bf{0}& \bf{0} \\
\bf{0}&\bS
\end{array}\right],\ {\btF}\triangleq\left[ \begin{array}{ll}
\bf{0}& \bf{0} \\
\bf{0}&\bF
\end{array}\right], \nonumber\\
\bG_i&\triangleq \left[ \begin{array}{ll}
\bf{I}_{M+1}& \bf{0} \\
\bf{0}&\be_i\be^T_i{\left(P_0
\mathbb{E}[|f_i|^2]+\sigma_{\nu}^2\right)}
\end{array}\right], \ i=1,\cdots, M\nonumber.
\end{align}
\end{subequations}}

As always, we first analyze the computational complexity of joint
relay grouping and VB problem \eqref{problemRelay}. The following
theorem shows that solving this problem is generally NP-hard. We
refer the readers to Appendix \ref{appNPhardRelay} for proof
details.
\begin{prop}\label{thmNPHardRelay}
Solving \eqref{problemRelay} is NP-hard in general.
\end{prop}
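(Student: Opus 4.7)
The plan is to establish NP-hardness by reducing from the equal partition problem via a parameter specialization of \eqref{problemRelay} that collapses its fractional Rayleigh quotient into a pure quadratic. This parallels the strategy used in the proof of Proposition \ref{propositionNPHardReal}, and in fact I would aim to reduce \emph{directly} from problem \eqref{problemQP} (which is already shown to be NP-hard there) by embedding it inside \eqref{problemRelay}.

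First, I would consider the subfamily of \eqref{problemRelay} corresponding to noiseless relays: fix $\sigma_\nu^2 = 0$, $\sigma_n^2 = 1$, $P_0 = 1$, $P = 1$. Then $\bF = \sigma_\nu^2 \,{\rm diag}(\mathbb{E}[\bg\bg^H]) = \mathbf{0}$, so the objective simplifies to $\bw^H \bS \bw$, and the per-relay power constraints reduce to $|w_i|^2 \mathbb{E}[|f_i|^2] \le a_i$. I would also set $Q = M$, so the cardinality constraint forces $a_i = 1$ for all $i$ and eliminates the discrete variables entirely.

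Second, I would engineer the first-hop and second-hop channels so that $\bS$ equals the target matrix used in the proof of Proposition \ref{propositionNPHardReal}. Specifically, take $\mathbf{f}$ deterministic with $f_i = 1$ (so $\mathbb{E}[|f_i|^2] = 1$), and choose the distribution of $\bg$ (e.g.\ complex Gaussian) so that $\mathbb{E}[\bg\bg^H] = -\bc\bc^T + 2C\,\bI_M$, where $\bc$ is the vector of positive integers of an arbitrary equal partition instance and $C = \bc^T\bc$. This matrix is positive semidefinite, so the construction is valid. Under these choices $\bS = \mathbb{E}[\bg\bg^H]$, and the per-relay constraints become $|w_i|^2 \le 1$. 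The instance of \eqref{problemRelay} therefore coincides with the instance of \eqref{problemQP} that the proof of Proposition \ref{propositionNPHardReal} shows to be equivalent to deciding whether $\bc$ admits an equal partition. Since the whole construction is polynomial in the input size, NP-hardness of \eqref{problemRelay} follows.

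The main obstacle is not algebraic but modeling: I must verify that choosing $\sigma_\nu^2 = 0$ (together with the deterministic $\mathbf{f}$ and an appropriately distributed $\bg$) is a legitimate instance of the relay model in Section \ref{subTwoHop}, and that any prescribed PSD covariance can be realized as $\mathbb{E}[\bg\bg^H]$. Both are routine—the former is a limiting but valid specialization, and the latter is immediate by picking a complex Gaussian $\bg$ with the desired covariance. Once this is granted, the reduction is essentially the one from Proposition \ref{propositionNPHardReal}, transplanted into the relay setting.
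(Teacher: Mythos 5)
Your proposal is correct and shares the paper's overall skeleton --- eliminate the discrete variables by activating all relays, set $f_i=1$ so that $\bS=\mathbb{E}[\bg\bg^H]$, and plant the equal-partition matrix $2C\bI_M-\bc\bc^T$ as a second-hop covariance so that the argument of Proposition \ref{propositionNPHardReal} applies --- but it diverges at the one genuinely delicate point, namely how to deal with the fractional objective. You kill the denominator outright by taking $\sigma_{\nu}^2=0$, so that $\bF=\mathbf{0}$ and \eqref{problemRelay} collapses to a pure QP. The paper instead keeps $\sigma_{\nu}^2=1$ (so $\bF={\rm diag}(\mathbb{E}[\bg\bg^H])\neq\mathbf{0}$), recasts optimality as the achievability question $\bw^H\left(\mathbb{E}[\bg\bg^H]-t\,{\rm diag}(\mathbb{E}[\bg\bg^H])\right)\bw\ge t\sigma_n^2$ with $|w_i|^2\le 1$, and then uses a continuity argument (nonemptiness of the set $\mathcal{T}$ via the map $\phi_{\mathbf{A}}$) to show that for some $t>0$ there is a positive definite covariance $\mathbf{X}$ with $\mathbf{X}-t\,{\rm diag}(\mathbf{X})=2C\bI_M-\bc\bc^T$; the hard QP then reappears as the achievability test for that $t$. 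Your route is shorter and avoids the continuity lemma, but it establishes hardness only at the degenerate boundary instance of noiseless relays --- precisely the regime in which the AF model loses its distinctive feature of noise amplification --- so a referee could reasonably ask whether $\sigma_{\nu}^2=0$ is an admissible instance of \eqref{problemRelay}. The paper's construction pays for its extra machinery by proving hardness with strictly positive relay noise. You flag this modeling caveat yourself; to close it without adopting the paper's argument you would need to show the reduction survives for all sufficiently small $\sigma_{\nu}^2>0$, and quantifying ``sufficiently small'' in polynomial size is essentially what the paper's set-$\mathcal{T}$ lemma accomplishes. Note also that both your proof and the paper's inherit whatever reservations apply to the complex-variable case of Proposition \ref{propositionNPHardReal}, which the paper only proves for real variables.
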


{It is worth noting that problem~\eqref{problemRelay} is easy when
there is no correlation between the channels, or equivalently when
both $\mathbf{S}$ and $\mathbf{F}$ are diagonal. The reason is that
for a fixed value of $t\ge 0$, solving the
following feasibility problem is easy %(pick at most $Q$ largest
%positive diagonal elements of $\bS - t\bF$ and set the corresponding
%$a_i=1$)
{
\begin{align}
&\frac{\bw^H \bS \bw}{ \sigma_n^2 + \bw^H \bF\bw} \geq t\nonumber\\
& |w_i|^2 \left(P_0 \mathbb{E}(|f_i|^2) + \sigma_v^2\right) \leq a_i P,\; \forall i\nonumber\\
& \sum_{i=1}^{M} a_i = Q, \quad a_i \in \{0,1\}.\nonumber
\end{align}}
{\hspace{-0.2cm}}By performing a bisection on $t$, we can obtain the
optimal solution for \eqref{problemRelay}.}
%In conclusion, \eqref{problemRelay} is polynomial time solvable when
%both $\bS$ and $\bF$ are diagonal.}

Similar to problem \eqref{problemCCQPScheduling}, one can consider
the relay scheduling problem over two time slots. Mathematically,
this problem can be formulated as{
\begin{align}
\max_{\{\bw_1,\bw_2, \bfalpha\}}\min_{k=1,2}&\quad \frac{\bw_k^H \bS \bw_k}{ \sigma_n^2 + \bw_k^H \bF\bw_k}\label{CP4}\tag{CP4} \\
{\st}&\quad |w_{k,i}|^2\left(P_0 \mathbb{E}[|f_i|^2] + \sigma_v^2\right)\le a_i P,  \ i=1,\cdots ,M,\ k=1,2\nonumber \\
&\quad \sum_{i=1}^{M}a_i=Q,\ \quad a_i\in\{0,1\}, \ i=1,\cdots,
M\nonumber.
\end{align}}
\hspace{-0.1cm}It turns out that this problem is NP-hard even for
 diagonal channel matrices (see \cite{hong12vmimo_proof} for the proof).

\begin{prop}\label{thmNPHardRelayDiagonal}
For diagonal channel matrices, problem~\eqref{CP4} is NP-hard.
\end{prop}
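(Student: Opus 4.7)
My plan is a polynomial-time reduction from the classical \textsc{Partition} problem, the same NP-complete problem invoked in Proposition~\ref{propositionNPHardReal}. Because (CP4) carries an explicit cardinality constraint $|\cS_1|=Q$, I would actually reduce from the cardinality-constrained variant \textsc{Balanced-Partition}: given positive integers $c_1,\ldots,c_M$ with $\sum_i c_i=2C$, decide whether there exists $\cI$ with $|\cI|=M/2$ satisfying $\sum_{i\in\cI}c_i=C$. This variant is also NP-complete by a standard padding argument from \textsc{Partition}. The reduction mirrors in spirit how Proposition~\ref{propositionNPHardRealScheduling} is obtained from Proposition~\ref{propositionNPHardReal} in the single-hop setting.

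Given a \textsc{Balanced-Partition} input, I would construct the following (CP4) instance: take $M$ equal to the number of items, $Q=M/2$, and choose first- and second-hop channel statistics so that the effective per-relay power cap $P/(P_0\mathbb{E}[|f_i|^2]+\sigma_\nu^2)$ equals $c_i$ for every $i$. Set $\bS=\bI_M$ (diagonal), $\bF=\bzero$, and $\sigma_n^2=1$. Under $\bF=\bzero$ the per-slot objective collapses to $\sum_{i\in\cS_k}|w_{k,i}|^2$, which is non-decreasing in each power, so for any schedule $(\cS_1,\cS_2)$ the optimal beamformer in slot $k$ drives each selected relay to its cap and achieves per-slot SNR equal to $\sum_{i\in\cS_k}c_i$. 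The (CP4) objective therefore reduces to $\max_{|\cS_1|=Q}\min\bigl(\sum_{i\in\cS_1}c_i,\;\sum_{i\notin\cS_1}c_i\bigr)$, which attains the value $C$ if and only if a balanced partition of $\{c_i\}$ exists; this yields the required polynomial-time equivalence.

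The main obstacle is justifying the choice $\bF=\bzero$, which corresponds to noise-free relaying and might be considered physically degenerate. I would handle this by instead choosing $\bF=\varepsilon\bI$ for a sufficiently small $\varepsilon>0$ whose size is polynomial in the bit-length of $\{c_i\}$: the SNR then becomes a genuine linear-fractional function of the powers, but the check $s_i/f_i>\text{any achievable SNR}$ remains satisfied for every relay, so the full-power allocation is still optimal within each slot, and a routine perturbation argument shows that the unit-integer gap between yes and no instances of \textsc{Balanced-Partition} is preserved under the perturbation. The remainder is a straightforward translation of the diagonal-$\bR$ case of Proposition~\ref{propositionNPHardRealScheduling} to the fractional objective of (CP4).
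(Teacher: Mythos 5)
Your reduction is sound, and it follows the route the paper itself evidently intends: the paper does not print a proof of this proposition (it defers to the companion report \cite{hong12vmimo_proof}), but its other hardness results are all built on the equal-partition reduction of Proposition~\ref{propositionNPHardReal}, and for diagonal $\bS,\bF$ the two-slot max--min problem collapses, exactly as you show, to cardinality-constrained balanced partition (the per-slot problem being a linear--fractional program over a box, solved at full power whenever $s_i>tf_i$). Three small remarks. First, \eqref{CP4} as printed caps $|w_{k,i}|^2$ by $a_iP$ for \emph{both} $k=1,2$; you have (correctly, in my view) read it as the complementary partition of \eqref{problemCCQPScheduling}, since the literal reading makes the two slots identical, renders the min vacuous, and would contradict the paper's own observation that the diagonal one-slot problem \eqref{problemRelay} is polynomially solvable. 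Second, your fallback choice $\bF=\varepsilon\bI$ together with $\bS=\bI_M$ and per-relay caps $c_i$ is not simultaneously realizable from underlying channel statistics, since $\bS[i,i]=P_0\mathbb{E}[|f_i|^2]\mathbb{E}[|g_i|^2]$, $\bF[i,i]=\sigma_\nu^2\mathbb{E}[|g_i|^2]$ and the cap $P/(P_0\mathbb{E}[|f_i|^2]+\sigma_\nu^2)$ leave only two degrees of freedom per relay; this is harmless, because your primary construction with $\sigma_\nu^2=0$ is exactly realizable, and if one insists on $\sigma_\nu^2>0$ one can instead take $f_i$ constant and $\mathbb{E}[|g_i|^2]\propto c_i$, which yields $\bF[i,i]\,p_i^{\max}=\varepsilon c_i$ and a per-slot SNR at full power equal to the same strictly increasing function $x\mapsto x/(\sigma_n^2+\varepsilon x)$ of $x=\sum_{i\in\cS_k}c_i$ for both slots, so the equivalence with balanced partition is untouched. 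Third, the NP-completeness of the equal-cardinality partition variant should be spelled out (shifting every $c_i$ by a constant $K>\sum_i c_i$ forces any equal-sum split to have equal cardinality), or avoided altogether by a Turing reduction that tries all $M-1$ values of $Q$; either fix is routine.
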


\subsection{The  SDR Algorithm}\label{subAlgorithmRelay}
Again let us define $\bX\triangleq\bx\bx^H$. The SDR of the
reformulated problem \eqref{problemRelayReformulate} is given by{
\begin{subequations}
\begin{align}
v_3^{\rm SDP}\triangleq\max_{\bX\succeq 0}&\quad \frac{\trace[\btS\bX]}{\sigma^2_n+\trace[\btF\bX]}\label{problemSDPRelay}\tag{SDP3}\\
{\rm \st}&\quad \trace[\bD_i\bG_i\bX]\le 1, \ i=1,\cdots, M,
\label{eqRelaySDPConstraint1}\\
&\quad \trace[\bB\bX]=4Q, \label{eqRelaySDPConstraint2}\\
&\quad  \bX[i,i]=1, \ i=1,\cdots, M+1.\label{eqRelaySDPConstraint3}
\end{align}
\end{subequations}}
{\hspace{-0.1cm}}Let  $\bX^*$ denote the optimal solution of this
problem. Clearly, we must have ${\trace[\btS\bX^*]}=v_3^{\rm
SDP}(\sigma^2_n+\trace[\btF\bX^*])$. Moreover, $\bX^*$ must be the
optimal solution of the following problem, with an optimal objective
value $v_3^{\rm SDP}\sigma^2_n${
\begin{align}
\max_{\bX\succeq 0}&\quad \trace[(\btS-v_3^{\rm SDP}\btF)\bX]\label{problemSDPRelayReformulate}\tag{SDP4}\\
{\rm \st}&\quad
\eqref{eqRelaySDPConstraint1}-\eqref{eqRelaySDPConstraint3}\nonumber.
\end{align}}
This problem is a relaxation of the following QP{
\begin{align}
\max_{\bx}&\quad \bx^H(\btS-v^{\rm SDP}_3\btF)\bx\nonumber\\
{\rm \st}&\quad
\eqref{eqRelayConstraint1}-\eqref{eqRelayConstraint2}\nonumber.
\end{align}}
The similarity between problem \eqref{problemSDPRelayReformulate}
and the relaxed node selection problem \eqref{problemSDP} suggests a
natural two-step approach to obtain a feasible solution $\bx^*$ of
\eqref{problemRelayReformulate}:
\begin{enumerate}
\item Solve problem \eqref{problemSDPRelay}, obtain $v_3^{\rm SDP}$.

\item Obtain $\bx^*=[\bx^*_0 ,\bw^*]$ by applying the algorithm in Table \ref{tableAlgorithm}, with the matrices $\btR$, $\bD_i$ replaced
 by $\btS-v_3^{\rm SDP}\btF$, $\bD_i\bG_i$, respectively.
\end{enumerate}
Using the same argument given in Claim \ref{remarkFeasibility}, we
can show that the resulting vector $\bx^*$ must be feasible for our
relay selection problem.

%\begin{rmk}
%If we can show that $\bx^*$ achieves a $\frac{1}{\alpha_3}$ fraction
%of the optimal value of problem \eqref{problemSDPRelayReformulate},
%then it follows that{\small
%\[(\bw^*)^H(\bS-v_3^{\rm SDP}\bF)\bw^*=(\bx^*)^H(\btS-v_3^{\rm SDP}\btF)\bx^*\ge \frac{1}{\alpha_3}\trace[(\btS-v_3^{\rm
%SDP}\btF)\bX^*] =\frac{\sigma^2_n v_3^{\rm SDP}}{\alpha_3}.\]} Using
%the fact that $\alpha_3>1$ and that $(\bw^*)^H\bF\bw^*>0$, we
%obtain{\small
%\[v^{\rm CP}_3(\bw^*)=\frac{(\bx^*)^H\bS\bx^*}{\sigma^2_n+v^{\rm SDP}(\bx^*)^H\bF\bx^*}\ge \frac{1}{\alpha_3}v_3^{\rm
%SDP}.\]} which says that $\bw^*$ is also an $\alpha_3$-approximated
%solution for problem \eqref{problemRelay}.\hfill$\square$
%\end{rmk}

Unfortunately, at this point we are still unable to derive a finite
(data independent) approximation bound for this SDR algorithm. The
main difficulty is that, unlike the case of \eqref{problemSDP}, the
coefficient matrix $\btS-v_3^{\rm SDP}\btF$ in the objective of
\eqref{problemSDPRelayReformulate} is no longer a positive definite
matrix (in fact it is indefinite). This implies that some  key
properties (e.g., Claim \ref{remarkTight}) no longer hold true in
this case. Nevertheless, in our numerical experiments, we did
observe that the proposed SDR algorithm generates high quality
approximate solutions for problem \eqref{problemRelay}.

\section{Numerical Results}\label{secSimulation}

In this section, we present numerical results to evaluate the
proposed algorithms. For all simulations presented, we choose the
total number of randomization to be $L=200$.

{\subsection{Grouping for Admission Control and Scheduling}}

We evaluate the performance of the proposed algorithm for joint
admission control and beamforming. Let $N>0$ be a constant, which
represents the number of antennas at the receiver. We generate a
single cell network with radius $500$ meters, and with the
BS/receiver located in the cell center. The location for the
transmit nodes are randomly generated within the cell, and  are at
least $100$ meters away from the receiver. Let $d_i$ denote the
distance between node $i$ and the receiver and let
${\bh}_n\in\mathbb{C}^{M\times 1}$
 denote  the channel vector for the path between
the $i$-th node and the $n$-th receive antenna. We model the $i$-th
entry of ${\bh}_n$  as a zero mean circularly symmetric complex
Gaussian variable with variance (per real/imaginary dimension) given
by $\left({200}/{d_{i}}\right)^{3.5}L_{i}$, where
$10\log10(L_{i})\sim\mathcal{N}(0,64)$ is a real Gaussian random
variable modeling the shadowing effect. Set
$\bR=\sum_{n=1}^{N}\bh_n\bh^H_n$. Suppose the network has $M=50$
transmit nodes, with each node having the same transmit power
$P=-10$ dBW. The proposed algorithm in Section
\ref{secAdmissionControl} (abbreviated as Alg.1) is compared with a
sparse PCA (SPCA) based algorithm, whose main steps are listed
below: {\it i)} Approximately find a {\it sparse} principal
component of $\bR$ with $Q$ non-zero entries (denoted as $\bhw$),
using the backward-forward algorithm proposed in
\cite{moghaddam:spectral}; {\it ii)} normalize $\bhw$ by a constant
$\epsilon$ so that all the individual power constraints are
satisfied. In its first step, this algorithm tries to find the set
of nodes that, when replacing the individual power constraints with
a total power constraint, can provide the (approximately) best
averaged receive SNR.

Table \ref{tableGeneralChannel} demonstrates the maximum, the
minimum and the averaged ratios achieved for running the algorithms
over $500$ independent random generations of the channel. Note that
the approximation ratio for a solution $\bw^*$ is calculated by
$\alpha=\frac{v_1^{\rm CP}(\bw^*)}{v_1^{\rm SDP}(\bX^*)}$.

\begin{table*}[htb]
\vspace*{-0.1cm} \caption{Approximation Ratio of The Proposed and
SPCA Algorithms}\vspace*{-0.2cm}
\begin{center}
\small{
\begin{tabular}{|c |c| c|c|c| c|c|} \hline
 &{\bf  Alg.1 Min} & {\bf Alg.1 Mean} & {\bf Alg.1 Max}& {\bf SPCA Min}& {\bf SPCA Mean}& {\bf SPCA Max}\\
 \hline
 $N=5$, $Q=\lceil{M}/{3}\rceil$ & 1.12&  1.20 & 1.31 & 2.88 & 5.85 & 9.51\\
 \hline
 $N=10$, $Q=\lceil{M}/{3}\rceil$ & 1.11&  1.21 & 1.51& 2.48 & 6.74 & 10.62\\
\hline
 $N=15$, $Q=\lceil{M}/{3}\rceil$ & 1.11 &1.20 &1.43 & 2.71& 6.79 &11.27\\
 \hline
 \hline
 $N=5$, $Q={M}/{2}$ & 1.09& 1.14 & 1.49 & 3.04 & 6.97 & 12.13\\
 \hline
 $N=10$, $Q={M}/{2}$ & 1.08&  1.18 & 1.72& 2.81 & 7.94 & 14.08\\
\hline
 $N=15$, $Q={M}/{2}$ & 1.07 &1.17 &1.87 & 3.86& 8.28 &13.74\\
 \hline
\end{tabular} } \label{tableGeneralChannel}
\end{center}
\vspace*{-0.1cm}
\end{table*}

In Fig. \ref{figRatioM}--\ref{figObjM} we plot the performance of
the algorithms for different sizes of the network. For a given
network size, we choose $Q=10$ and let $N=5$. For each network
$(Q,M)$ pair, the algorithm is run for $500$ independent
realizations of the network. We again plot the maximum, the minimum
and the averaged approximation ratios achieved among those $500$
realizations. We see that the proposed algorithm achieves very low
worst-case approximation ratio, which suggests that high SNR
performance is obtained for almost all Monte Carlo runs. In Fig.
\ref{figObjQ}, a similar experiment is conducted with the number of
transmit nodes fixed at $M=50$, but with varying cooperative group
size $Q\in[6,~15]$.

   \begin{figure*}[ht]
    \begin{minipage}[t]{0.48\linewidth}\vspace*{-0.2cm}
    \centering
     {\includegraphics[width=
1\linewidth]{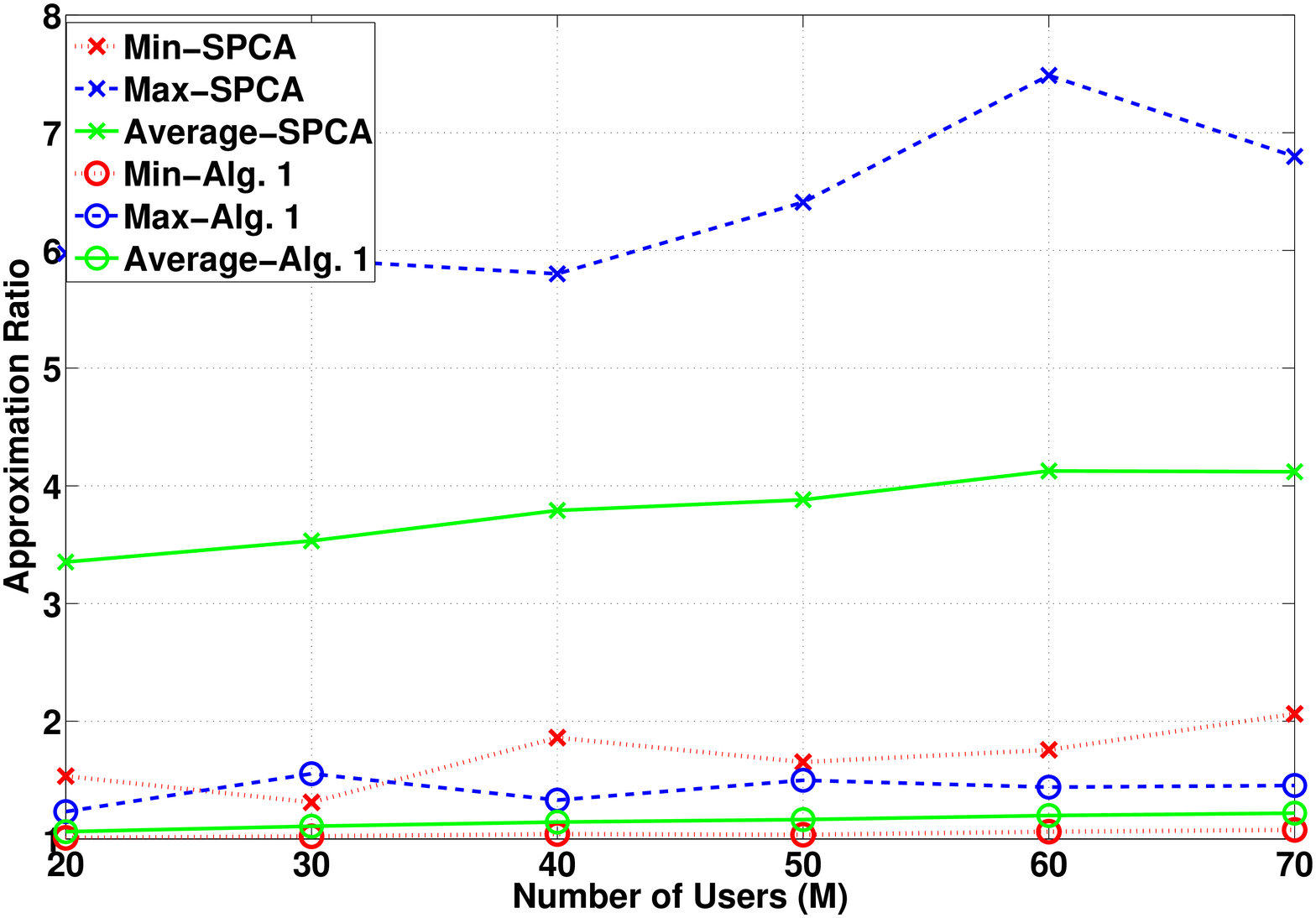}
\caption{\footnotesize Approximation ratio for admission control
with different network sizes. $M\in[10, 20, 30, 40, 50, 60, 70]$,
$Q=10$, $P=-10$dBW, $N=5$.
%$500$ independent realizations of channel.
}\label{figRatioM} }\vspace*{-0.2cm}
\end{minipage}\hfill
    \begin{minipage}[t]{0.48\linewidth}\vspace*{-0.2cm}
    \centering
         {\includegraphics[width=
1\linewidth]{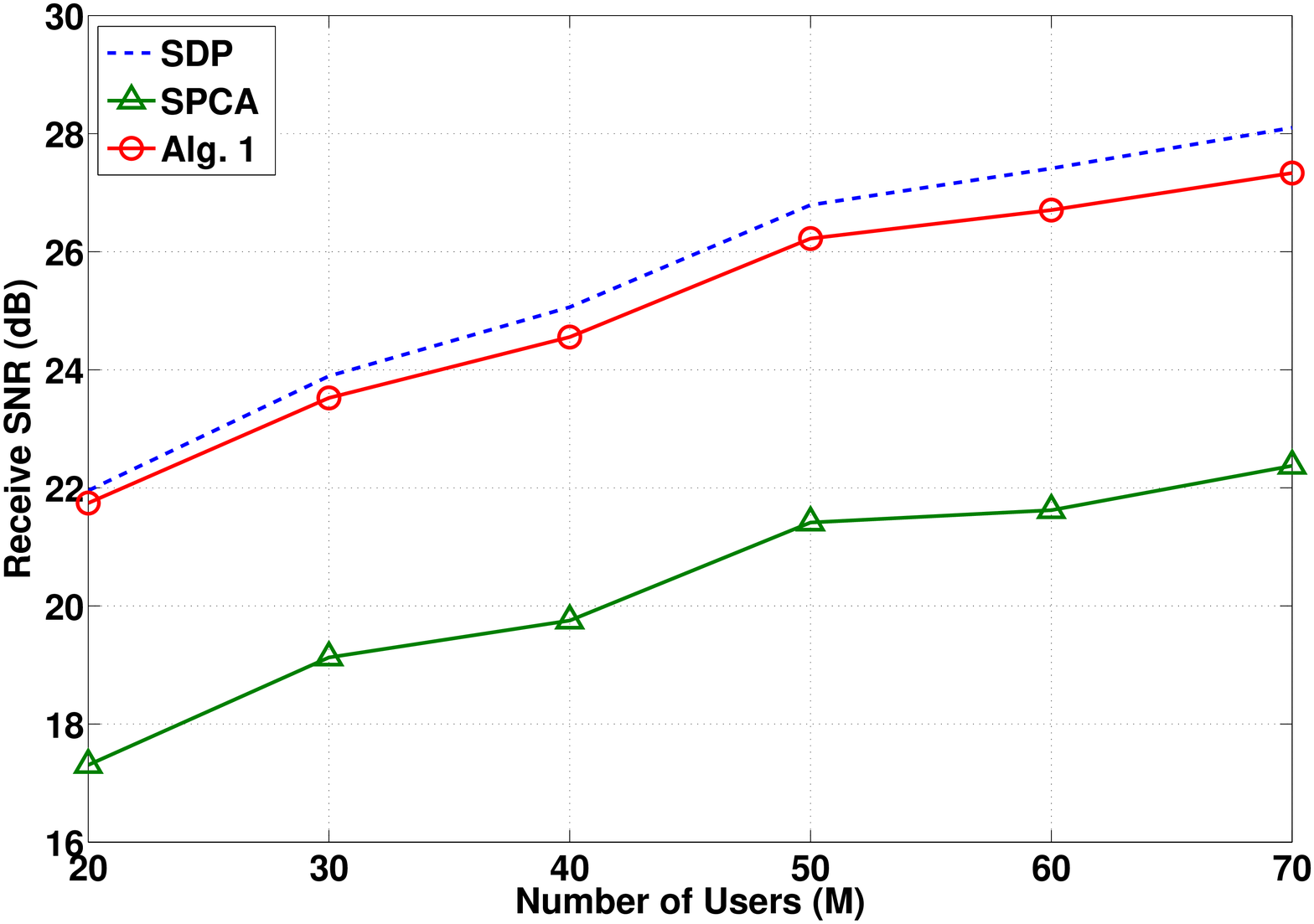} \caption{\footnotesize
Receive SNR for admission control with different network sizes.
$M\in[10, 20, 30, 40, 50, 60, 70]$, $Q=10$, $P=-10$dBW, $N=5$. % $500$
%independent realizations of channel.
}\label{figObjM} } \vspace*{-0.2cm}
\end{minipage}
    \end{figure*}

       \begin{figure*}[ht]
        \begin{minipage}[t]{0.48\linewidth}\vspace*{-0.1cm}
    \centering
    {\includegraphics[width=
1\linewidth]{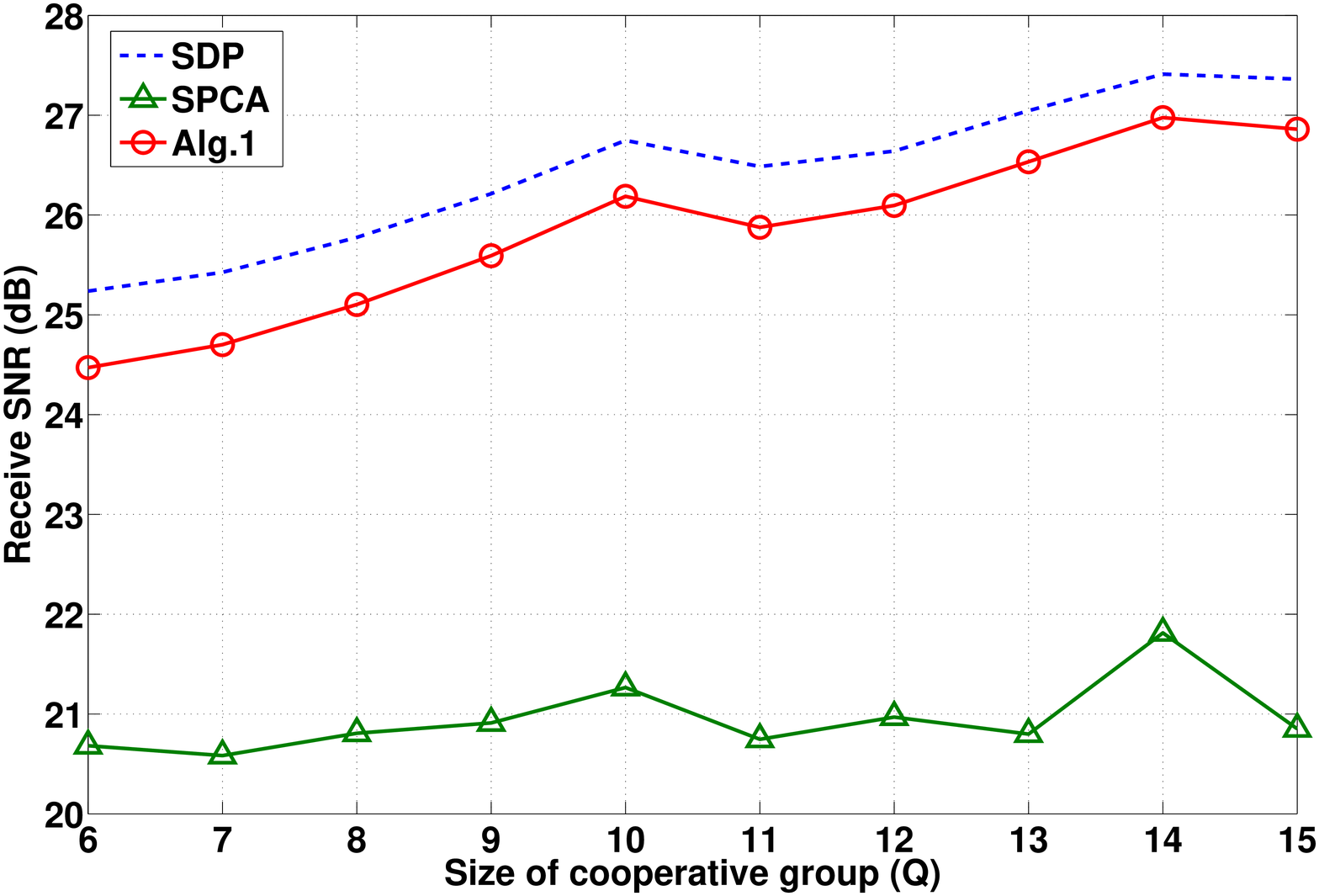} \caption{\footnotesize
Receive SNR for admission control with different cooperative group
sizes. $Q\in[6,~15]$, $M=50$, $P=-10$dBW, $N=5$.
% $500$ independent realizations of channel
}\label{figObjQ} }\vspace*{-0.2cm}
\end{minipage}
    \begin{minipage}[t]{0.48\linewidth}\vspace*{-0.1cm}
    \centering
        {\includegraphics[width=
1\linewidth]{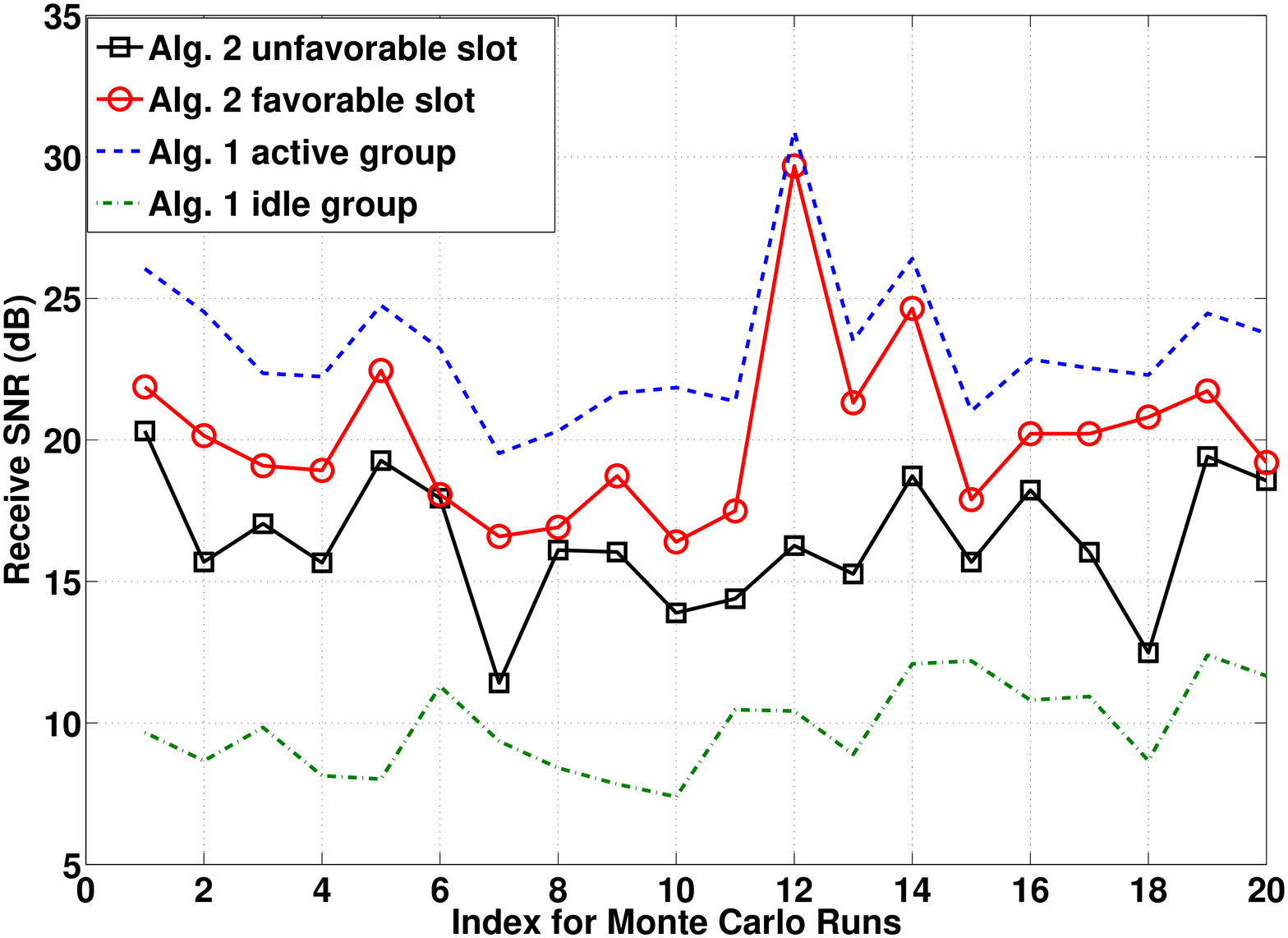}
\caption{\footnotesize Receive SNRs in different slots.
% for the scheduling problem in $20$ Monte Carlo runs of Alg.2. $M=30$,
%$Q=15$, $P=-10$dBW, $N=5$. For Alg.2, {\it favorable} (resp.
%unfavorable) slot refers to the slot with higher (resp. lower)
%receive SNR. For Alg.1, active (resp. idle) group refers to the
%group selected (resp. not selected) by the algorithm.
}\label{figFairness} } \vspace*{-0.2cm}
\end{minipage}\hfill
    \end{figure*}

%
%           \begin{figure*}[ht]
%    \begin{minipage}[t]{0.48\linewidth}\vspace*{-0.4cm}
%    \centering
%        {\includegraphics[width=
%1\linewidth]{Figures-Revision/figCompareFairness.eps}
%\caption{\footnotesize Receive SNRs in different slots for the
%scheduling problem in $20$ Monte Carlo runs of Alg. 2. $M=30$,
%$Q=15$, $P=-10$dBW, $N=5$. For Alg. 2, {\it favorable} (resp.
%unfavorable) slot refers to the slot with higher (resp. lower)
%receive SNR. For Alg. 1, active (resp. idle) group refers to the
%group selected (resp. not selected) by the algorithm.
%}\label{figFairness} } \vspace*{-0.5cm}
%\end{minipage}\hfill
%    \end{figure*}

In Table \ref{tableGeneralChannelScheduling}, we show the
performance of the proposed algorithm in a network of $M=30$
transmit nodes, for the max-min scheduling problem (abbreviated as
Alg.2). The proposed algorithm is compared with the following two
heuristic benchmarks: 1) randomly partition the nodes into two
groups of size $Q$ and $M-Q$, for each of which we solve a PCA
followed by a normalization step (abbreviated as R-PCA); 2) randomly
partition the nodes into two groups of size $Q$ and $M-Q$, and
obtain a solution $\bw_1^*$ and $\bw_2^*$ following steps S4)--S8)
in Table \ref{tableAlgorithm} (abbreviated as R-SDR).
Fig.~\ref{figFairness} illustrates the effectiveness of Alg.2 in
balancing the receive SNRs for different slots. We plot the receive
SNRs computed by Alg.2 in both slots (referred to the {\it
favorable}/{\it unfavorable} slot in the figure) during $20$ Monte
Carlo runs of the algorithm. For comparison, the plots are overlaid
with those obtained by running Alg.1 in the same network. For the
latter case, the receive SNRs are shown for both the active nodes
and the {\it idle nodes}. For the set of idle nodes that are
excluded from the cooperative group, their virtual beamformer is
computed using steps S4)--S8) in Table II, with $\cS$ replaced by
$\cbS$, and with all the matrices $\bU$, $\bE_i$ and $\bfDelta$
computed using $\cbS$. Fig.~\ref{figFairness} shows that for the
admission control formulation, when the idle nodes are offered a
chance of being served, they can only achieve very low SNR. On the
contrary, the scheduling formulation results in more balanced SNRs
for both slots.

In the aforementioned numerical results, the proposed SDR algorithms
can clearly deliver high quality approximate solutions (in terms of
both the averaged and the worse case performance) to the joint
admission control/scheduling and VB problem.

\begin{table*}[htb]
\vspace*{-0.1cm} \caption{Approximation Ratio of the Algorithm for
Scheduling}\vspace*{-0.5cm}
\begin{center}
\small{
\begin{tabular}{|c |c| c|c|c| c|c|} \hline
 &{\bf Alg. 2 Mean} & {\bf Alg. 2 Max}& {\bf R-PCA Mean}& {\bf R-PCA Max} & {\bf R-SDR Mean}& {\bf R-SDR Max} \\
 \hline
 N=5, $Q=\lceil{M}/{4}\rceil$  & 1.88 &4.06 &9.76& 24.84& 3.23 &10.14\\
 \hline
 N=10, $Q=\lceil{M}/{4}\rceil$ & 2.77 &9.55 &10.87& 28.56& 5.68 &15.74\\
 \hline
% \hline
% N=5, $Q=\frac{M}{3}$ & 2.80& 6.61 & 4.55 & 20.66 & 13.37 & 29.32\\
% \hline
% N=10, $Q=\frac{M}{3}$ & 1.36 &2.09 &4.29& 9.75& 2.09 &3.35\\
% \hline
\end{tabular} } \label{tableGeneralChannelScheduling}
\end{center}
\vspace*{-0.2cm}
\end{table*}

\vspace{-0.3cm}
\subsection{Grouping for Relay Selection in Relay Networks}
In this section, we numerically evaluate the performance of the SDR
algorithm for solving the joint relay selection and VB problem
(abbreviated as Alg.3).

For the relay network, the channel covariances are generated as
follows \cite{Nassab08, Dehkordy09}. Assume $f_i$ can be written by
$f_i=\hat{f}_i+\tilde{f}_i$, where $\hat{f}_i$ is the mean of $f_i$
and $\tilde{f}_i$ is a zero-mean random variable. Assume that
$\tilde{f}_i$, $\tilde{f}_j$ are independent for $i\ne j$, and
choose $\hat{f}_i=e^{j\theta_i}/\sqrt{\eta_{f_i}}$ and ${\rm
var}(\tilde{f})=\eta_{f_i}/(1+\eta_{f_i})$, where $\theta_i$ is a
uniform random variable on the interval $[0,\ 2\pi]$, and
$\eta_{f_i}$ is a parameter that determines the level of uncertainty
in the channel coefficient $f_i$. Similarly, let
$g_i=\hat{g}_i+\tilde{g}_i$, with $\hat{g}_i$ and $\tilde{g}_i$
defined similarly as $\hat{f}_i$ and $\tilde{f}_i$. %T%hen we can
%write the $(i,j)$-th entry of the matrices $\bF$ and $\bS$ as
%follows{\small
%\begin{align}
%\bS[i,j]&=P_0\left(\hat{f}_i\bar{\hat{f}}_j+\frac{\eta_{f_{i}}}{1+\eta_{f_i}}\delta_{ij}\right)
%\left(\hat{g}_i\bar{\hat{g}}_j+\frac{\eta_{g_{i}}}{1+\eta_{g_i}}\delta_{ij}\right),
%\
%\bF[i,j]=\sigma^2_{\nu}\left(\hat{g}_i\bar{\hat{g}}_j+\frac{\eta_{g_{i}}}{1+\eta_{g_i}}\delta_{ij}\right).
%\end{align}}
The justification of this channel model can be found in
\cite{Nassab08,Dehkordy09}. In our simulations, $\{\eta_{f_i}\}$ and
$\{\eta_{g_i}\}$ are generated randomly from $-10$ dB to $10$ dB,
accounting for different uncertainty levels for different nodes.

   \begin{figure*}[ht]
    \begin{minipage}[t]{0.48\linewidth}\vspace*{-0.2cm}
    \centering
     {\includegraphics[width=
1\linewidth]{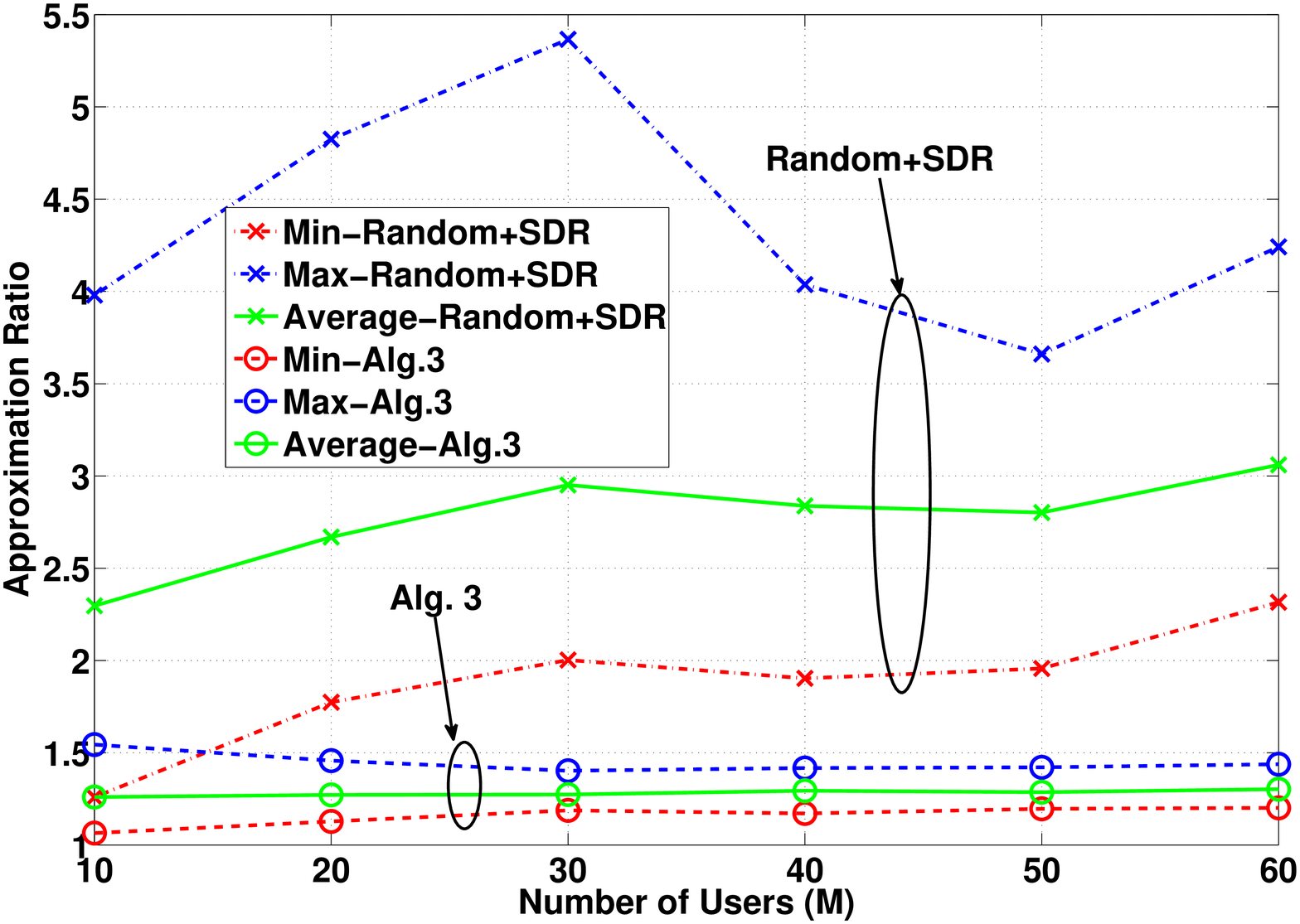}
\caption{\footnotesize Approximation ratio with different network
sizes. $M\in[10, 20, 30, 40, 50, 60]$, $Q=(\frac{M}{2})$, $P_{\rm
tot}=10{\rm dBW}$, $100$ independent realizations of
channel.}\label{figfigRatioRelayChangeMRelay1} \vspace*{-0.2cm}}
\end{minipage}\hfill
    \begin{minipage}[t]{0.48\linewidth}\vspace*{-0.2cm}
    \centering
    {\includegraphics[width=
1\linewidth]{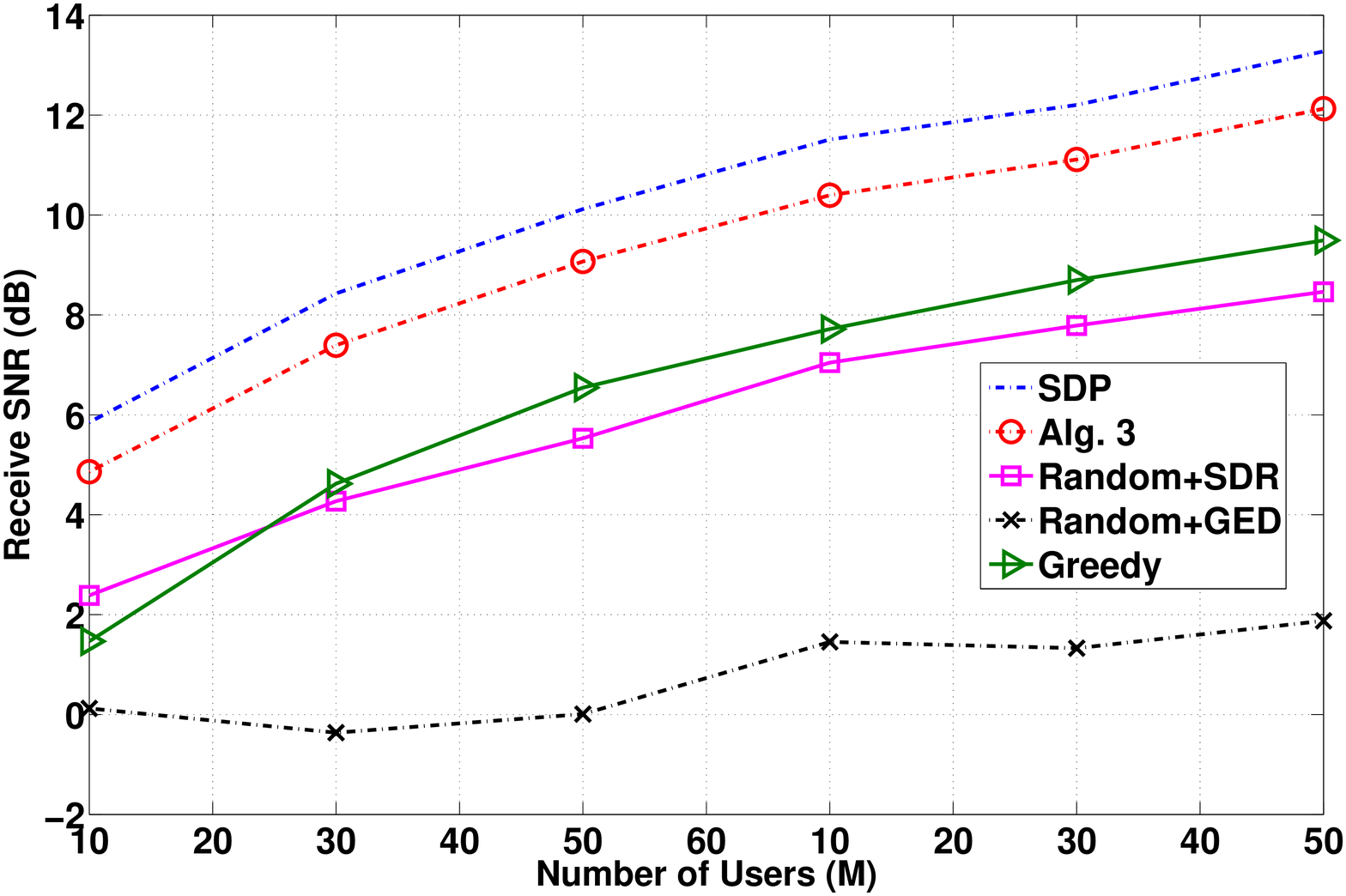}
\caption{\footnotesize Averaged received SNR for different
algorithms. $M\in[10, 20, 30, 40, 50, 60]$, $Q=(\frac{M}{2})$,
$P_{\rm tot}=10{\rm dBW}$, $100$ independent realizations of
channel.}\label{figSNRRelayChangeM}\vspace*{-0.2cm} }
\end{minipage}
    \end{figure*}

Our SDR algorithm is compared against the following three
algorithms:

\begin{itemize}
\item  {\it Random-GED}: This is a variant of the generalized eigenvalue decomposition (GED) based
algorithm proposed in \cite[Section IV-B]{Nassab08}. In its original
form, all the relays are utilized, and the algorithm computes an
approximate solution of the max-SNR problem by performing a PCA for
the matrix $\bS^{-1}\bF$, followed by a normalization step to ensure
the individual power constraints. To incorporate the selection of
relays, we first {\it randomly} select $Q$ out of $M$ relays, and
then perform the PCA and normalization steps.

\item {\it Random-SDR}: In this algorithm, we first {\it randomly} select $Q$ out of
$M$ relays in the network, and then perform the two-step SDR
algorithm with the fixed node grouping (see the algorithm
description in Section \ref{secRelay}).

\item {\it Greedy algorithm}: This algorithm largely
follows from the one proposed in \cite{Jing09}, in which nodes are
added to the cooperation set successively and greedily as long as it
can improve the received SNR level, or  the required group size is
less than $Q$.
\end{itemize}

In Fig.
\ref{figfigRatioRelayChangeMRelay1}--\ref{figSNRRelayChangeM}, the
approximation ratio and the achieved receive SNR of different
algorithms are plotted against the size of the network. Fig.\
\ref{figRatioRelayChangeSNR}--\ref{figSNRRelayChangeSNR} show the
performance of the algorithms when we increase the relays'
transmission powers. In these two sets of figures we set $P_{0}=0$
dBW, $Q=\frac{M}{2}$, and set the total transmit power of the relays
$P_{\rm tot}=10$ dBW. The curves labeled by ``SDP" in
Fig.~\ref{figSNRRelayChangeM} and Fig.~\ref{figSNRRelayChangeSNR}
represent the upperbounds of the achievable receive SNR which are
obtained by solving the relaxed problem \eqref{problemSDPRelay}. We
see that the proposed SDR algorithm is close to the optimal solution
across all power levels and   all network sizes, while the other
three algorithms perform notably worse in these experiments.
Furthermore, the proposed algorithm has significantly better ``worst
case" performance than the Random-SDR algorithm (see
Fig.~\ref{figfigRatioRelayChangeMRelay1} and
Fig.~\ref{figRatioRelayChangeSNR}).

   \begin{figure*}[ht]
    \begin{minipage}[t]{0.48\linewidth}\vspace*{-0.2cm}
    \centering
     {\includegraphics[width=
1\linewidth]{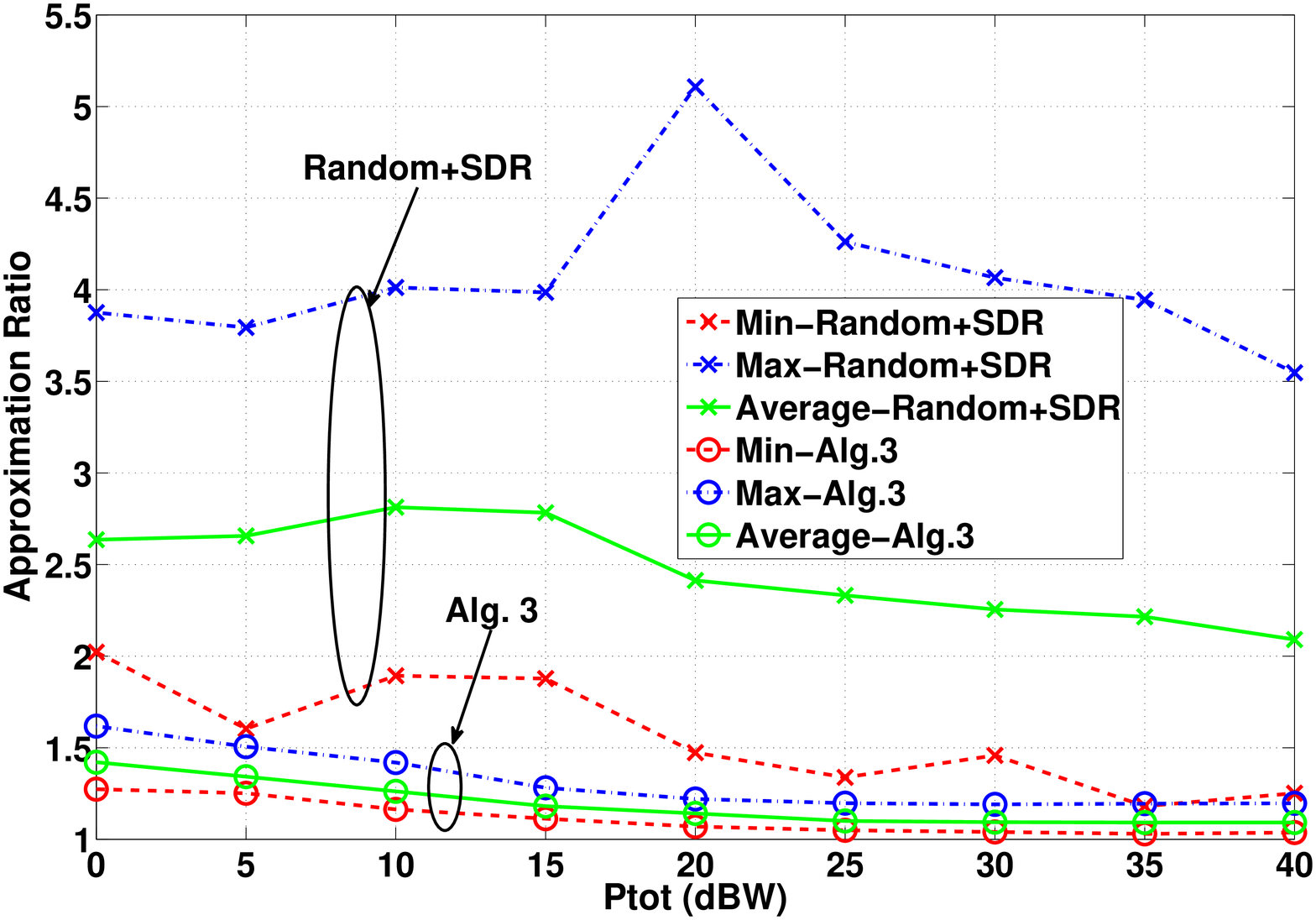}
\caption{\footnotesize Approximation ratio for different algorithms.
$M=30$, $Q=(\frac{M}{2})$, $P_{\rm tot}=[0, 5, 10, 15, 20, 25, 30,
35, 40]$ dBW, $100$ independent realizations of
channel.}\label{figRatioRelayChangeSNR} \vspace*{-0.2cm}}
\end{minipage}\hfill
    \begin{minipage}[t]{0.48\linewidth}\vspace*{-0.2cm}
    \centering
    {\includegraphics[width=
1\linewidth]{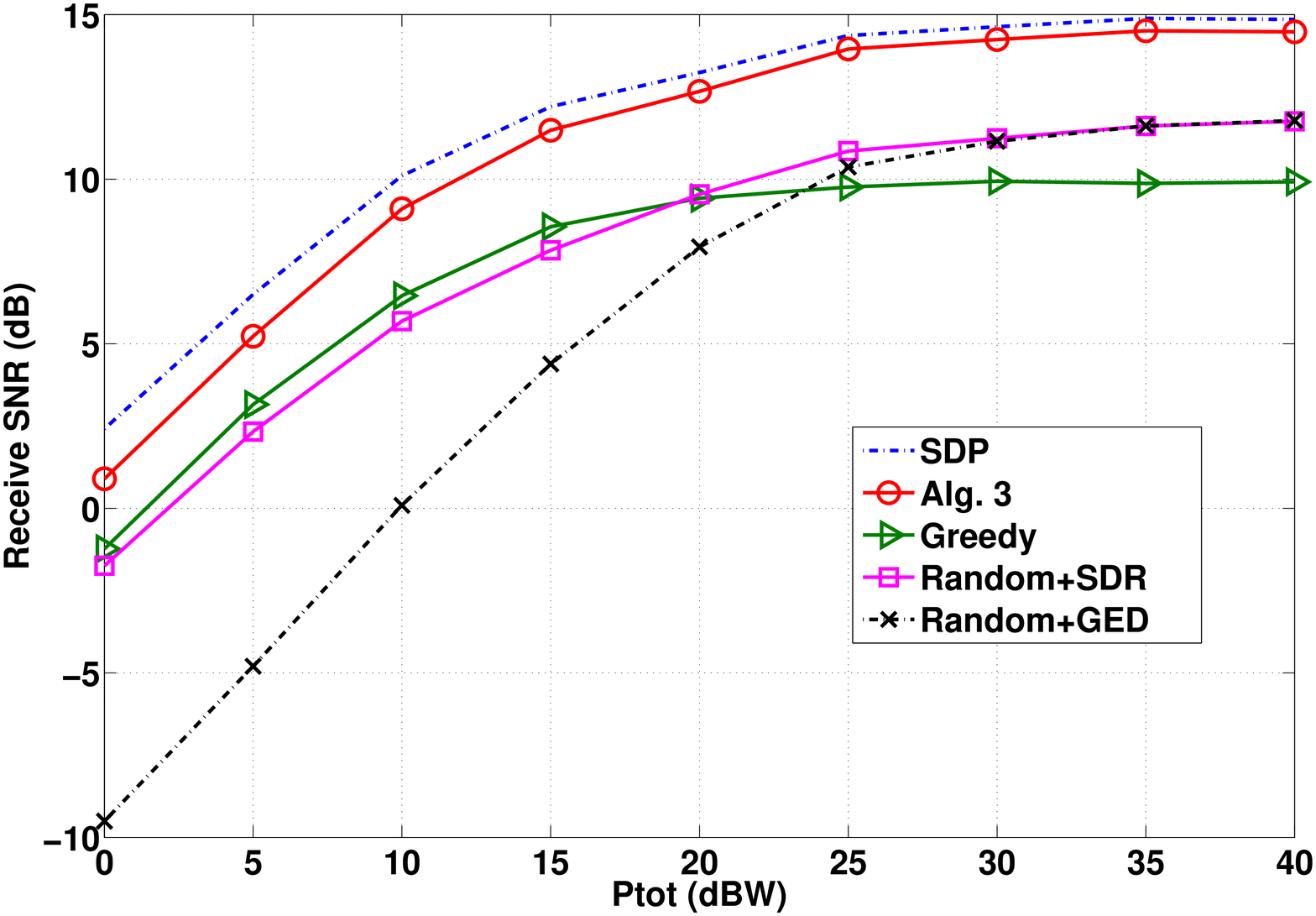}
\caption{\footnotesize Averaged receive SNR for different
algorithms. $M=30$, $Q=(\frac{M}{2})$, $P_{\rm tot}=[0, 5, 10, 15,
20, 25, 30, 35, 40]$ dBW, $100$ independent realizations of
channel.}\label{figSNRRelayChangeSNR} \vspace*{-0.2cm}}
\end{minipage}
    \end{figure*}

\section{Final Remarks}\label{secConclusion}

This paper addresses the joint node grouping and the virtual
beamformer design problem  under
different network setup. % We focus on scenarios in which either
%multiple transmitters cooperatively transmit to a receiver, or a
%single transmitter transmits to the receiver with the help of a set
%of cooperative relays.
We formulate this problem as a cardinality constrained quadratic
optimization problem. Unfortunately,  the resulting optimization
problem is computationally intractable as shown in
Table~\ref{tableSummary} below.

\begin{table*}[htb]
\vspace*{-0.1cm} \caption{Summary of Complexity
Results}\vspace*{-0.3cm}
\begin{center}
\small{
\begin{tabular}{|c |c| c|c|c| c|c|} \hline
&\multicolumn{3}{|c|} {\bf Admission Control} & \multicolumn{3}{|c|}
{\bf Scheduling
(2-Slot)}\\
 \hline
 {\bf Channel} $\bR$ &{\bf  General} & {\bf Diagonal } & {\bf Rank 1}& {\bf General }& {\bf Diagonal}& {\bf Rank 1}\\
 \hline
 \hline
{\bf VMIMO} & NP-Hard & Ploy. Time Solvable & Ploy. Time Solvable &  NP-Hard & NP-Hard & NP-Hard \\
\hline
{\bf VMIMO-Relay} & NP-Hard & Ploy. Time Solvable & Unknown &  NP-Hard & NP-Hard & Unknown\\
\hline
\end{tabular} } \label{tableSummary}
\end{center}
\vspace*{-0.5cm}
\end{table*}

Given the NP-hardness of the problem, we develop several novel
approximation algorithms based on the semidefinite programming
relaxation (SDR) technique. The quality of the approximate solutions
is evaluated both theoretically and via numerical simulation.
Somewhat surprisingly, we are able to prove that the resulting SDR
algorithm for admission control has a guaranteed approximation
performance in terms of the gap to global optimality, regardless of
channel realizations. Such theoretical analysis lends strong support
to the practical performance of the proposed algorithms. Indeed,
their effectiveness has been confirmed in our extensive numerical
experiments.

In closing, we suggest several directions of future research. First,
it will be interesting to extend the theoretical analysis for the
approximation bounds to the scheduling and the relay selection
problems. Second, we expect that our SDR approach can be applied and
extended to some new application scenarios. For example, in the
admission control problem, to mitigate the interference to a
neighboring co-existing system, one could further impose a {\it
total interference} constraint of the form
$\mathbb{E}[|\bw^H\mathbf{g}|^2]\le I$ to the selected group of
nodes, where $\mathbf{g}$ represents the channel between the
transmit nodes and the BS of the neighboring system. %(e.g., see
%\cite{zhang:optimal11} for an application of this model
%in the cognitive ratio network). %By using the similar algorithm as
%proposed in Section \ref{secAdmissionControl}, it is possible to
%show that the obtained solution achieves an approximation ratio of
%$\alpha_4=\frac{8M\lambda_{1}(\bR)\lambda_1(\bG)}{Q\lambda_{M}(\bR)\lambda_M(\bG)}\ln(5(Q+1))$
%with $\bG\triangleq\mathbb{E}[\bg\bg^H]$.
%Another interesting direction will be to extend the proposed
%algorithm to uplink networks with multiple groups of transmit nodes
%interfering with each other, or to relay networks with multiple
%pairs of transceivers.
Additionally, the proposed SDR approach may prove to be useful in
designing future multi-cell cellular systems, where not all, but a
subset of BSs can jointly transmit and receive signals for a given
mobile user (see the recent work \cite{hong12sparse, kim12,
marsch11clustering} on the partial CoMP technique).

%\section{Appendix}
\appendix
\subsection{Proof of  Claims~\ref{remarkObjective} and
\ref{remarkFeasibility}}\label{appendixRemarkFeasibility} { Claim
\ref{remarkObjective} is due to the following chain of
equalities{\small
\begin{align}
v_1^{\rm CP}(\bw^{(\ell)})&=(\bw^{(\ell)})^H\bR\bw^{(\ell)}
=\frac{1}{(t^{(\ell)})^2}(\bfxi^{(\ell)})^T\bU^H\bfDelta\bR[\cS]\bfDelta^H\bU\bfxi^{(\ell)}\nonumber\\
&\stackrel{(a)}=\frac{1}{(t^{(\ell)})^2}(\bfxi^{(\ell)})^T\bfSigma\bfxi^{(\ell)}
\stackrel{(b)}=\frac{1}{(t^{(\ell)})^2}\trace[\bfSigma]\stackrel{(c)}
=\frac{1}{(t^{(\ell)})^2}\trace[\bR[\cS]\bY^*]
\end{align}}
\hspace{-0.2cm}where $(a)$ and $(c)$ are from \eqref{eqSigma}; $(b)$
is from the fact that $\bfSigma$ is diagonal, and the diagonal
elements of $\bfxi^{(\ell)}(\bfxi^{(\ell)})^T$ are all $1$.}

To argue the feasibility condition claimed in Claim
\ref{remarkFeasibility}, we first show that the constraint
$(\bx^{(\ell)})^H\bD_i\bx^{(\ell)}\le 1$ is always satisfied for all
$i\in\cM$. To see this, we consider the following two cases (we omit
the superscript $(\ell)$ for notational simplicity).

{\bf Case i):} If $i\in\cS$, then we have{\small
\begin{align}
\bx^H\bD_i\bx&=(\bw)^H\bC_{i,1}\bw+(\bx_0)^H\bC_{i,0}\bx_0\nonumber\\
%&=(\bfOmega(\cS)\bw)^H\bC_{i,1}\bfOmega(\cS)\bw+(\bx_0)^H\bC_{i,0}\bx_0\nonumber\\
&=(\bw[\cS])^H\bC_{i,1}[\cS]\bw[\cS]+(\bx_0)^H\bC_{i,0}\bx_0\nonumber\\
&=\frac{1}{t^2}\bfxi^T\bU^H\bfDelta\bC_{i,1}[\cS]\bfDelta^H\bU\bfxi
+\frac{1}{2}(1-\bx_0[i]\bx_0[M+1])\nonumber\\
&\stackrel{(a)}=\frac{1}{t^2}\bfxi^T\bU^H\bE_i\bU\bfxi\stackrel{(b)}\le
1\nonumber
\end{align}}
\hspace{-0.2cm}where $\rm (a)$ is from the definition of $\bE_i$ and
from the fact that $\bx_0[M+1]=1$ and for all $i\in\cS$,
$\bx_0[i]=1$; $(b)$ is from the definition of $t$ in Step S6) of the
algorithm.

{\bf Case ii):} If $i\in\bar{\cS}$, then from Step S7) of the
algorithm, we have $w_i=0$. Using the fact that
$\bC_{i,1}=\be_i\be^T_i\frac{1}{P}$, we have: {\small $
\bx^H\bD_i\bx=(\bw)^H\bC_{i,1}\bw=|w_i|^2\frac{1}{P}=0.$} It is
straightforward to see that the cardinality constraint $\bx^H\bB\bx=
4Q$ is satisfied for each solution $\bw^{(\ell)}$, as this
constraint mandates that $M-Q$ nodes do not transmit.

In summary, we conclude that $\bx$ is feasible for the problem
\eqref{problemReformulate}. Due to the equivalence between problems
\eqref{problemReformulate} and \eqref{problemCCQP}, $\bw$ is also
feasible for the latter problem.

\subsection{Proof of Theorem
\ref{thmAppRatio}}\label{appProofTheorem1}
\begin{proof}
%Due to the NP-hardness of solving problem
%\eqref{problemReformulate}, $v_1^{\rm CP}$ is generally difficult to
%compute. Alternatively, in the following we will find a constant
%$\alpha_1$ that satisfies {\small $ v_1^{\rm CP}(\bw^*)\ge
%\frac{1}{\alpha_1}v_1^{\rm SDP}\label{eqAppInequality2}$}. Note that
%the latter inequality implies the former, as $v_1^{\rm SDP}\ge
%v_1^{\rm CP}$.
Observe that it is sufficient to show that with high probability
{\small $ v_1^{\rm CP}(\bw^{(\ell)})\ge \frac{1}{\alpha_1}v_1^{\rm
SDP}$}. Below we first show that for any $\ell=1,\cdots, L$, there
exists a finite constant $\alpha_1>1$ satisfying: {\small $
\Prob\left(v_1^{\rm CP}(\bw^{(\ell)})\ge \frac{1}{\alpha_1}v_1^{\rm
SDP}\right)\ge\delta>0 $}, or equivalently,{\small
\begin{align}
\Prob\left(\frac{1}{(t^{(\ell)})^2}\trace[\bR[\cS]\bY^{*}]\ge
\frac{1}{\alpha_1}\trace[\bR\bX^{*}_1]\right)\ge\delta>0.\label{eqDesiredBound}
\end{align}}
\hspace{-0.2cm}That is, with positive probability, the solution
$\bw^{(\ell)}$ generated by the proposed algorithm is at least as
good as $\frac{1}{\alpha_1}$ fraction of $v^{\rm SDP}$. If this is
indeed true, it follows that the probability that the solution
$\bw^*$ achieves an objective that is at least $\frac{1}{\alpha_1}$
of $v_1^{\rm SDP}$ is given by{\small
\begin{align}
%&\Prob\left(\max_{\ell}v_1^{\rm CP}(\bw^{(\ell)})\ge
%\frac{1}{\alpha_1}v_1^{\rm CP}\right)\nonumber\\
%&\ge
&\Prob\left(\max_{\ell}v_1^{\rm CP}(\bw^{(\ell)})\ge
\frac{1}{\alpha_1}v_1^{\rm
SDP}\right)\nonumber\\
&=1-\Prob\left(\max_{\ell}v_1^{\rm CP}(\bw^{(\ell)})<
\frac{1}{\alpha_1}v_1^{\rm SDP}\right)\nonumber\\
&=1-\prod_{l=1}^{L}\Prob\left(v_1^{\rm CP}(\bw^{(\ell)})<
\frac{1}{\alpha_1}v_1^{\rm SDP}\right)\ge 1-(1-\delta)^L\nonumber.
\end{align}}
\hspace{-0.2cm}Clearly, this probability approaches $1$
exponentially as $L$ becomes large.

Below we will show \eqref{eqDesiredBound}. The analysis is divided
into the following three steps.

{\bf Step 1)}: Let $\beta>0$ be a constant. We first show that when
$\frac{\beta}{\alpha_1}$ is small enough, with zero probability the
event $\trace[\bR[\cS]\bY^{*}]\le
\frac{\beta}{\alpha_1}\trace[\bR\bX_1^{*}]$ happens.

To this end, we first lower bound $\trace[\bR[\cS]\bY^{*}]$. We have
the following two cases.

{\bf Case i)} Suppose $\cS=\cT=\{j:\bX^{*}_0[j,M+1]\ge
\underline{\bx}_Q\}$, then we have{\small
\begin{align}
v_2^{\rm SDP}=\trace[\bR[\cS]\bY^{*}]\ge
\trace[\bR[\cS]\bX^{*}_1[\cS]]\nonumber %=
%\trace[\bfOmega(\cS)\bR\bfOmega(\cS)\bfOmega(\cS)\bX^{*}_1\bfOmega(\cS)]\stackrel{\rm
%(ii)} =\trace[\bR\bfOmega(\cS)\bX^{*}_1\bfOmega(\cS)]
\end{align}}
\hspace{-0.2cm}where the inequality is from that fact
$\bX^{*}_1[\cS]$ is a feasible solution to the problem
\eqref{problemSDP-QCQP}, and that $\bY^*$ is the optimal solution
for that problem. From Step S2) of the algorithm, we must have
{\small\begin{align} \trace[\bR[\cS]\bY^{*}]\ge
\frac{QP}{M}\trace[\bR].
\end{align}}
{\bf Case ii)} Suppose $\cS=\{j:\bR[i,i]\ge \underline{\br}_Q\}$.
Then we have{\small
\begin{align}
\trace\left[\bR[\cS]\bY^{*}\right]\stackrel{\rm (i)}\ge
P\trace[\bR[\cS]\bI_Q]\stackrel{\rm (ii)}\ge
\frac{QP}{M}\trace[\bR]\label{eqLowerBoundHX}
\end{align}}
\hspace{-0.2cm}where ${\rm (i)}$ is again from that fact $P\bI_{Q}$
is a feasible solution to the problem \eqref{problemSDP-QCQP}, and
that $\bY^*$ is the optimal solution for that problem; ${\rm (ii)}$
is from the fact that each $i\in\cS$ is among the largest $Q$
elements in the set $\{\bR[i,i]\}_{i=1}^{Q}$, which leads to
$\sum_{i\in\cS}\bR[i,i]\ge\frac{Q}{M}\sum_{i=1}^{M}\bR[i,i]=\frac{Q}{M}\trace[\bR]$.

%by{\small
%\begin{align}
%\trace[\bfOmega(\cS)\bR\bfOmega(\cS)\bX^{(2)*}] &\stackrel{(a)}\ge
%\trace[\bfOmega(\cS)\bX^{(2)*}\bfOmega(\cS)]\lambda_{M}(\bR)\nonumber\\
%&=\sum_{j\in\cS}\bX^{(2)*}[j,j]\lambda_{M}(\bR)\nonumber\\
%&\stackrel{(b)}=P\sum_{j\in\cS}\left(\frac{1}{2}+\frac{1}{2}\bX^{(1)*}[j,M+1]\right)\lambda_{M}(\bR)\nonumber\\
%&=
%\frac{P}{2}\left(Q+\sum_{j\in\cS}\bX^{(1)*}[j,M+1]\right)\lambda_{M}(\bR)\nonumber\\
%&\stackrel{(c)}\ge
%\frac{P}{2}\left(Q+(2Q-M)\frac{Q}{M}\right)\lambda_{M}(\bR)\nonumber\\
%&= P\frac{Q^2}{M}\lambda_{M}(\bR)\label{eqLowerBoundHX}
%\end{align}}
%where in $(a)$ we used the inequality in \eqref{eqLowerBoundTrace};
%in $(b)$ we have used the result in Remark \ref{remarkTight}; in
%$(c)$ we have used \eqref{eqSumX} and the fact that for all
%$j\in\cS$, $\bX^{(1)*}[j,M+1]$ is larger than at least $M-Q$
%elements in the set $\{\bX^{(1)*}[i,M+1]\}_{i=1}^{M}$; the last
%equality is from the fact that $|\cS|=Q$.

We then upper bound $\trace[\bR\bX^{*}_1]$. By a trace inequality
for the product of two semi-definite matrices, we have that
\cite{Marshall10} {\small
\begin{align}
\trace[\bR\bX^{*}_1]&\le\sum_{i=1}^{M}\lambda_{i}\left(\bX^{*}_1\right)\lambda_i(\bR)
\le
\lambda_1(\bR)\sum_{i=1}^{M}\lambda_{i}\left(\bX^{*}_1\right)=\lambda_1(\bR)\trace[\bX^{*}_1]\label{eqUpperBoundTrace}.
\end{align}}
\hspace{-0.2cm}Utilizing this result, we have {\small
\begin{align}
\trace[\bR\bX_1^{*}]&\le
\lambda_{1}(\bR)\trace[\bX_1^{*}]\stackrel{\rm
(i)}=P\lambda_{1}(\bR)
\sum_{i=1}^{M}\left(\frac{1}{2}+\frac{1}{2}\bX^{*}_0[i,M+1]\right)\nonumber\\
&=P\lambda_{1}(\bR)\left(\frac{M}{2}+\frac{1}{2}\sum_{i=1}^{M}\bX^{*}_0[i,M+1]\right)\nonumber\\
&\stackrel{\rm
(ii)}=P\lambda_{1}(\bR)\left(\frac{2Q-M}{2}+\frac{M}{2}\right)=QP\lambda_{1}(\bR)\label{eqUpperBoundHX}
\end{align}}
\hspace{-0.2cm}where ${\rm (i)}$ is from the tightness of the first
set of constraints of the problem \eqref{problemSDP} (cf.
\eqref{eqPowerConstraintTight}) and ${\rm (ii)}$ is due to Claim
\ref{remarkTight} and Claim \ref{remarkXLowerBound}. Comparing
\eqref{eqLowerBoundHX} and \eqref{eqUpperBoundHX}, we see that
choosing
$\frac{\beta}{\alpha_1}\le\frac{\trace[\bR]}{M\lambda_{1}(\bR)}$
ensures{\small
\[\Prob\left(\trace[\bR[\cS]\bY^*]\le
\frac{\beta}{\alpha_1}\trace[\bR\bX^{*}_1]\right)=0.\]} {\bf Step
2)}: For fixed $\alpha_1$, we bound the probability that
$\Prob\left(\frac{1}{(t^{(\ell)})^2}\le \frac{1}{\beta}\right)$ as
follows (omitting $(\ell)$ for simplicity, and defining
$\bhE_i\triangleq\bU^T\bE_i\bU$){\small
\begin{align}
\Prob\left(\frac{1}{t^2}\le \frac{1}{\beta}\right)&=\Prob(t^2\ge
\beta)=\Prob\left(\max_{i\in\cS}(\bfxi)^T\bhE_i\bfxi\ge\beta\right)\nonumber\\
&\le \sum_{i\in\cS}\Prob\left((\bfxi)^T\bhE_i\bfxi\ge\beta\right)<
4Q \mu\exp\left(-\frac{\beta}{8}\right)\nonumber
\end{align}}
where $\mu=\min[Q, \max_{i}\rank(\bhE_i)]$. The last inequality is
obtained by slightly generalizing the existing result in
\cite[Proposition 1]{Nemirovski_Roos_Terlaky_1999} \footnote{The
cited result can be generalized from the real case to the complex
case using the complex form of the large deviation results from
Bernstein, see e.g., \cite[Theorem 4]{greenLargeDeviation}, and the
recent results by Zhang and So \cite{zhang:optimal11}.}. To
explicitly compute the value for $\mu$, note that by definition, we
have: {\small $ \bhE_i=\bU^H\bE_i\bU
=\bU^H\bfDelta\bC_{i,1}[\cS]\bfDelta^H\bU\nonumber$}. As a result,
$\rank(\bhE_i)\le 1$ as by definition $\rank(\bC_{i,1})=1$, and any
one of its principal submatrices must have rank at most $1$. We
conclude that $0\le\mu\le1$.

{\bf Step 3)}: Utilizing the above result, choose{\small
\[
\beta=8\ln(5Q), \
\alpha_1=\frac{8M\lambda_{1}(\bR)}{\trace[\bR]}\ln(5Q),\]} we can
bound the left hand side of \eqref{eqDesiredBound} as follows{\small
\begin{align}
&\Prob\left(\frac{1}{t^2}\trace[\bR[\cS]\bY^{*}]\ge
\frac{1}{\alpha_1}\trace[\bR\bX^{*}_1]\right)\nonumber\\
&\ge\Prob\left(\trace[\bR[\cS]\bY^{*}]\ge
\frac{\beta}{\alpha_1}\trace[\bR\bX^{*}_1], \frac{1}{t^2}\ge
\frac{1}{\beta}\right)\nonumber\\
&\ge 1-\Prob\left(\trace[\bR[\cS]\bY^{*}]\le
\frac{\beta}{\alpha_1}\trace[\bR\bX^{*}_1]\right)-\Prob\left(\frac{1}{t^2}\le
\frac{1}{\beta}\right)\nonumber\\
&=1-\Prob\left(\frac{1}{t^2}\le
\frac{1}{\beta}\right)>1-4Q\exp\left(-\ln(5Q)\right)=1-4Q/(5Q)=\frac{1}{5}.\nonumber
\end{align}}
In conclusion, the final approximation ratio is given by{\small
\begin{align}
\alpha_1=\frac{8M\lambda_{1}(\bR)}{\trace[\bR]}\ln(5Q)
=\frac{8M\lambda_{1}(\bR)}{\sum_{i=1}^{M}\lambda_{i}(\bR)}\ln(5Q)\le
8M\ln(5Q).\nonumber
\end{align}}
\hspace{-0.2cm}This completes the proof.
\end{proof}

\subsection{Proof of Proposition
\ref{thmNPHardRelay}}\label{appNPhardRelay}

\begin{proof}
It suffices to show that solving \eqref{problemRelay} is NP-hard
even when the active relays are known. In other words, it is
sufficient to show that solving the problem
\begin{align}
\max_{\bw}&\quad\frac{\bw^H\bS\bw}{\sigma^2_n+\bw^H\bF\bw}\label{temp1}\\
{\rm s. t.}&\quad |w_i|^2\left(P_0
\mathbb{E}[|f_i|^2]+\sigma_{\nu}^2\right)\le  P, \ i=1,\cdots, M,
\nonumber
\end{align}
is NP-hard. We prove by using a polynomial time reduction from the
integer equal partitioning problem. To this end, let us consider the
following system parameters:
\[
P_0 = 1, \quad P = 2,\quad \sigma_v^2 = 1,\quad f_i = 1, \forall i.
\]
Then the problem \eqref{temp1} can equivalently be written as{\small
\begin{align}
\max_{\bw}&\quad\frac{\bw^H \mathbb{E}[\bg\bg^H]\bw}{\sigma^2_n+\bw^H {\rm diag}\left(\mathbb{E}[\bg\bg^H]\right)\bw}\label{temp2}\\
{\rm s. t.}&\quad |w_i|^2\le 1, \ i=1,\cdots, M. \nonumber
\end{align}}
{\hspace{-0.2cm}}Let $t$ denote the objective value of the above
problem. In order to check the achievability of a particular value
$t$, we need to check the feasibility of the following set of
inequalities:{\small
\begin{equation}
\label{temp3}
\begin{split}
&\bw^H \left(\mathbb{E}[\bg\bg^H]- t {\rm diag}\left(\mathbb{E}[\bg\bg^H]\right) \right)\bw \geq t \sigma_n^2 \\
&|w_i|^2 \leq 1, \quad \forall i.
\end{split}
\end{equation}}
\hspace{-0.2cm}Therefore, it suffices to show the NP-hardness of
checking the achievability of \eqref{temp3}. To this end, we first
claim that for given a positive definite matrix $\mathbf{A}$, the
following set is non-empty
\[
\mathcal{T} = \left\{(t,\mathbf{X})\mid \mathbf{X} - t {\rm
diag}(\mathbf{X}) = \mathbf{A}, \mathbf{A} \succ 0, t >0, t \in
\mathbb{R}, \mathbf{X}\in \mathbb{R}^{M\times M}\right\}
\]
To justify this claim, consider the mapping $\phi_{\mathbf{A}}(t):
\mathbb{R}^+ \mapsto \mathbb{R}^{M\times M}$, where
$\phi_\mathbf{A}(t) = \mathbf{B}$ with
\[
\mathbf{B}_{ij} = \left\{\begin{array}{cc}
\mathbf{A}_{ij} & i \neq j\\
\frac{\mathbf{A}_{ii}}{1-t} & i = j\\
\end{array}
\right..
\]
Note that $\phi_{\mathbf{A}}(t)$ is continuous over $[0,1)$ and
$\phi_{\mathbf{A}}(0) = \mathbf{A}$. Therefore, for small enough
$t'>0$, we have $\phi_{\mathbf{A}}(t') = \mathbf{B}'\succ 0$. In
other words, $(t',\mathbf{B}') \in \mathcal{T}$. The above claim
implies that there exists a positive definite matrix $\mathbf{R} =
\mathbb{E} [\bg\bg^H]$  and a positive scalar $t$ such that
\[
\mathbb{E} [\bg\bg^H] - t {\rm diag} \left(\mathbb{E}
[\bg\bg^H]\right) = 2C\bI - \mathbf{c}\mathbf{c}^T,
\]
with $C = \|\mathbf{c}\|^2$. Therefore, using a similar argument to
the one in the proof of Proposition~\ref{propositionNPHardReal}
completes the proof.
\end{proof}

 \vspace{-0.0cm}
\bibliographystyle{IEEEbib}
{\footnotesize
\bibliography{ref}
}

\begin{IEEEbiography}[{\includegraphics[width=1in,height=1.25in,clip,keepaspectratio]
{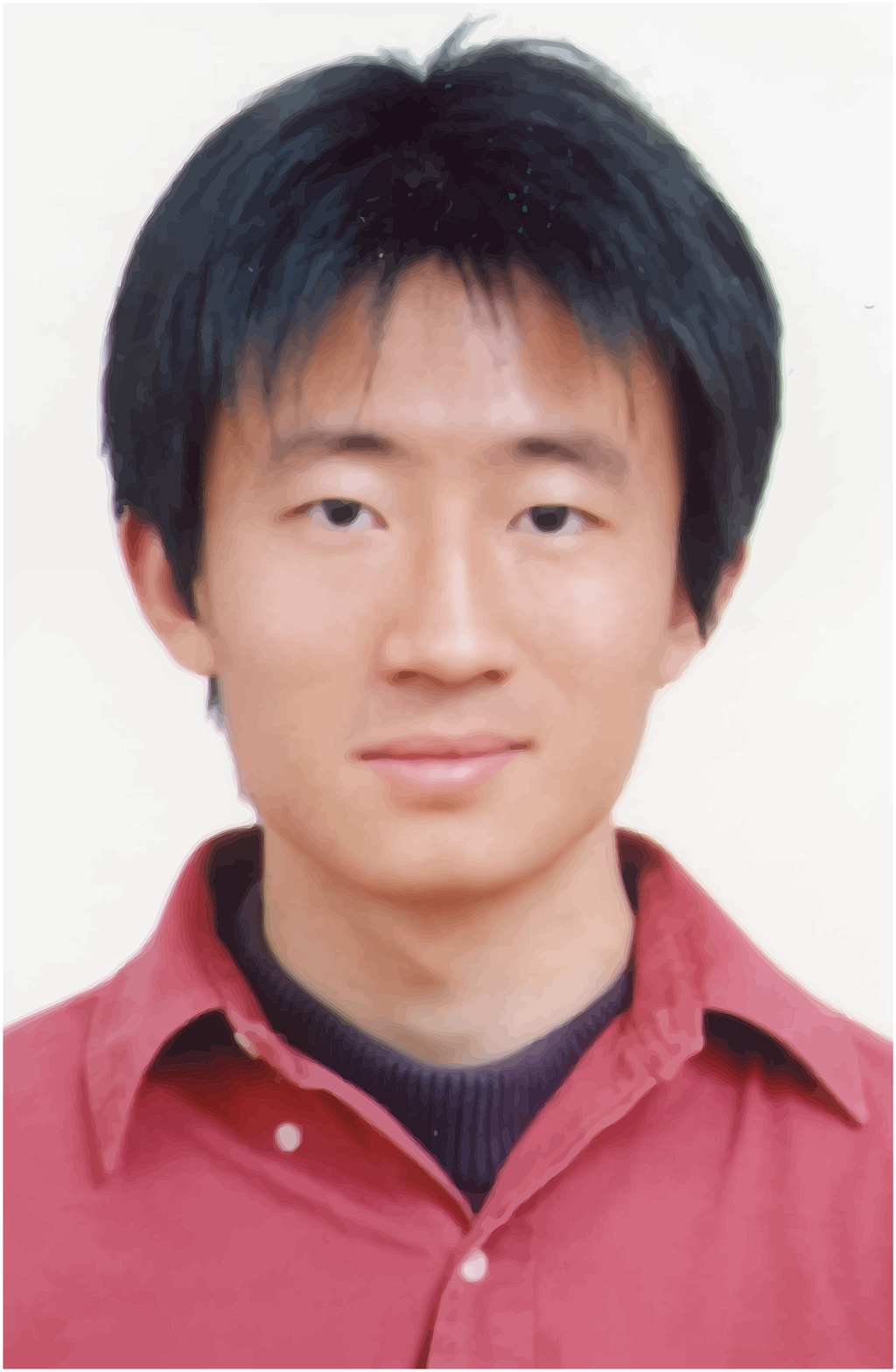}}]{Mingyi Hong} received his B.E. degree in
Communications Engineering from Zhejiang University, China, in 2005,
and his M.S. degree in Electrical Engineering from Stony Brook
University in 2007, and Ph.D. degree in Systems Engineering from
University of Virginia in 2011. He is currently a post-doctoral
fellow with the Department of Electrical and Computer Engineering,
University of Minnesota. His research interests are primarily in the
fields of statistical signal processing, wireless communications,
and optimization theory.
\end{IEEEbiography}

\begin{IEEEbiography}[{\includegraphics[width=1in,height=1.25in,clip,keepaspectratio]
{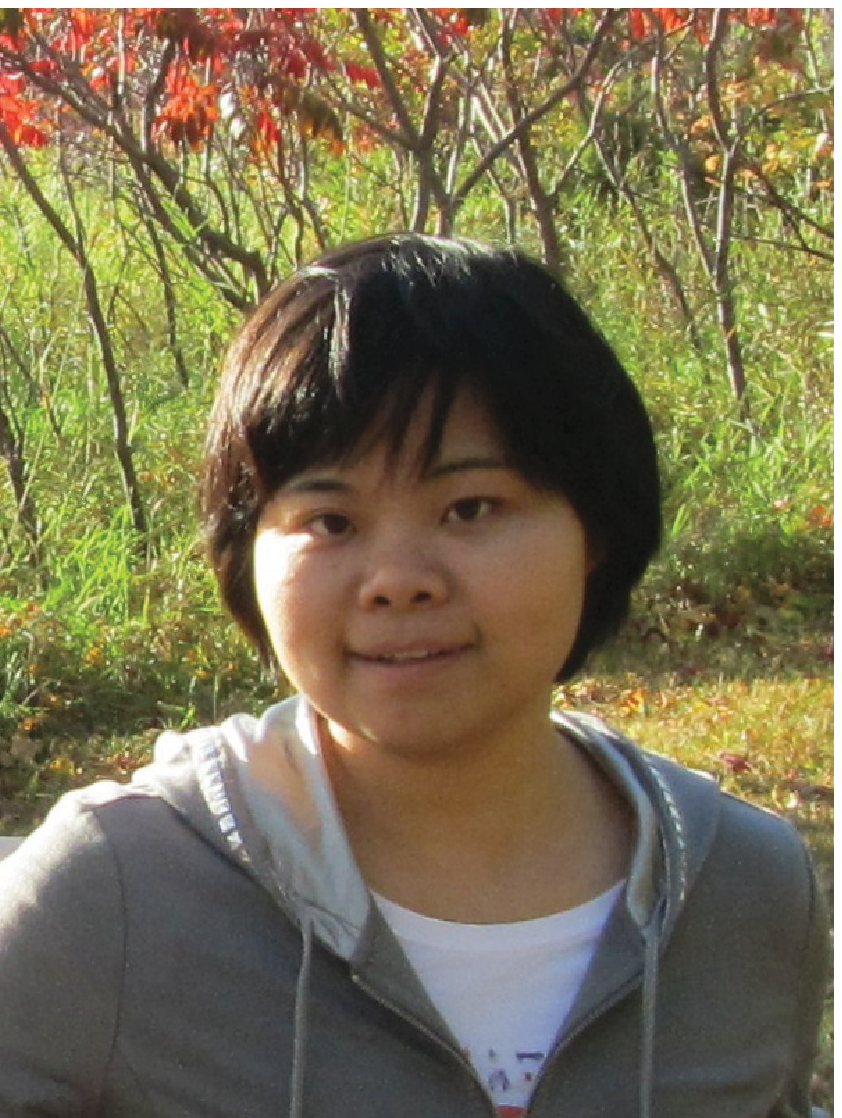}}]{Zi Xu} received the B.Sc. degree in applied
mathematics from the Hunan Normal University, Hunan, China, in 2003.
She then studied nonlinear optimization in the Institute of
Computational Mathematics and Scientific/Engineering Computing,
Chinese Academy of Sciences, and received the Ph.D. degree in
optimization in 2008. After her graduation, she has been with the
Department of Mathematics in Shanghai University, Shanghai, China,
and became an Associate Professor in 2012. Her research interests
include optimization algorithm, complexity analysis and various
optimization applications.

\end{IEEEbiography}

\begin{IEEEbiography}[{\includegraphics[width=1in,height=1.25in,keepaspectratio]
{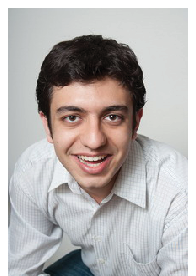}}] {Meisam Razaviyayn} received the B.Sc. degree
from Isfahan University of Technology, Isfahan, Iran, in 2008.
During summer 2010, he was working as a research intern at Huawei
Technologies. He is currently working towards the Ph.D. degree in
electrical engineering at the University of Minnesota. His research
interests include the design and analysis of efficient optimization
algorithms with application to data communication, signal
processing, and machine learning.
\end{IEEEbiography}

\begin{IEEEbiography}[{\includegraphics[width=1in,height=1.25in,clip,keepaspectratio]
{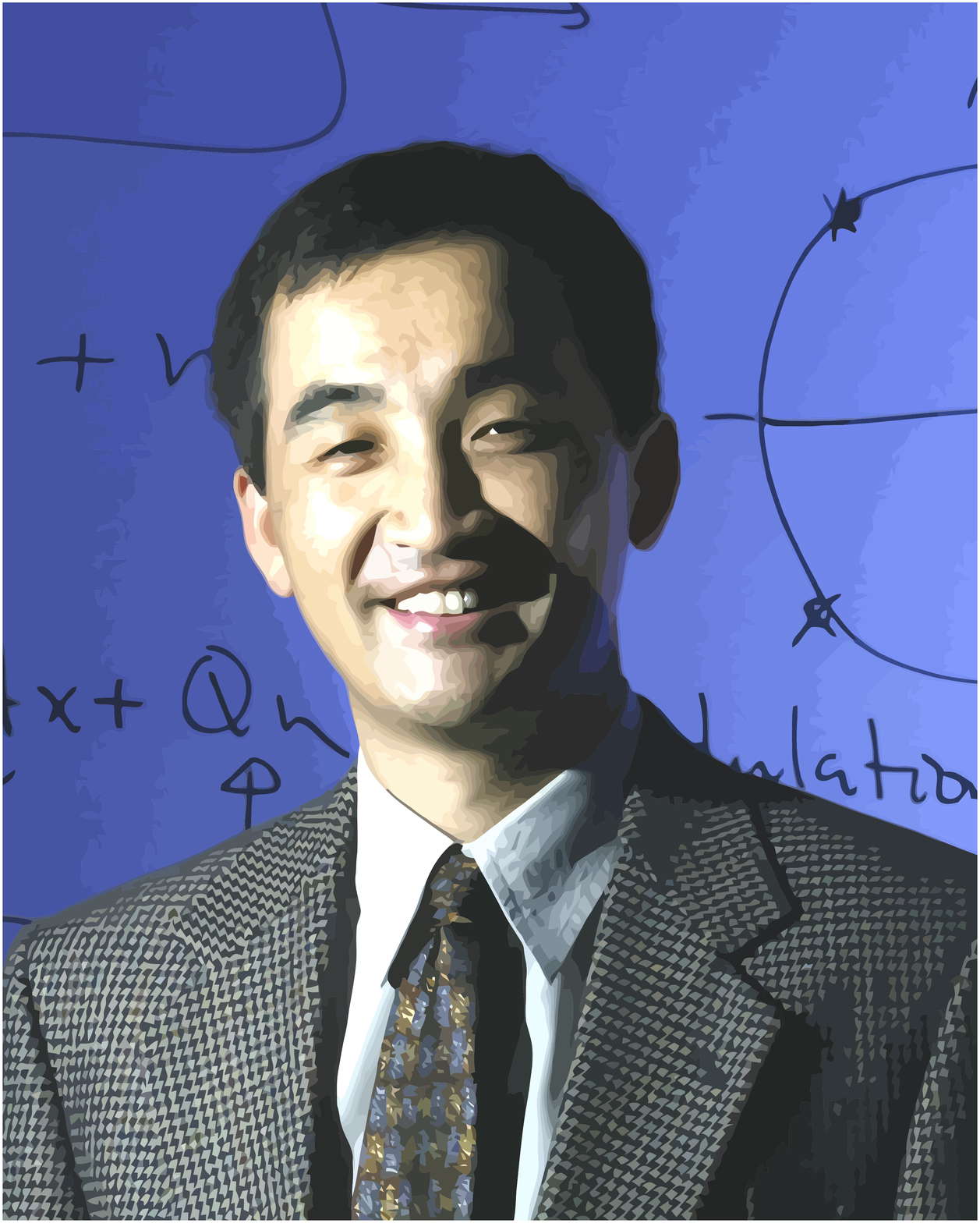}}]{Zhi-Quan Luo} received his B.Sc. degree in
Applied Mathematics in 1984 from Peking University, Beijing, China.
Subsequently, he was selected by a joint committee of the American
Mathematical Society and the Society of Industrial and Applied
Mathematics to pursue Ph.D study in the United States. After an
one-year intensive training in mathematics and English at the Nankai
Institute of Mathematics, Tianjin, China, he studied in the
Operations Research Center and the Department of Electrical
Engineering and Computer Science at MIT, where he received a Ph.D
degree in Operations Research in 1989. From 1989 to 2003, Dr. Luo
held a faculty position with the Department of Electrical and
Computer Engineering, McMaster University, Hamilton, Canada, where
he eventually became the department head and held a Canada Research
Chair in Information Processing. Since April of 2003, he has been
with the Department of Electrical and Computer Engineering at the
University of Minnesota (Twin Cities) as a full professor and holds
an endowed ADC Chair in digital technology. His research interests
include optimization algorithms, signal processing and digital
communication.

Dr. Luo is a fellow of IEEE and SIAM, and serves as the chair of the
IEEE Signal Processing Society Technical Committee on the Signal
Processing for Communications (SPCOM). He is a recipient of the 2004
and 2009 IEEE Signal Processing Society¡¯s Best Paper Awards, the
2010 Farkas Prize from the INFRMS Optimization Society, the 2011
EURASIP Best Paper Award and the 2011 ICC Best Paper Award. He has
held editorial positions for several international journals
including Journal of Optimization Theory and Applications, SIAM
Journal on Optimization, Mathematics of Computation, and IEEE
Transactions on Signal Processing. He currently serves as the
Editor-in-Chief for the journal IEEE Transactions on Signal
Processing.

\end{IEEEbiography}

\end{document}